
\documentclass[preprint,12pt]{elsarticle}




\usepackage{amssymb,times,amsmath,xspace,dsfont,pgf,latexsym,natbib,cmll}
\usepackage[]{graphicx}
\usepackage[all]{xy}

\usepackage{color}
\usepackage{lineno}


\newtheorem{definition}{Definition}[section]
\newtheorem{example}{Example}[section]
\newtheorem{remark}{Remark}[section]
\newtheorem{lemma}{Lemma}[section]
\newtheorem{corollary}{Corollary}[section]
\newtheorem{proposition}{Proposition}[section]
\newtheorem{theorem}{Theorem}[section]

\newenvironment{proof}{\noindent\textbf{Proof. }}{\hfill \small $\Box$}

\begin{document}
	
	\begin{frontmatter}
		
		
		
		\title{Lattice-valued Overlap and Quasi-Overlap Functions}
		
		\author[label1]{Rui Paiva}
		\author[label2]{Eduardo Palmeira}
		\author[label3]{Regivan Santiago}
		\author[label3]{Benjam\'in Bedregal}
		
		\address[label1]{Instituto Federal de Educa\c{c}\~ao, Ci\^{e}ncia e Tecnologia do Cear\'a\\
			Canind\'e, Brazil\\
			Email: rui.brasileiro@ifce.edu.br}
		
		\address[label2]{Departamento de Ci\^{e}ncias Exatas e Tecnol\'{o}gicas \\
			Universidade Estadual de Santa Cruz\\
			Ilh\'eus, Brazil  \\
			Email: espalmeira@uesc.br}
		
		\address[label3]{Departamento de Inform\'atica e Matem\'atica Aplicada  \\
			Universidade Federal do Rio Grande do Norte \\
			Natal, Brazil\\
			Email: regivan@dimap.ufrn.br, bedregal@dimap.ufrn.br}
		
		
		\begin{abstract}
	Overlap functions were introduced as class of bivariate aggregation functions on $[0,1]$ to be applied in image processing. This paper has as main objective to present appropriates definitions of overlap functions considering the scope of lattices and introduced a  more  general  definition, called of quasi-overlaps, which arise  of  abolishes the continuity condition. In addition, are investigated the main properties of (quasi-)overlaps on bounded lattices, namely, convex sum, migrativity, homogeneity, idempotency and cancellation law. Moreover, we make a characterization of Archimedian overlaps.
			
		\end{abstract}

		\begin{keyword} Overlap function \sep Scott continuity \sep Quasi-overlaps \sep Lattices \sep
		Homogeneity \sep Migrative \sep Archimedian 
		\end{keyword}
		
	\end{frontmatter}
	
	

          \section{Introduction}

The problem of finding an adjusted way to make a fuzzy partition of a dataset in order to lessen the inaccuracies (overlaping) caused in the decision process regarding the equivalence class a particular data must belong to has been widely studied by researchers through different techniques \cite{YIN17,JIMENEZ,GUO18,ZHANG2018,CHOY19}.

 For instance, in the problem of object recognition what is the best way to avoid overlapping when one wish to classify what is background and what is the object in an image. In this framework, Bustince et al. in \cite{BUSTINCE} introduced the concept of overlap funtion as a possible solution of that problem. According to the authors, those functions provide a mathematical model for this kind of issues where the overlaping degree between functions can be interpreted as the representation of the lack of knowledge between them.

Later, other researchers began to develop deeper studies of overlaps functions and their properties in order to explore their potentialities in different scenarios \cite{BEDREGAL20171,DIMURO2,DIMURO3,GOMEZ201657,QIAO2018,QIAO20181,QIAO20182,DEMIGUEL2018,WANG2018}. From theoretical point of view those papers discuss about some properties of overlap functions and its generalization for intervals and n-dimensional spaces. As an interesting application Bedregal et al. in \cite{BEDREGAL20171} have presented an study about interval image processing by means OWA operators  with interval weights derived from interval-valued overlap functions. 

Recently the lattice theory has increasingly been shown to be a framework for the development of techniques and applications aimed mainly at image processing. Ronse in \cite{RONSE} affirms that for a bounded set of grey-levels, the problem of grey-level overflow can be dealt with correctly only by taking into account the complete lattice structure of the set of grey-level images. Otherwise the properties of morphological operators are lost.

In this paper we propose an extend the concept of overlap  to the lattice context besides studying its main properties according to the main results discussed in the literature. In addition, since the motivation of the continuity of overlaps given in the seminal paper  \cite{BUSTINCE} is not well founded and the role of continuity is quastionable when we consider general and abstract environment as lattice theory, we also introduce a more general notion, namely quasi overlaps on bounded lattices where this condition is abolish.

Section 2 gives a clear review on overlap functions, lattice theory and fuzzy logic operators. Sections 3 and 4 discuss about lattice-valued overlap and quasi overlap functions and its main related properties respectively. Section 5 presents a characterization of Archimedean $L$-overlaps. Finally, in Section 6 some final remarks are considered.

	\section{Preliminaries}

	\noindent
	In this section brings a clear formalization of  key concepts concerning  overlap functions, lattice, homomorphism, retractions and others which are the background of our research. For further reading about them we recommend \cite{BUSTINCE, Birkhoff,COOMAN,priestley, klement,BEDREGAL,PALMEIRA,DIMURO}.
	
	
	\subsection{Overlap functions on $[0,1]$}
	
	
	\noindent
	The notion of overlap function was first introduced by Bustince et al. \cite{BUSTINCE} in order to give a proper characterization of  overlapping in the scenario classification of not crip partition of data. The issue bebind the object recognition problem is find its best classification with respect to background considering the one with less overlapping between the class object and the class background. After that some new theoretical and applied developments have been emerged in the literature regarding these operators  \cite{BEDREGAL20171,DIMURO2,DIMURO3,GOMEZ201657,QIAO2018,QIAO20182,BEDREGAL,Lizasoain,BUSTINCE2}.

	\begin{definition} \cite{BUSTINCE} \label{overlap}
		A mapping $O: [0,1]^2 \rightarrow [0,1]$ is called an overlap function if it satisfies the following conditions:
		\begin{description}
			\item [(O1)] $O(x,y)=O(y,x)$;
			\item [(O2)] $O(x,y)=0$ if and only if $xy=0$;
			\item [(O3)] $O(x,y)=1$ if and only if $xy=1$;
			\item [(O4)] $O$ is non-decreasing;
			\item [(O5)] $O$ is continuous.
		\end{description}
	\end{definition}

	\begin{example}\cite{BEDREGAL} \label{ex1}
		The mapping given by $O_{\min}(x,y)=min\{x,y\}$, $O_P(x,y)=xy$ and $O_{\min \max}(x,y) = \min (x,y) \max (x^2,y^2)$ , for all $x, y \in [0,1]$, are examples of overlaps functions. Moreover, if $O: [0,1]^2 \rightarrow [0,1]$ is an overlap function then 
		$O^2(x,y)=O(x^2,y^2)$ and $O^{\sqrt{}}(x,y)=O(\sqrt{x},\sqrt{y})$ are also overlap functions.
	\end{example}
	
\begin{remark}	
	Given two different overlap functions $O_1$ and $O_2$ then it is possible to obtain  some other interesting examples of overlaps are as follows:
	\begin{enumerate}
		\item $(O_1 \wedge O_2)(x,y) = \min (O_1(x,y),O_2(x,y))$;
		\item $(O_1 \vee O_2)(x,y) = \max (O_1(x,y),O_2(x,y))$;
		\item $O(x,y) = w O_1(x,y) + (1-w) O_2(x,y)$, for each $w \in [0,1]$.
	\end{enumerate}
\end{remark}
	
	\begin{definition}
		Consider $\alpha \in [0,1]$. A bivariate operation $O:[0,1]^2 \rightarrow [0,1]$ is called an $\alpha$-migrative if  
		\begin{equation}\label{eq1}
		O(\alpha x, y) = O(x, \alpha y), \, \text{for all} \, \ x, y \in [0,1].
		\end{equation}
		In case $O$ is $\alpha$-migrative for all $\alpha \in [0,1]$ then $O$ is simply called migrative.
	\end{definition}
	 \begin{example}
Overlap function $O_{\min}$ and $O_P$ (as defined in Example \ref{ex1}) are example of migrative overlaps  (for every $\alpha \in [0,1]$). 
\end{example}
	
%
%

Also, recall that a function $A:[0,1]^n \rightarrow [0,1]$ is called an (n-ary) aggregation function if it is nondecreasing and satisfies the boundary conditions $A(0,\ldots,0)=0$ and $A(1,\ldots,1)=1$. 
	Some other properties related to aggregation functions are listed below:
	\begin{itemize}
		\item [(A1)] An element $a \in [0,1]$ is called  an annihilator of $A$  if $A(x_1, \ldots, x_n)=a$ whenever $a \in \{x_1, \ldots, x_n\}$;
		\item [(A2)] $A$ is said to be strictly increasing if it is strictly;
		\item [(A3)] $A$ is said to have divisors of zero if there exist $x_1, \ldots, x_n \in ]0,1]$ such that $A(x_1, \ldots, x_n)=0$;
		\item  [(A4)] $A$ is said to be idempotent if $M(x, \ldots, x)=x$ for any $x \in [0,1]$.
	\end{itemize}
Notice that there is a way to generalize this property by considering a binary aggregation function $A$ and rewriting the Equation (\ref{eq1}) as follows:
	\begin{equation}\label{eq2}
	O(A(\alpha,x), y) = O(x, A(\alpha,y)), \, \text{for all} \, \ x, y \in [0,1].
	\end{equation}
	as proved by Bustince et al. in \cite{BUSTINCE2}. In this case, we say that $O$ is ($\alpha$,A)-migrative and just A-migrative if $O$ is ($\alpha$,A)-migrative for all $\alpha \in [0,1]$.


\subsection{Bounded lattices: definition and related concepts}


	\noindent
	
	\begin{definition} \label{def_lat} \cite{Birkhoff}
		Let $L$ be a nonempty set. If $\wedge_L$ and $\vee_L$ are two binary operations on $L$, then $\langle L,\wedge_L, \vee_L \rangle$ is called a lattice provided that for each $x, y, z \in L$, the following properties hold:
		\begin{enumerate}
			\item $x \wedge_L y = y \wedge_L x$ and $x \vee_L y = y \vee_L x$ (symmetry);
			\item $(x \wedge_L y) \wedge_L z = x \wedge_L (y \wedge_L z)$ and $(x \wedge_L y)
			\vee_L z = x \vee_L (y \wedge_L z)$ (associativity);
			\item $x \wedge_L (x \vee_L y) = x$
			and $x \vee_L (x \wedge_L y) = x$ (distributivity).
		\end{enumerate}
If in $\langle L,\wedge_L, \vee_L \rangle$ there are elements $0_L$ and $1_L$ such that, for all $x \in L$, $x \wedge_L 1_L = x$ and $x \vee_L 0_L = x$, then $\langle L,\wedge_L, \vee_L, 0_L, 1_L \rangle$ is called a bounded lattice. Also, $L$ is called a complete lattice if every subset of it has a supremum and an infimum element.
	\end{definition}

	Recall that given a lattice $L$ relation
	\begin{equation}
	x \leqslant_L y \ \ \text{if and only if} \, \, x \wedge_L y = x
	\end{equation}
	defines a partial order on $L$. This order will be used by us to compare
	elements.

	\begin{example} \label{ex_1}
		The set $[0,1]$ endowed with the operations defined by \linebreak $x \wedge y = \min \{x,y\}$ and $x \vee y = \max \{x,y\}$
		for all $x, y \in [0,1]$ is a (complete) bounded lattice in the sense of Definition \ref{def_lat} which has $0$ as
		the bottom and $1$ as the top element.
	\end{example}
	
	
\begin{remark}
		When $\leq_L$ is a partial order on $L$ and there are at least two elements $x$ and $y$ belonging to $L$ such that neither
		$x \leq_L y$ nor $y \leq_L x$. In this case, these elements are said to be incomparable and we denoted  by  $x\parallel y$. 
	\end{remark}
	
	
\begin{definition}
		Let $(L,\wedge_L, \vee_L,0_L,1_L)$ and $(M,\wedge_M, \vee_M,0_M,1_M)$ be bounded lattices. A mapping $f:L \rightarrow M$ is called a  lattice homomorphism if for all $x, y \in L$ we have
		\begin{enumerate}
			\item $f(x \wedge_L y) = f(x) \wedge_M f(y)$;
			\item $f(x \vee_L y) = f(x) \vee_M f(y)$;
		\end{enumerate}
		\label{alghom}
In case $f$ is such that $f(x)=0_M$ and $f(y)=1_M$ if and only if $x=0_L$ and $y=1_L$ it is called an $\{0,1\}$-homomorphism.
	\end{definition}
	
	\begin{remark}
		An injective (a surjective) lattice homomorphism  is called a mo\-no\-morphism (epimorphism) and a bijective lattice homomorphism is called an isomorphism. An automorphism is an isomorphism from a lattice onto itself. 
	\end{remark}
	
	\vspace{0.2cm}
	
	\begin{proposition} \cite{PALMEIRA2}
		Every lattice homomorphism preserves the order. \label{prop5}
	\end{proposition}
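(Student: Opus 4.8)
The plan is to use the order characterization recorded just before Definition \ref{def_lat}, namely that in any lattice $L$ one has $x \leqslant_L y$ if and only if $x \wedge_L y = x$, and then simply transport this identity through $f$ using the meet-preservation clause of Definition \ref{alghom}.

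Concretely, I would argue as follows. Let $f : L \to M$ be a lattice homomorphism and suppose $x, y \in L$ with $x \leqslant_L y$. By the order characterization this means $x \wedge_L y = x$. Applying $f$ to both sides and using property~(1) of Definition \ref{alghom}, we get $f(x) = f(x \wedge_L y) = f(x) \wedge_M f(y)$. Reading this last equation through the same order characterization, now in $M$, gives exactly $f(x) \leqslant_M f(y)$. Hence $f$ is order-preserving.

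I would note, as a remark rather than a separate case, that one could equally run the argument through the join clause: $x \leqslant_L y$ is also equivalent to $x \vee_L y = y$, and property~(2) yields $f(x) \vee_M f(y) = f(x \vee_L y) = f(y)$, again giving $f(x) \leqslant_M f(y)$. The only point that needs a word of care is that the two descriptions of $\leqslant$ (via $\wedge$ and via $\vee$) agree, which is immediate from the absorption laws; since the paper already adopts the $\wedge$-version as \emph{the} definition of $\leqslant_L$, no real obstacle arises and the proof is essentially a one-line computation. There is nothing delicate here — the statement holds for arbitrary (not necessarily bounded or complete) lattices, and boundedness plays no role.
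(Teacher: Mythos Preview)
Your argument is correct and is exactly the standard one-line proof via the $\wedge$-characterization of the order. Note that the paper does not actually supply a proof of this proposition --- it merely cites \cite{PALMEIRA2} --- so there is nothing to compare against; your proof is appropriate and could stand in for the omitted reference.
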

	
	\vspace{0.2cm}
	
	
	\begin{proposition} \label{pro-aut-char} \cite{PALMEIRA}
		Let $L$ be a bounded lattice. Then a function $\rho: L\rightarrow L$ is an automorphism if and only if
		\begin{enumerate}
			\item $\rho$ is bijective and
			\item $x\leqslant_L y$ if and only if  $\rho(x)\leqslant_L \rho(y)$.
		\end{enumerate}
	\end{proposition}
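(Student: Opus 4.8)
The plan is to prove the two implications separately, leaning on Proposition~\ref{prop5} for the ``only if'' direction and on the order-theoretic descriptions of $\wedge_L$ and $\vee_L$ as greatest lower bound and least upper bound for the ``if'' direction.

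For the forward implication, assume $\rho$ is an automorphism. Then $\rho$ is bijective by definition, so condition~1 holds, and since $\rho$ is a lattice homomorphism, Proposition~\ref{prop5} gives that $x\leqslant_L y$ implies $\rho(x)\leqslant_L\rho(y)$. For the converse half of condition~2, I would first check that $\rho^{-1}$ is again a lattice homomorphism: setting $a=\rho(x)$ and $b=\rho(y)$, the equality $\rho(x\wedge_L y)=\rho(x)\wedge_L\rho(y)$ rewrites as $\rho^{-1}(a\wedge_L b)=\rho^{-1}(a)\wedge_L\rho^{-1}(b)$, and dually for $\vee_L$. Applying Proposition~\ref{prop5} to $\rho^{-1}$ then yields that $\rho(x)\leqslant_L\rho(y)$ implies $x\leqslant_L y$, so condition~2 holds.

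For the backward implication, assume $\rho$ is bijective and satisfies the equivalence in condition~2; I must show $\rho(x\wedge_L y)=\rho(x)\wedge_L\rho(y)$ and $\rho(x\vee_L y)=\rho(x)\vee_L\rho(y)$ for all $x,y\in L$. I would treat the meet case, the join case being dual. From $x\wedge_L y\leqslant_L x$ and $x\wedge_L y\leqslant_L y$ together with condition~2, $\rho(x\wedge_L y)$ is a lower bound of $\{\rho(x),\rho(y)\}$, hence $\rho(x\wedge_L y)\leqslant_L\rho(x)\wedge_L\rho(y)$. For the reverse inequality I would invoke surjectivity: choose $w\in L$ with $\rho(w)=\rho(x)\wedge_L\rho(y)$. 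Then $\rho(w)\leqslant_L\rho(x)$ and $\rho(w)\leqslant_L\rho(y)$, so by condition~2 $w\leqslant_L x$ and $w\leqslant_L y$, whence $w\leqslant_L x\wedge_L y$ and therefore $\rho(w)\leqslant_L\rho(x\wedge_L y)$, i.e.\ $\rho(x)\wedge_L\rho(y)\leqslant_L\rho(x\wedge_L y)$. Antisymmetry of $\leqslant_L$ then gives equality, and an identical argument with suprema handles $\vee_L$.

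The manipulations are all routine; the one point that genuinely needs care is that condition~2 is a two-sided equivalence rather than mere monotonicity, since the reverse inequality in the meet/join computation requires pulling $\rho(x)\wedge_L\rho(y)$ back through $\rho$ — this is exactly where both surjectivity and the ``if'' half of condition~2 are used. I expect this to be the main (and only mild) obstacle; everything else is immediate from Definition~\ref{alghom}, the definition of the induced order, and antisymmetry.
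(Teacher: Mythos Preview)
Your argument is correct and is the standard proof of this fact. Note, however, that the paper does not actually provide its own proof of this proposition: it is stated with a citation to \cite{PALMEIRA} and left unproved, so there is no in-paper proof to compare against.
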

	
	\vspace{0.2cm}
	
		
	
\begin{remark}
	From now on, lattice homomorphisms will be called just homomorphisms for simplicity.
\end{remark}
	
	
	\begin{proposition}\label{automorfismo}
		Let $\rho: L \to L$ be a automorphism and $f:L^n \to L$ be a function. If $\rho\left(f(x_1, \ldots , x_n)\right)=f\left(\rho(x_1), \ldots ,\rho(x_n)\right)$ then 
\begin{equation} \label{auto_comut}
\rho^{-1}(f(x_1, \ldots ,x_n))=f\left(\rho^{-1}(x_1), \ldots , \rho^{-1}(x_n)\right).
\end{equation}
	\end{proposition}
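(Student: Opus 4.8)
The plan is to exploit the bijectivity of $\rho$ and reduce the desired identity to the hypothesis by a change of variables. First I would recall that, being an automorphism, $\rho$ is in particular a bijection of $L$ onto itself (this is condition~1 in Proposition \ref{pro-aut-char}), so $\rho^{-1}: L \to L$ is a well-defined function with $\rho \circ \rho^{-1} = \rho^{-1} \circ \rho = \mathrm{id}_L$.

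Next, I would fix arbitrary $x_1, \ldots, x_n \in L$ and set $y_i := \rho^{-1}(x_i)$ for $i = 1, \ldots, n$, so that $\rho(y_i) = x_i$. Applying the hypothesis $\rho(f(y_1, \ldots, y_n)) = f(\rho(y_1), \ldots, \rho(y_n))$ to the tuple $(y_1, \ldots, y_n)$ and using $\rho(y_i) = x_i$ gives $\rho(f(y_1, \ldots, y_n)) = f(x_1, \ldots, x_n)$. Now I would apply $\rho^{-1}$ to both sides and cancel on the left via $\rho^{-1} \circ \rho = \mathrm{id}_L$, obtaining $f(y_1, \ldots, y_n) = \rho^{-1}(f(x_1, \ldots, x_n))$. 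Substituting back $y_i = \rho^{-1}(x_i)$ yields precisely $f(\rho^{-1}(x_1), \ldots, \rho^{-1}(x_n)) = \rho^{-1}(f(x_1, \ldots, x_n))$, which is (\ref{auto_comut}); since the $x_i$ were arbitrary, this holds on all of $L^n$.

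There is essentially no real obstacle here, the argument being purely formal. The two points worth making explicit are that surjectivity of $\rho$ is what guarantees every argument tuple of $f$ can be written as $(\rho(y_1), \ldots, \rho(y_n))$, so that the hypothesis is applicable, and that injectivity of $\rho$ (equivalently, the existence of the two-sided inverse $\rho^{-1}$) is what licenses the cancellation step; both are furnished by $\rho$ being an automorphism. Note that no use is made of $f$ being an aggregation function, a homomorphism, or even order-preserving — the statement holds for an arbitrary map $f: L^n \to L$.
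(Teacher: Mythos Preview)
Your proof is correct and follows essentially the same approach as the paper: the paper writes $f(\rho^{-1}(x_1),\ldots,\rho^{-1}(x_n)) = \rho^{-1}\bigl(\rho\bigl(f(\rho^{-1}(x_1),\ldots,\rho^{-1}(x_n))\bigr)\bigr)$ and then applies the hypothesis inside the outer $\rho^{-1}$ to collapse $\rho\circ\rho^{-1}$, which is exactly your argument with the substitution $y_i=\rho^{-1}(x_i)$ carried out inline rather than named.
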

	\begin{proof}
		Since $f\left(\rho^{-1}(x), \ldots, \rho^{-1}(x_n)\right) = \rho ^{-1}\left(\rho \left(f\left(\rho^{-1}(x), \ldots  , \rho^{-1}(x_n)\right)\right) \right)$ it follows, by hypothesis that $$\rho ^{-1} \left(f\left(\rho\left(\rho^{-1}(x_1)\right), \ldots , \rho \left(\rho^{-1}(x_n)\right)\right) \right)= \rho^{-1} \left(f(x_1, \ldots ,x_n)\right)$$ 	\end{proof}
	\begin{definition}
		Given a function $f:L^n\rightarrow L$, the \textit{action of  an $L$-automorphism} $\rho$ over $f$ results in the function
		$f^\rho:L^n\rightarrow L$ defined as
		
		\begin{equation}
		\label{eq-act-aut}
		f^\rho(x_1,\ldots,x_n)=\rho^{-1}(f(\rho(x_1),\ldots,\rho(x_n)))
		\end{equation}
		In this case, $f^\rho$ is called  the \textit{conjugate} of $f$ (see \cite{Costa}).
	\end{definition}

	\subsection{Limit and Continuity} \label{cont}
	

In this section we discuss about continuity of lattice-valued function and its properties. For a  deeper reading we recommend \cite{SCOTT}.

First, notice that if  $J$ is an index set and $L$ is a lattice then a net in $L$ is defined as a function that associate each $i \in J$ to an element $x_i \in L$ i.e. $i \mapsto x_i: J \rightarrow L$ and  denoted by $(x_i)_{i \in J}$. The limit of a net on $L$ is defined as follows.

	\begin{definition} \label{def_lim1}  \cite{klement}
		Let $L$ be a complete lattice. For any net $(x_i)_{i \in J}$ we write 
		\begin{equation}
		\underline{\lim}_{i\in J} x_i= \sup_{i\in J} \inf_{j \geq i} x_j 
		\end{equation}
		and call $\underline{\lim}_{i\in J} x_i$ the lower limit. Similarly, 
		\begin{equation}
		\overline{\lim}_{i\in J} x_i= \inf_{i\in J} \sup_{j \geq i} x_j 
		\end{equation}
		is called the upper limit. 
		Let $S$ be the class of those elements $\underline{x} \in L$ such that $\underline{x} \leq_L \underline{\lim}_{i\in J} x_i$ and $T$ be the class of those elements $\overline{x} \in L$ such that $\overline{\lim}_{i\in J} x_i\leq \overline{x}$, both for the 
		net $(x_i)_{i \in J}$. For each such elements we say that $x_1$ is a lower $S$-limit and $x_2$ is a upper $T$-limit of $(x_i)_{i \in J}$. If $x_1 = \sup_{\underline{x} \in S} \underline{x}$ and $x_2 = \inf_{\overline{x} \in T} \overline{x}$, we write respectively  $\underline{x} \equiv_S \underline{\lim} x_i$ and $\overline{x} \equiv_T \overline{\lim} x_i$. 
		
	\end{definition}
	
	Alternatively, the notion of lower $S$-limit ($T$-limit) of a net $(x_i)_{i \in J}$ can be defined by means of directed (filtered) sets \cite{SCOTT}. 

\begin{definition}
Let be $X$  a poset\footnote{A poset is a nonempty set $X$ equipped with a partial order $\leq$.}. A subset $D \subseteq X$ is called a directed (filtered) set if every two-element subset of D has an upper (lower) bound in D. A poset is a directed complete partial order (DCPO) if every directed subset has a supremum, and it is a filtered complete partial order (FCPO) if every filtered subset has an infimum. 
\end{definition}
	
	\begin{definition}[\cite{SCOTT}] \label{def_lim2}
		A point $y \in L$ is an eventual lower bound of a net $(x_i)_{i \in J}$ if there exists a $k \in J$ such that $y \leq_L x_i$ for all $i \geq k$. Let $S$ be the class of those pairs  $((x_i)_{i \in J}, \underline{x})$ such that $\underline{x} \leq_L \sup D$ for some directed set $D$ of eventual lower bound of net $(x_i)_{i \in J}$. For each such pair we say that $\underline{x}$ is a lower $S$-limit of $(x_i)_{i \in J}$ and write $ \underline{x} \equiv_S \underline{\lim} x_i$.
	\end{definition}

 Dually, a point $y \in L$ is an eventual upper bound of a net $(x_i)_{i \in J}$ if there exists a $k \in J$ such that $x_i\leq_L y$ for all $i \geq k$. Let $T$ be the class of those pairs $((x_i)_{i \in J}, \overline{x})$ such that $\inf D \leq_L \overline{x}$ for some filtered set $D$ of eventual upper bound of net $(x_i)_{i \in J}$. For each such pair we say that $\overline{x}$ is a upper $T$-limit of $(x_i)_{i \in J}$ and write $ \overline{x} \equiv_T \overline{\lim} x_i$.
	
	Notice that Definition \ref{def_lim2} agrees with Definition \ref{def_lim1} (see \cite[p. 133]{gierz}) for complete lattices. 

	
	\begin{proposition} \cite[prop. II-2.1]{gierz} \label{lat_cont}
		Let $L$ and $M$ be DCPO's and $f:L \rightarrow M$  a function. The following conditions are equivalent:
		\begin{enumerate}
			\item $f$ preserves suprema of directed sets, i.e. $f$ is order preserving and 
			\begin{equation} \label{scott_cont}
			f(\sup \Delta) = \sup \{f(x) \ | \ x \in \Delta\}
			\end{equation}
			for all directed subset $\Delta$ of $L$;
			\item $f$ is order preserving and 
			\begin{equation}
			f(\underline{\lim}_{i\in J} x_i) \leq_L \underline{\lim}_{i\in J} f(x_i)
			\end{equation}
			for any net $(x_i)_{i \in J}$ on $L$ such that $\underline{\lim}_{i\in J} x_i$ and $\underline{\lim}_{i\in J} f(x_i)$ both exist.
		\end{enumerate}
	\end{proposition}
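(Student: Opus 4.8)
The plan is to prove the two implications separately, with the whole argument resting on one observation about nets: for a net $(x_i)_{i\in J}$ the family $y_i:=\inf_{j\geq i}x_j$ is itself \emph{directed}. Indeed, $J$ is directed (being the index set of a net), so given $i,i'\in J$ pick $k\in J$ with $k\geq i$ and $k\geq i'$; since $\{j:j\geq k\}\subseteq\{j:j\geq i\}$ and $\{j:j\geq k\}\subseteq\{j:j\geq i'\}$, taking infima gives $y_k\geq y_i$ and $y_k\geq y_{i'}$. Hence $\underline{\lim}_{i\in J}x_i=\sup_{i\in J}y_i$ is a \emph{directed} supremum, exactly the kind of supremum that condition~(1) controls.

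For $(1)\Rightarrow(2)$, assume $f$ preserves directed suprema; in particular $f$ is order preserving, as already recorded in~(1). Given a net with $\underline{\lim}_{i}x_i$ and $\underline{\lim}_{i}f(x_i)$ both existing, Scott continuity applied to the directed family $(y_i)_{i\in J}$ yields $f\bigl(\underline{\lim}_{i}x_i\bigr)=f\bigl(\sup_i y_i\bigr)=\sup_i f(y_i)$. For each fixed $i$, since $y_i\leq x_j$ for all $j\geq i$ and $f$ is order preserving, $f(y_i)$ is a lower bound of $\{f(x_j):j\geq i\}$, so $f(y_i)\leq \inf_{j\geq i}f(x_j)$. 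Taking the supremum over $i$ gives $f\bigl(\underline{\lim}_{i}x_i\bigr)\leq \sup_i\inf_{j\geq i}f(x_j)=\underline{\lim}_{i}f(x_i)$, which is precisely the claimed inequality.

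For $(2)\Rightarrow(1)$, let $\Delta\subseteq L$ be directed and view it as a net indexed by $J:=\Delta$ with the induced order, setting $x_d:=d$. Because $d$ lies in $\{e\in\Delta:e\geq d\}$ while every such $e$ satisfies $e\geq d$, we get $\inf_{e\geq d}x_e=d$ and hence $\underline{\lim}_{d}x_d=\sup_{d\in\Delta}d=\sup\Delta$. The same computation applied to the order-preserving map $f$ gives $\underline{\lim}_{d}f(x_d)=\sup\{f(x):x\in\Delta\}$, and both lower limits exist since $f(\Delta)$ is again directed (image of a directed set under an order-preserving map) so its supremum exists in the DCPO $M$. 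Condition~(2) now yields $f(\sup\Delta)\leq \sup\{f(x):x\in\Delta\}$; the reverse inequality is immediate from order preservation ($f(x)\leq f(\sup\Delta)$ for every $x\in\Delta$), so $f(\sup\Delta)=\sup\{f(x):x\in\Delta\}$, i.e.\ $f$ preserves directed suprema.

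The delicate points are matters of existence/well-definedness rather than algebra. In a bare DCPO the infima $\inf_{j\geq i}x_j$ need not exist, so the statement must be read under the proviso — which the proposition does make — that the relevant lower limits exist, and in the $(1)\Rightarrow(2)$ direction I use $f(y_i)$ merely as a lower bound of $\{f(x_j):j\geq i\}$ rather than assuming the inner infimum is attained by it. One must also keep track that ``$f$ order preserving'' is genuinely available: it is built into condition~(1) and assumed outright in condition~(2). Beyond these bookkeeping issues, the only real content is the directedness of $(y_i)_{i\in J}$ and the identity $\inf_{e\geq d}x_e=d$ for the canonical net of a directed set, and these make both directions go through essentially by inspection.
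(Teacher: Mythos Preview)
Your proof is correct and follows the standard argument. Note, however, that the paper does not supply its own proof of this proposition: it is quoted verbatim from \cite[Prop.~II-2.1]{gierz} and left without proof, so there is nothing in the paper to compare your argument against. Your approach --- recognising that the tails $y_i=\inf_{j\geq i}x_j$ form a directed family, and that a directed set is its own canonical net with $\underline{\lim}_d x_d=\sup\Delta$ --- is exactly the proof given in Gierz et al., so you have reconstructed the cited source faithfully.
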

	
	Similarly, the dual proposition can be demonstrated.
	
	\begin{proposition}\label{lat_cont dual}
		Let $L$ and $M$ be FCPO's and $f:L \rightarrow M$  a function. The following conditions are equivalent:
		\begin{enumerate}
			\item $f$ preserves infimum of filtered sets, i.e. $f$ is order preserving and 
			\begin{equation} \label{scott_cont dual}
			f(\inf \Delta) = \inf \{f(x) \ | \ x \in \Delta\}
			\end{equation}
			for all filtered subset $\Delta$ of $L$;
			\item $f$ is order preserving and 
			\begin{equation}
			f(\overline{\lim}_{i\in J} x_i) \geq_L \overline{\lim}_{i\in J} f(x_i)
			\end{equation}
			for any net $(x_i)_{i \in J}$ on $L$ such that $\overline{\lim}_{i\in J} x_i$ and $\overline{\lim}_{i\in J} f(x_i)$ both exist.
		\end{enumerate}
	\end{proposition}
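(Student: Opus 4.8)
The plan is to deduce this proposition from Proposition \ref{lat_cont} by order duality. Given a poset $(L,\leq_L)$, write $L^{\mathrm{op}}$ for the same underlying set equipped with the reversed order $\geq_L$. The following dictionary then applies: a subset of $L$ is filtered if and only if it is directed in $L^{\mathrm{op}}$; infima computed in $L$ are suprema computed in $L^{\mathrm{op}}$; $L$ is an FCPO if and only if $L^{\mathrm{op}}$ is a DCPO; a map $f\colon L\to M$ is order preserving if and only if $f\colon L^{\mathrm{op}}\to M^{\mathrm{op}}$ is order preserving; and for any net $(x_i)_{i\in J}$ the quantity $\overline{\lim}_{i\in J}x_i$ computed in $L$ coincides with $\underline{\lim}_{i\in J}x_i$ computed in $L^{\mathrm{op}}$, since $\inf_i\sup_{j\geq i}$ turns into $\sup_i\inf_{j\geq i}$ while the directed index set $J$ is left untouched. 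First I would record this dictionary explicitly, and then apply Proposition \ref{lat_cont} to the function $f\colon L^{\mathrm{op}}\to M^{\mathrm{op}}$: its condition~(1), translated back, is exactly condition~(1) of the present statement, and its condition~(2), with the inequality $\leq$ reversing to $\geq$, becomes condition~(2) here. The equivalence therefore transfers at once.

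For readers who prefer to avoid opposite posets, I would also include a direct argument mirroring \cite[prop. II-2.1]{gierz}. For $(1)\Rightarrow(2)$, given a net $(x_i)_{i\in J}$ for which $\overline{\lim}x_i$ and $\overline{\lim}f(x_i)$ exist, put $y_i=\sup_{j\geq i}x_j$; since $J$ is directed the family $\{y_i:i\in J\}$ is filtered, so by hypothesis $f(\overline{\lim}_i x_i)=f(\inf_i y_i)=\inf_i f(y_i)$, while order preservation gives $f(y_i)\geq_M\sup_{j\geq i}f(x_j)$ and hence $\inf_i f(y_i)\geq_M\inf_i\sup_{j\geq i}f(x_j)=\overline{\lim}_i f(x_i)$. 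For $(2)\Rightarrow(1)$, given a filtered $\Delta\subseteq L$, I would regard $\Delta$ as a directed index set under the reversed order and consider the net $x_i=i$; one checks $\sup_{j\geq i}x_j=i$, so that $\overline{\lim}x_i=\inf\Delta$, and similarly $\overline{\lim}f(x_i)=\inf f(\Delta)$, using that the image $f(\Delta)$ is filtered in the FCPO $M$ and hence has an infimum. Then (2) yields $f(\inf\Delta)\geq_M\inf f(\Delta)$, whereas order preservation of $f$ gives the reverse inequality $f(\inf\Delta)\leq_M\inf f(\Delta)$, so equality holds.

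The argument is essentially routine; the one place demanding care, and where the hypotheses are genuinely needed, is the bookkeeping of existence of suprema and infima in the not-necessarily-complete FCPO setting: one must verify that each $\sup_{j\geq i}f(x_j)$, each $y_i$, and each $\inf f(\Delta)$ actually exists before manipulating it, which is exactly what ``$\overline{\lim}x_i$ and $\overline{\lim}f(x_i)$ both exist'', ``$M$ is an FCPO'', and ``$f$ is order preserving'' supply. Since the duality route folds all of this into the already-granted Proposition \ref{lat_cont}, I would make it the main proof and present the direct computation only as a supporting remark.
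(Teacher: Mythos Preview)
Your proposal is correct and matches the paper's approach: the paper does not give a proof at all, stating only ``Similarly, the dual proposition can be demonstrated,'' which is precisely your duality argument from Proposition~\ref{lat_cont}. Your write-up in fact supplies considerably more detail than the paper, including the explicit $L^{\mathrm{op}}$ dictionary and a direct computation, both of which are sound.
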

	
	Notice that all complete lattice is a DCPO (FCPO) in which $\underline{\lim}_{i\in J} x_i$ and $\underline{\lim}_{i\in J} f(x_i)$ ($\overline{\lim}_{i\in J} x_i$ and $\overline{\lim}_{i\in J} f(x_i)$) always exist \cite{gierz}. Hence Propositions \ref{lat_cont} and \ref{lat_cont dual} hold for complete lattices. 
	
	Let $\langle L, \leq \rangle$ be a poset. Recall that $U\subseteq L$ is an \textit{upper set} if for every $x,y \in L$ if $x \in U$ and $x\leq y$, then $y\in U$. Also recall that $U$ is a \textit{down set} if for every $x,y \in L$ if $x \in U$ and $y\leq x$, then $y\in U$. A set $X\subseteq L$ is called Scott-open if it is an upper set and if all directed sets $D$ with supremum in $X$ have non-empty intersection with $X$. The Scott-open subsets of $L$ form a topology on $L$, the Scott topology.

There is a connection between convergence given in order theoretic terms by lower limits, or liminfs and Scott topology. In this perspective the Equations (\ref{scott_cont}) and (\ref{scott_cont dual}) generalize the notions of left and right continuity for the unit interval $[0,1]$. This fact motivate the following definition.

\begin{definition}\label{continuous}
		Let $L$ and $M$ be two complete lattices. A function $f:L \rightarrow M$ is called lattice left (right) continuous if and only if it satisfies Equation (\ref{scott_cont}) ( Equation (\ref{scott_cont dual})). In case $f$ is both a left and right continuous function, then $f$ is called a continuous function.
	\end{definition}
	
	\begin{remark} \label{rem-cont-fin-L} When $L$ is finite any function $f:L \rightarrow M$ is continuous because for each directed set $\Delta$ of $L$, $\sup \Delta\in \Delta$ and for each filtered set $\Delta$, $\inf \Delta\in \Delta$.
	      \end{remark}


\begin{definition}\label{limitelateral}
Let $f$ be a bivariate function on a complete lattice $L$ and $a,b \in L$. We will write $\lim\limits_{x \to a^+}f(x,x)=b$ if and only if for every net $(x_i)_{i\in J}$ in $L$ such that $a$ is an eventual lower bounded of $(x_i)_{i\in J}$ and $\lim_{i\in J} x_i=a$ implies $\lim_{i\in J} f(x_i,x_i)=b$. Moreover, for a special case of nets  $(a_n)_{n \in \mathbb{N}}$ in $L$ we define $\lim\limits_{n \to \infty} a_n =a$ if, and only if, exist $k\in \mathbb{N}$ such that $a_n \leq_L a$, for all $n \geq k$, where $a\in L$ is an eventual upper bound of a net $(a_n)_{n \in \mathbb{N}}$.
\end{definition}

	
	\subsection{T-norms and T-conorms on $L$}
	

	\noindent
	It presented here a short formalization for the notion of t-norms and t-conorms on bounded lattices as well as some particular results used in this work. For a deeper view on them we recommend \cite{PALMEIRA,PALMEIRA2}.
	
	\begin{definition}
		Let $L$ be a bounded lattice. A binary operation $T: L^2 \rightarrow L$ is called a $t$-norm if, for all $x, y, z \in L$, it satisfies:
		\begin{enumerate}
			\item $T(x,y)=T(y,x)$ (commutativity);
			\item $T(x,T(y,z)) = T(T(x,y),z)$ (associativity);
			\item If $x \leqslant_L y$ then $T(x,z) \leqslant_L T(y,z), \ \forall \ z \in
			L$ (monotonicity);
			\item $T(x,1_L)= x$ (boundary condition).
		\end{enumerate}\label{tn}
	\end{definition}
	
	\begin{example}
		Let $L$ be a bounded lattice. Thus,  the function $T:L^2 \rightarrow L$ defined by $T(x,y)=x \wedge_L y$ is a t-norm that generalize the classical fuzzy t-norm of minimum, i.e. $T_M:[0,1]^2 \rightarrow [0,1]$ such that $T_M(x,y)= \min \{x,y\}$ for all $x,y \in [0,1]$.
	\end{example}

\begin{definition}
A t-norm $T$ on a lattice $L$ is called \begin{enumerate}
	\item [(i)] $\wedge$-distributive if \,  $T(x,y\wedge z)=T(x,y)\wedge T(x,z)$ for all $x,y, z\in L$
	\item [(ii)] $\vee$-distributive if \, $T(x,y\vee z)=T(x,y)\vee T(x,z)$ for all $x,y, z\in L$
\end{enumerate} If the items (i) and (ii) are both satisfied, then $T$ is called ($\wedge$ and $\vee$)-distributive. \end{definition}

The following definition provides a condition for an element $y$ of a lattice $L$ to belong to the image of the unary operation $T(x, \cdot ): L \to L$.

	\begin{definition}[\cite{SAMINGERPLATZ}]\label{divisible}
A lattice $L$ equipped with some t-norm $T:L^2 \rightarrow L$ is called \textit{divisible} if for all $x,y \in L$ with $y\leq_L x$ there exists some $z\in L$ such that $y=T(x,z)$.
	\end{definition}

	Dually, it is possible to define the concept of t-conorms.
	
	\begin{definition}\label{tcn}
		Let $L$ be a bounded lattice. A binary operation $S: L^2 \rightarrow L$ is called a t-conorm if, for all $x, y, z \in L$, we have:
		\begin{enumerate}
			\item $S(x,y)=S(y,x)$ (commutativity);
			\item $S(x,S(y,z)) = S(S(x,y),z)$ (associativity);
			\item If $x \leq y$ then $S(x,z) \leq S(y,z), \ \forall \ z \in
			L$ (monotonicity);
			\item $S(x,0_L)= x$ (boundary condition).
		\end{enumerate}
	\end{definition}
	
	Notice that $T(x,y) \leqslant_L x$ (or $T(x,y) \leqslant_L y$) and $x \leqslant_L S(x,y)$ (or $y \leqslant_L S(x,y)$) for all $x, y \in L$. In fact, $T(x,y) \leqslant_L x \wedge y \leqslant_L x$ and $x \leqslant_L x \vee_L y \leqslant_L S(x,y)$.
	
	\begin{example}
		Given an arbitrary bounded lattice $L$, the function $S$ given by \linebreak $S(x,y)=x \vee_L y$ for all $x,y \in L$ is a t-conorm
		on $L$ that generalize the classical fuzzy t-conorm of maximum, i.e. $S_M(x,y)= \max \{x,y\}$ for all $x,y \in L$.
	\end{example}
	
	\begin{proposition}\cite[Corollary 2]{BBB13}
		Let $\rho$ be an automorphism on $L$. A t-conorm $S: L^2 \rightarrow L$ satisfies
		\begin{equation} \label{positive_tconorm}
		S(x,y) = 1_L \ \text{if and only if} \ x=1_L \ \text{or} \ y=1_L
		\end{equation}
		if and only if $S^{\rho}$ satisfies also it. A t-conorm satisfying (\ref{positive_tconorm}) is called positive.
	\end{proposition}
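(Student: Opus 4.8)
The plan is to isolate the one fact that does all the work: every automorphism $\rho$ of a bounded lattice fixes the top element, $\rho(1_L)=1_L$, and hence $\rho^{-1}(1_L)=1_L$ as well. This follows from Proposition \ref{pro-aut-char}: since $\rho$ is a bijection that both preserves and reflects $\leqslant_L$, for any $z\in L$ we may write $z=\rho(w)$ and deduce from $w\leqslant_L 1_L$ that $z=\rho(w)\leqslant_L\rho(1_L)$, so $\rho(1_L)$ is the greatest element, i.e. $\rho(1_L)=1_L$; applying the same argument to the automorphism $\rho^{-1}$ gives $\rho^{-1}(1_L)=1_L$. I also recall, from the remark following Definition \ref{tcn}, that $z\leqslant_L S(1_L,z)$ for every $z\in L$, whence $S(1_L,z)=1_L$ for any t-conorm $S$.

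For the forward implication, assume $S$ is positive and take $x,y\in L$ with $S^{\rho}(x,y)=1_L$, that is $\rho^{-1}(S(\rho(x),\rho(y)))=1_L$ by Equation (\ref{eq-act-aut}). Applying $\rho$ and using $\rho(1_L)=1_L$ gives $S(\rho(x),\rho(y))=1_L$; positivity of $S$ then yields $\rho(x)=1_L$ or $\rho(y)=1_L$, and applying $\rho^{-1}$ gives $x=1_L$ or $y=1_L$. For the converse direction of the equivalence defining positivity of $S^{\rho}$, if say $x=1_L$ then $S^{\rho}(1_L,y)=\rho^{-1}(S(\rho(1_L),\rho(y)))=\rho^{-1}(S(1_L,\rho(y)))=\rho^{-1}(1_L)=1_L$, and symmetrically for $y=1_L$. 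Thus $S^{\rho}$ is positive.

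For the backward implication I would avoid repeating the computation by noting that $\rho^{-1}$ is again an automorphism and that $(S^{\rho})^{\rho^{-1}}=S$, which is a one-line check from Equation (\ref{eq-act-aut}) (compute $(S^{\rho})^{\rho^{-1}}(x,y)=\rho(S^{\rho}(\rho^{-1}(x),\rho^{-1}(y)))=\rho(\rho^{-1}(S(x,y)))=S(x,y)$). Hence, if $S^{\rho}$ is positive, applying the already-proved forward direction to the t-conorm $S^{\rho}$ together with the automorphism $\rho^{-1}$ shows that $(S^{\rho})^{\rho^{-1}}=S$ is positive.

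There is essentially no hard step here; the only point requiring a moment's care is the justification that $\rho$ and $\rho^{-1}$ fix $1_L$, after which both implications are immediate, and the argument is genuinely symmetric in $S$ and $S^{\rho}$ via the identity $(S^{\rho})^{\rho^{-1}}=S$.
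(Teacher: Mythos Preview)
Your argument is correct. Note, however, that the paper does not supply its own proof of this proposition: it is stated with a citation to \cite[Corollary 2]{BBB13} and left unproved in the text, so there is nothing to compare against here. Your proof is a clean, self-contained verification; the only ingredients are that automorphisms fix $1_L$ (which you justify via Proposition \ref{pro-aut-char}) and the identity $(S^{\rho})^{\rho^{-1}}=S$, and both are handled correctly.
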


%
%
	
	Similarly, it can be proved the following.

	\begin{proposition}
		Let $\rho$ be an automorphism on $L$. A t-norm $T: L^2 \rightarrow L$ satisfies
		\begin{equation} \label{positive_tnorm}
		T(x,y) = 0_L \ \text{if and only if} \ x=0_L \ \text{or} \ y=0_L
		\end{equation}
		if and only if $T^{\rho}$ satisfies also it. A t-norm satisfying (\ref{positive_tnorm}) is called positive.
	\end{proposition}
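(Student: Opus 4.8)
The plan is to imitate the argument for the t-conorm case (\cite[Corollary 2]{BBB13}), exploiting that an automorphism, being order preserving and bijective (Proposition \ref{pro-aut-char}), must fix the bounds: $\rho(0_L)=0_L$ and $\rho(1_L)=1_L$, and likewise $\rho^{-1}(0_L)=0_L$. First I would record the reduction that makes the ``if and only if'' symmetric: $\rho^{-1}$ is itself an automorphism of $L$, and a direct computation using \eqref{eq-act-aut} gives $(T^{\rho})^{\rho^{-1}}(x,y)=\rho\big(T^{\rho}(\rho^{-1}(x),\rho^{-1}(y))\big)=\rho\big(\rho^{-1}(T(\rho(\rho^{-1}(x)),\rho(\rho^{-1}(y))))\big)=T(x,y)$, so $(T^{\rho})^{\rho^{-1}}=T$. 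Hence it suffices to prove a single implication, say: if $T$ is positive, then $T^{\rho}$ is positive; the converse then follows by applying this implication to the t-norm $T^{\rho}$ and the automorphism $\rho^{-1}$.

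For the forward implication I would argue by a chain of equivalences. Assume $T$ satisfies \eqref{positive_tnorm}. Then for $x,y\in L$,
\[
T^{\rho}(x,y)=0_L \iff \rho^{-1}\big(T(\rho(x),\rho(y))\big)=0_L \iff T(\rho(x),\rho(y))=\rho(0_L)=0_L,
\]
where the second equivalence uses that $\rho$ is a bijection fixing $0_L$. By positivity of $T$ this last condition is equivalent to $\rho(x)=0_L$ or $\rho(y)=0_L$, which (again since $\rho$ is a bijection with $\rho(0_L)=0_L$) is equivalent to $x=0_L$ or $y=0_L$. This establishes that $T^{\rho}$ satisfies \eqref{positive_tnorm}.

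I do not anticipate a genuine obstacle here; the only points that need a word of care are: (i) that $T^{\rho}$ is again a t-norm on $L$ (so that calling it ``positive'' is meaningful), which is the standard fact that conjugation by an automorphism preserves commutativity, associativity, monotonicity via Proposition \ref{prop5}, and the boundary condition via $\rho(1_L)=1_L$; and (ii) the bookkeeping identity $(T^{\rho})^{\rho^{-1}}=T$, which is where Proposition \ref{automorfismo} / Equation \eqref{auto_comut} is implicitly used to move $\rho$ past $\rho^{-1}$. With those in hand the proof is just the displayed chain of iff's together with the symmetry reduction.
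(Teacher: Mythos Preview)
Your proposal is correct and matches the paper's intended approach: the paper does not give an explicit proof for this proposition, merely stating ``Similarly, it can be proved the following'' in reference to the t-conorm version cited from \cite[Corollary~2]{BBB13}. Your argument---fixing the bounds via Proposition~\ref{pro-aut-char}, running the chain of equivalences through the bijectivity of $\rho$, and reducing the converse to the forward direction via $(T^{\rho})^{\rho^{-1}}=T$---is exactly the dualization the paper has in mind, so there is nothing to add.
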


	\section{Overlaps and quasi-overlaps on bounded lattices}


Overlap functions were designed from the attempt to solve the problem of
imprecision in the image classification process as explains Bustince et al. \cite{BUSTINCE}. Authors further state that overlap functions were first defined for $[0,1]$ but that other domain could be naturally considered. Thus, based on this assumption and considering that lattice theory has been extensively explored to deal with problems with aging images, we present in this section the notion of lattice-valued overlap functions.
	
	\begin{definition}\label{overlaps}
		Let $L$ be a bounded lattice. A function $O:L^2 \rightarrow L$ is called a L-overlap  function (simply overlap, if the context is clear) if all of following properties hold:
		\begin{description}
			\item [(OL1)] $O(x,y)=O(y,x)$ for all $x, y \in L$;
			\item [(OL2)] $O(x,y)=0_L$ if and only if $x=0_L$ or $y=0_L$;
			\item [(OL3)] $O(x,y)=1_L$ if and only if $x=y=1_L$;
			\item [(OL4)] $O$ is non-decreasing;
			\item [(OL5)] $O$ is continuous.
		\end{description}
	\end{definition}

\begin{remark}
	Here we are considering the notion of continuity as defined in Section \ref{cont} on lattices but it could be any other one notion that works for lattice framework.
\end{remark}

In some contexts, continuity is not an indispensable property especially when we consider finite lattices as in some cases of digital image processing applications. 

Also, Bustince et al. in \cite{BUSTINCE} justify the continuity of an overlap function  $O$ on $[0,1]$ by saying that requirement is considered in order to avoid $O$ be a uninorm. However it is easy to see that if a uninorm $U$ is an overlap function then $U$ is necessarily a t-norm.

So, differently from what was proposed by Bustince et al. \cite{BUSTINCE} and in the Definition \ref{overlaps}, these reasons lead us to weakening the notion of $L$-overlaps hidding the requirement of $L$-overlaps be continuous.

\begin{definition}\label{quasi_overlaps}
	Let $L$ be a bounded lattice and $O:L^2 \rightarrow L$ be a function. If $O$ satisfies properties (OL1)-(OL4) it is called a quasi-overlap function on $L$, or just quasi-overlap, if the context is clear.
\end{definition}
\begin{remark}
	Obviously all $L$-overlap function is a quasi-overlap function on $L$. When $L$ is finite, by Remark \ref{rem-cont-fin-L},  any quasi-overlap is an overlap.
\end{remark}

\begin{example}
It is easy to see that for any bounded lattice $L$, $O_{\wedge}(x,y)=x\wedge_L y$ is an $L$-overlap function and for each $a\in L/\{0_L,1_L\}$ the function 
\[O_a(x,y)=\left\{\begin{array}{ll}
                   0_L  & \mbox{, if $x=0_L$ or $y=0_L$} \\
                   1_L  & \mbox{, if $x=y=1_L$} \\
                   a    & \mbox{, otherwise}
                  \end{array}
                  \right. \]
                  is a quasi-overlap which  is not an overlap case there is a directed set $\Delta$ such that $\sup \Delta=1_L$ and $1_L\not\in \Delta$ or, equivalently, there is a filtered set $\Delta$ such that $\inf \Delta=0_L$ and $0_L\not\in \Delta$.

\end{example}

%
The notion of divisible t-norms on a lattice presented in Definition \ref{divisible} can be reformulated for the case of (quasi-)overlap functions.

\begin{definition}
A lattice $L$ equipped with some quasi-overlap $O:L^2 \to L$ is called \textit{divisible} if for all $x,y \in L$ with $y\leq_L x$ there exists some $z\in L$ such that $y=O(x,z)$.
\end{definition}	
	
In which follows we present some results regarding to the properties of $L$-overlap functions.

	\begin{proposition}\label{prop3.1}
		Let $L$ be a bounded lattice and  $O$  be a quasi-overlap divisible on $L$. Then
\begin{enumerate}
\item $O$ is associative, i.e. it satifies 
\begin{equation}
O(x,O(y,z)) = O(O(x,y),z), \ \forall  x, y, z \in L
\end{equation} if and only if $O$ is a  positive t-norm;
\item  If  $e \in L$ is a neutral element of $O$ i.e. 
\begin{equation}
O(x,e) = O(e,x) = x, \  \forall x \in L
\end{equation} then  $e=1_L$.
\end{enumerate}
	\end{proposition}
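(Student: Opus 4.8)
The plan is to prove each item separately, exploiting divisibility together with the boundary-type properties (OL2), (OL3) and the monotonicity (OL4). For item (2), I would first show that a neutral element $e$ must be $1_L$. Suppose $e$ is neutral for $O$. Taking $x=1_L$ in $O(x,e)=x$ gives $O(1_L,e)=1_L$, so by (OL3) we get $1_L=e$ (since $O(1_L,e)=1_L$ forces $1_L=e$ from the ``only if'' direction of (OL3)). This is short and uses nothing but (OL3); I would present it first because it is also a stepping stone for the harder direction of item (1).

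For item (1), the ``if'' direction is immediate: a positive t-norm is commutative, associative, non-decreasing, satisfies $T(x,1_L)=x$ hence is a quasi-overlap (positivity gives (OL2), and $T(x,y)=1_L$ forces $x=y=1_L$ by monotonicity and the boundary condition, giving (OL3)), and it is associative by definition. The substantive direction is ``only if'': assume $O$ is an associative, divisible quasi-overlap and deduce it is a positive t-norm. Commutativity is (OL1), monotonicity is (OL4), and positivity is exactly (OL2); associativity is assumed. So the crux is the boundary condition $O(x,1_L)=x$ for all $x\in L$. Here is where divisibility enters: since $x\leq_L 1_L$, there is $z\in L$ with $x=O(1_L,z)$. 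Then using associativity and commutativity, $O(x,1_L)=O(O(1_L,z),1_L)=O(O(1_L,1_L),z)=O(1_L,z)=x$, where $O(1_L,1_L)=1_L$ comes from (OL3). This establishes the boundary condition and completes the argument.

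The main obstacle I anticipate is precisely pinning down the boundary condition $O(x,1_L)=x$ from divisibility and associativity — one must be careful that the ``witness'' $z$ for $x=O(1_L,z)$ can be manipulated cleanly, and the computation above shows $O(1_L,\cdot)$ is idempotent on its own image, which suffices. A secondary subtlety is verifying that a positive t-norm genuinely satisfies (OL3): one checks that if $T(x,y)=1_L$ then $x=T(x,y)\vee x = \cdots$; more directly, from $T(x,y)\leq_L x\wedge_L y$ (noted in the preliminaries for t-norms) one gets $1_L\leq_L x$ and $1_L\leq_L y$, hence $x=y=1_L$, and conversely $T(1_L,1_L)=1_L$ by the boundary condition. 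With these pieces the equivalence in item (1) and the assertion in item (2) follow.
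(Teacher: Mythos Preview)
Your proposal is correct and follows essentially the same route as the paper: for item (2) you apply (OL3) to $O(1_L,e)=1_L$, and for the substantive direction of item (1) you use divisibility to write $x=O(1_L,z)$ and then associativity together with $O(1_L,1_L)=1_L$ to obtain $O(x,1_L)=x$, exactly as the paper does. The only cosmetic difference is that you rearrange the associativity step as $O(O(1_L,z),1_L)=O(O(1_L,1_L),z)$ (using commutativity as well), whereas the paper writes $O(O(y,1_L),1_L)=O(y,O(1_L,1_L))$ directly; both are equivalent.
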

	
	\begin{proof}
\begin{enumerate}
\item $(\Rightarrow)$ Suppose $O$ is an associative quasi-overlap function on $L$.  Since $O$  satisfies commutativity and is non-decreasing, for $O$ be a  positive t-norm, we have only to prove that $1_L$ is neutral element of $O$. Since $O$ is divisible on $L$ and $x \leq_L 1_L$ for all $x \in L$, there exists a $y \in L$ such that $O(y,1_L) = x$. On the other hand, since $O(0_L,1_L)=0_L$ and $O(1_L,1_L)=1_L$, by (OL2) and (OL3), respectively. Then by associativity of $O$ we have: $$O(x,1_L)= O(O(y,1_L),1_L) =O(y,O(1_L,1_L))=O(y,1_L)=x.$$ Similarly it is proved that $O(1,x)=x$. Thus $1_L$ is a neutral element of $O$. 

$(\Leftarrow)$ Conversely, assuming $O$ is a  positive t-norm, we have 
$1_L = O(x,y) \leq_L \min (x,y)$, what gives us $x=y=1_L$. Similary we have $O(1_L,1_L)=1_L$, since $1_L$ is the neutral element of $O$.   Moreover, if $x=0_L$ or $y=0_L$, then of corse that $O(x,y)=0_L$. On the other hand, if it were $0_L <_L x,y$ then since $O$ is a positive t-norm we would have $O(x,y)>_L 0_L$. Therefore, $O(x,y)=0_L \Leftrightarrow x= 0_L \,\textrm{or}\, y=0_L$.
\item  Indeed, if $O(x,e)=O(e,x)=x$ for all $x \in L$ then $O(1_L,e) = 1_L$ and hence by (OL3) $e=1_L$. \end{enumerate} 	
\end{proof}

	
	
%


\subsection{Generalized convex sum of quasi-overlap and overlap functions}


In general the algebraic structure of a lattice does not provide a sum and product operations. However it is possible to generalize the notion of convex sum of overlaps functions given in \cite{BUSTINCE, BEDREGAL} by means family of weight functions as defined by  Lizasoain and Moreno in \cite{Lizasoain}.

\begin{definition}[weight vector]
		Let $L$ be a bounded lattice, $\otimes,\oplus:L^2 \to L$  be a t-norm and a t-conorm respectively. Then $(w_1, \ldots, w_n) \in L^n$ is called a weight vector on $\langle L, \oplus , \otimes \rangle$ if 
\begin{equation}
\bigoplus_{i=1}^{n} w_i = 1_L
\end{equation}
In addition, if for all $\lambda \in L$ we have \begin{equation} \label{dist_weigth_vector}
		\lambda \otimes \left(\bigoplus\limits_{i=1} ^n w_i \right)=\bigoplus\limits_{i=1} ^n\left(\lambda \otimes w_i\right)
		\end{equation} then  $(w_1, \ldots, w_n) \in L^n$ is called a distributive  weight vector on $\langle L, \oplus , \otimes \rangle$.
	\end{definition}
	
	\begin{definition}\cite[family of weight functions]{diego}
		Let $\langle L, \oplus , \otimes \rangle$ be an algebra in which $L$ is a bounded lattice, $\otimes:L^2 \to L$ a t-norm and $\oplus:L^2 \to L$ a t-conorm. A finite family of functions $\mathcal{F}=\{f_i:L^m \to  L |\, i=1,2,\ldots , n\}$ is called the family of weight functions if, for each $w\in L^m$, the vector $(f_1(w), \ldots , f_n(w))$ is a vector of weights in $\langle L, \oplus , \otimes \rangle$. In addition, if a vector $(f_1(w), \ldots , f_n(w))$ satisfies Equation (\ref{dist_weigth_vector})  for all $\lambda \in L$ then $\mathcal{F}$  is called a distributive family of weight functions.
	\end{definition}

Next result gives a generalized version of convex sum of quasi-overlap functions.

	\begin{theorem}\label{lemma}
		Let $O_1, \ldots , O_n: L^2 \rightarrow L$ be quasi-overlap functions on a bounded lattice $L$ and $\otimes,\oplus: L^2 \rightarrow L$ be a t-norm and t-conorm respectively, both continuous. If $\mathcal{O_F}= \{O_1, \ldots , O_n\}$ is a family of weight quasi-overlaps then the function $F:L^2 \to L$ given by 
\begin{equation}
F(x,y) = \bigoplus\limits_{i=1} ^n \lambda_i \otimes O_i(x,y)
\end{equation}
is also  a quasi-overlap function, where $\bigoplus\limits_{i=1} ^n \lambda_i=1_L$ for all $\lambda_i \in L$. In addition, if $\mathcal{O_F}$ are $L$-overlaps and $\oplus$ as well as $\otimes$ are continuous then $F$ is also an $L$-overlap function.
	\end{theorem}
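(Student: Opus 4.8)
The plan is to verify the five conditions of Definition~\ref{overlaps} for $F$; the first four of them simultaneously settle the quasi-overlap claim. Conditions (OL1) and (OL4) are routine. Commutativity (OL1) is inherited termwise from (OL1) for each $O_i$. For (OL4), if $x\leqslant_L x'$ then $O_i(x,y)\leqslant_L O_i(x',y)$ for every $i$ by (OL4) for $O_i$; monotonicity of the t-norm gives $\lambda_i\otimes O_i(x,y)\leqslant_L\lambda_i\otimes O_i(x',y)$, and iterating monotonicity of the t-conorm yields $F(x,y)\leqslant_L F(x',y)$; the same in the second coordinate together with transitivity gives monotonicity in the product order. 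The easy halves of (OL2) and (OL3) are where the weight-vector identity $\bigoplus_{i=1}^n\lambda_i=1_L$ enters: if $x=y=1_L$ then $O_i(1_L,1_L)=1_L$ for all $i$ by (OL3), hence $\lambda_i\otimes O_i(1_L,1_L)=\lambda_i\otimes 1_L=\lambda_i$ and so $F(1_L,1_L)=\bigoplus_{i=1}^n\lambda_i=1_L$; and if $x=0_L$ or $y=0_L$ then every $O_i(x,y)=0_L$, hence every $\lambda_i\otimes O_i(x,y)\leqslant_L\lambda_i\wedge_L 0_L=0_L$, so $F(x,y)=0_L$.

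The real content lies in the forward directions of (OL2) and (OL3). Suppose $F(x,y)=0_L$. Since each argument of a t-conorm lies below its value (as observed after Definition~\ref{tcn}), iterating gives $\lambda_i\otimes O_i(x,y)\leqslant_L F(x,y)=0_L$, so $\lambda_i\otimes O_i(x,y)=0_L$ for every $i$. Because $\bigoplus_{i=1}^n\lambda_i=1_L\neq 0_L$, some weight $\lambda_{i_0}$ is nonzero, and from $\lambda_{i_0}\otimes O_{i_0}(x,y)=0_L$ one wants $O_{i_0}(x,y)=0_L$, whence (OL2) for $O_{i_0}$ forces $x=0_L$ or $y=0_L$. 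Dually, if $F(x,y)=1_L$ one wants some $\lambda_{i_0}\otimes O_{i_0}(x,y)=1_L$; then $\lambda_{i_0}\otimes O_{i_0}(x,y)\leqslant_L\lambda_{i_0}\wedge_L O_{i_0}(x,y)$ forces $O_{i_0}(x,y)=1_L$, and (OL3) for $O_{i_0}$ gives $x=y=1_L$.

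I expect this ``pull-back'' to be the main obstacle. The implication ``$\lambda_{i_0}\otimes O_{i_0}(x,y)=0_L$ with $\lambda_{i_0}\neq 0_L$ $\Rightarrow$ $O_{i_0}(x,y)=0_L$'' needs $\otimes$ to be positive in the sense of~(\ref{positive_tnorm}), and dually ``$\bigoplus_i a_i=1_L \Rightarrow a_{i_0}=1_L$ for some $i_0$'' needs $\oplus$ positive in the sense of~(\ref{positive_tconorm}); this is precisely where the hypothesis that $\{O_1,\ldots,O_n\}$ be a family of weight quasi-overlaps, and in the general case the distributivity~(\ref{dist_weigth_vector}) of the weight vector, must do real work, so the argument should be arranged so as to make those positivity facts available.

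For the $L$-overlap conclusion it remains to check (OL5), i.e.\ that $F$ is continuous in the sense of Definition~\ref{continuous}. Write $F$ as the composite
\begin{equation}
L^2\ \xrightarrow{\ (O_1,\ldots,O_n)\ }\ L^n\ \xrightarrow{\ (\lambda_i\otimes-)_i\ }\ L^n\ \xrightarrow{\ \bigoplus\ }\ L .
\end{equation}
Suprema of directed sets and infima of filtered sets in a finite product of complete lattices are computed coordinatewise, so the first map preserves both because each $O_i$ does (continuity of $O_i$, by (OL5)); each section $z\mapsto\lambda_i\otimes z$ preserves both by applying~(\ref{scott_cont}) and~(\ref{scott_cont dual}) to sets of the form $\{\lambda_i\}\times\Delta$ and using continuity of $\otimes$; and the $n$-fold t-conorm preserves both by iterating continuity of $\oplus$. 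Since a composition of maps preserving directed suprema (resp.\ filtered infima) again preserves them, $F$ is continuous, which completes the proof that $F$ is an $L$-overlap; the (OL1)--(OL4) arguments already give the quasi-overlap statement.
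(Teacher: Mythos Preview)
Your approach matches the paper's: both verify (OL1)--(OL5) directly, with (OL1) and (OL4) routine, (OL2)--(OL3) handled by pushing $0_L$ (resp.\ $1_L$) through the t-conorm and t-norm, and (OL5) obtained from continuity of the constituents. Your continuity argument via the composite decomposition is more explicit than the paper's one-line appeal, but the substance is the same.

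On the point you flag as the main obstacle---the need for positivity of $\otimes$ and $\oplus$ to pull back $0_L$ and $1_L$ in (OL2) and (OL3)---you have identified a genuine issue, and the paper's own proof simply asserts the corresponding step (``there exists $i_0$ such that $\lambda_{i_0}\otimes O_{i_0}(x,y)=0_L$ if only if $O_{i_0}(x,y)=0_L$'') without invoking any positivity hypothesis or explaining how the weight-family assumption supplies it. So your concern is well placed, but it is not a discrepancy between your argument and the paper's; rather, both arguments rest on the same unjustified inference at this point.
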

	\begin{proof}We verify that $F$ satisfies the conditions of Definition \ref{quasi_overlaps} (and \ref{overlaps}) as follows.
		\begin{description}
			\item [(OL1)] Straightforward from commutativity of functions $O_i$ with $i = 1, 2, \ldots, n$;
			\item [(OL2)] Suppose $F(x,y)=0_L$. Then $ \bigoplus\limits_{i=1} ^n \lambda_i \otimes O_i(x,y)=0_L$ if and only if we have $\lambda_i \otimes O_i(x,y)=0_L$  for each $i = 1, \ldots , n$. Moreover, since $\bigoplus\limits_{i=1} ^n \lambda_i=1_L$ for all $\lambda_i \in L$ it follows there exists $i_0 \in \{1, \ldots , n\}$ such that $\lambda_{i_0} \otimes O_{i_0}(x,y)=0_L$ if only if $O_{i_0}(x,y)=0_L$ if and only if $x=0_L$ or $y=0_L$: 
			\item [(OL3)]  Suppose $F(x,y)=1_L$. Then $ \bigoplus\limits_{i=1} ^n \lambda_i \otimes O_i(x,y)=1_L$ if and only if there exists $i_0$ such that $\lambda_{i_0} \otimes O_{i_0}(x,y)=1_L$ if and only if $\lambda_{i_0} = O_{i_0}(x,y)=1_L$ if and only if $x=y=1_L$;
			\item [(OL4)] $F$ is increasing (therefore non-decreasing) since it is composed by increasing operations  $\oplus$ and $\otimes$.
			\item  [(OL5)] The continuity of $F$ can be obtained immediately from the continuities of $\mathcal{O_F}$,  $\oplus$ and $\otimes$.
	\end{description} 	\end{proof}
	
	\begin{proposition}
		Let $\otimes:L^2 \rightarrow L$ be a t-norm.	If $\psi, \varphi: L \to L$ are increasing $\{0,1\}$-functions  then the mapping  $$O_{\psi,\varphi}(x,y)=\psi(\varphi(x)\otimes \varphi(y))$$ is a quasi-overlap function.  In addition, if $\otimes$, $\psi$, and $\varphi$ continuous then $O_{\psi,\varphi}$ is an $L$-overlap.
	\end{proposition}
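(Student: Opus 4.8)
The plan is to verify the axioms (OL1)--(OL4) of Definition \ref{quasi_overlaps} directly for $O_{\psi,\varphi}(x,y)=\psi(\varphi(x)\otimes\varphi(y))$, and then to deduce (OL5) under the extra continuity hypotheses. Condition (OL1) is immediate from commutativity of the t-norm $\otimes$, since $\varphi(x)\otimes\varphi(y)=\varphi(y)\otimes\varphi(x)$. Condition (OL4) holds because $O_{\psi,\varphi}$ is a composite of order preserving maps: the assignment $(x,y)\mapsto(\varphi(x),\varphi(y))$ is non-decreasing as $\varphi$ is increasing, $\otimes$ is non-decreasing by the monotonicity axiom of a t-norm, and $\psi$ is increasing; the composition of these is non-decreasing.

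For (OL3): if $x=y=1_L$ then $\varphi(x)=\varphi(y)=1_L$ because $\varphi$ is a $\{0,1\}$-function, so $\varphi(x)\otimes\varphi(y)=1_L\otimes 1_L=1_L$ by the boundary condition of $\otimes$, and then $\psi(1_L)=1_L$ because $\psi$ is a $\{0,1\}$-function. Conversely, $O_{\psi,\varphi}(x,y)=1_L$ forces $\varphi(x)\otimes\varphi(y)=1_L$ (as $\psi$ is a $\{0,1\}$-function); since $\varphi(x)\otimes\varphi(y)\leq_L\varphi(x)\wedge_L\varphi(y)$ --- the inequality for t-norms recalled after Definition \ref{tcn} --- we get $\varphi(x)=\varphi(y)=1_L$, hence $x=y=1_L$. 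Condition (OL2) is treated dually at the bottom: if $x=0_L$ (or $y=0_L$) then $\varphi(x)=0_L$, so $\varphi(x)\otimes\varphi(y)=0_L$ because $0_L$ annihilates $\otimes$, and $\psi(0_L)=0_L$. The converse is the one subtle step: from $O_{\psi,\varphi}(x,y)=0_L$ one gets $\varphi(x)\otimes\varphi(y)=0_L$, and to conclude $\varphi(x)=0_L$ or $\varphi(y)=0_L$ (hence $x=0_L$ or $y=0_L$) one needs $\otimes$ to be a \emph{positive} t-norm; this is the hypothesis tacitly required in the statement, and checking this implication is the main obstacle, since without positivity (OL2) can fail.

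For (OL5), assume in addition that $\otimes$, $\psi$ and $\varphi$ are continuous in the sense of Definition \ref{continuous}. I would show $O_{\psi,\varphi}$ preserves directed suprema and filtered infima using three facts: (i) suprema and infima in the product lattice $L^2$ are computed coordinatewise, so a directed $\Delta\subseteq L^2$ has $\sup\Delta=(\sup\Delta_1,\sup\Delta_2)$ with each coordinate projection $\Delta_i$ directed; (ii) consequently $(x,y)\mapsto(\varphi(x),\varphi(y))$ maps directed sets to directed sets and, by left continuity of $\varphi$, preserves their suprema; (iii) a composite of maps each preserving directed suprema again preserves directed suprema. Chaining (i)--(iii) with the left continuity of $\otimes$ and $\psi$ gives $O_{\psi,\varphi}(\sup\Delta)=\psi\bigl(\sup\{\varphi(x)\otimes\varphi(y):(x,y)\in\Delta\}\bigr)=\sup\{O_{\psi,\varphi}(x,y):(x,y)\in\Delta\}$, i.e.\ Equation (\ref{scott_cont}); the dual argument with filtered sets and right continuity yields Equation (\ref{scott_cont dual}). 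Hence $O_{\psi,\varphi}$ is continuous and therefore an $L$-overlap. I expect (OL1)--(OL4) to be essentially routine apart from the positivity remark, and the only mild technicality to lie in the coordinatewise and composition bookkeeping for (OL5).
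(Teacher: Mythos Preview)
Your proposal is correct and follows exactly the approach the paper takes: the paper's proof is the single line ``It follows by a direct verification of the axioms of Definitions \ref{quasi_overlaps} and \ref{overlaps}'', and you have carried out that verification in full. Your remark that the converse direction of (OL2) tacitly requires $\otimes$ to be positive is well-taken and is indeed a missing hypothesis in the statement as written.
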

	\begin{proof} It follows by a direct verification of the axioms of Definitions \ref{quasi_overlaps} and \ref{quasi_overlaps}. \end{proof}

	\section{Main properties of quasi-overlap functions}
	

This section is devoted to discuss about the main properties of quasi-overlap and overlap functions namely migrativity, homogeneity, idempotency.

\vspace{0.5cm}
	

	\subsection{$(\alpha, A)$-Migrativity}
	
	

The meaning of migrativity for a quasi-overlap function $O:[0,1]^2 \rightarrow [0,1]$ is that this function is invariant with respect to the same factor $\alpha \in [0,1]$ given in both entries, i.e. $O(\alpha x,y) = O(x,\alpha y)$ for all $x, y \in [0,1]$ (see \cite{BUSTINCE, BEDREGAL,BUSTINCE2}). Here we present a generalized definition of migrativity by means aggregation function.
\begin{definition}
		Let $L$ be a bounded lattice and $A:L^2 \to L$ be an aggregation function. For a given $\alpha \in L$ a bivariate operation $F:L^2 \to L$ is called $(\alpha, A)$-migrative if  it satisfies
		\begin{equation}
		F(A(\alpha,x),y)=F(x,A(\alpha,y)), \quad \textrm{for all} \quad x,y \in L.
		\end{equation}
In case $F$ is $(\alpha,A)$-migrative for all $\alpha \in L$ then it is called just $A$-migrative.
	\end{definition}
	
	\begin{proposition} \label{lemma1}
		Let $A:L^2 \rightarrow L$ be a uninorm with neutral element $a \in L$. A function $F:L^2 \rightarrow L$ is $A$-migrative if and only if there exists a function $f:L \rightarrow L$ such that $F(x,y) = f(A(x,y))$ for all $x, y \in L$.
	\end{proposition}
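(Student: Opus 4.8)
The plan is to prove the two implications separately, using only the defining properties of a uninorm, namely commutativity, associativity, monotonicity, and the existence of a neutral element $a$ (in fact monotonicity will not even be needed). The equivalence is the lattice analogue of the corresponding characterization on $[0,1]$ attributed to Bustince et al., so the structure of the argument mirrors the classical one.

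For the sufficiency direction $(\Leftarrow)$ I would assume that $F(x,y)=f(A(x,y))$ for some $f:L\rightarrow L$ and check $A$-migrativity directly. Fixing $\alpha,x,y\in L$, we have $F(A(\alpha,x),y)=f(A(A(\alpha,x),y))$, and by associativity together with commutativity of $A$, $A(A(\alpha,x),y)=A(\alpha,A(x,y))=A(x,A(\alpha,y))$; hence $F(A(\alpha,x),y)=f(A(x,A(\alpha,y)))=F(x,A(\alpha,y))$. Note that this half uses only that $A$ is a commutative semigroup operation, not the neutral element.

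For the necessity direction $(\Rightarrow)$, the key idea, and really the only point of the proof that requires any thought, is to choose the right instance of the migrativity identity. Assuming $F$ is $A$-migrative, i.e. $F(A(\alpha,x),y)=F(x,A(\alpha,y))$ for all $\alpha,x,y\in L$, I would define $f:L\rightarrow L$ by $f(z):=F(a,z)$, where $a$ is the neutral element of $A$. Instantiating the migrativity identity with the parameter $\alpha$ replaced by $x$, the first argument replaced by $a$, and the second argument kept as $y$, one obtains $F(A(x,a),y)=F(a,A(x,y))$; since $A(x,a)=x$ because $a$ is neutral, this reads $F(x,y)=F(a,A(x,y))=f(A(x,y))$, which is exactly the desired representation.

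The proof is short and computation-free. The only step that is not entirely mechanical is spotting, in the forward direction, the substitution $\alpha\mapsto x$ and $x\mapsto a$ in the migrativity equation, together with recognising that the correct choice of $f$ is the section through the neutral element, $z\mapsto F(a,z)$. Once this is seen, everything reduces immediately to the uninorm axioms.
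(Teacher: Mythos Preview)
Your proof is correct and follows essentially the same approach as the paper. Both directions coincide: the $(\Leftarrow)$ computation is identical, and for $(\Rightarrow)$ the paper also arrives at $f(z)=F(a,z)$ and verifies the representation via $F(a,A(x,y))=F(A(a,x),y)=F(x,y)$; your version is slightly more streamlined in that you define $f(z):=F(a,z)$ from the outset rather than first arguing that $F$ factors through $A$ and then identifying the factor.
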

	
	\begin{proof}
		Suppose $F$ is a A-migrative function. Note that if $A(x,y) = A(z,w)$ then
		$F(x,y) = F(A(a,x),y) = F(a,A(x,y)) = F(a,A(z,w))= F(A(a,z),w) = F(z,w)$. Also, since $a \in L$ is a neutral element of $A$, for every $z \in L$ it follows that $z=A(a,z)$ and hence the function  $f:L \rightarrow L$ given by $f(z) = F(x,y)$ such that $A(x,y) = z$ is a well and univocally defined function which satisfies $F(x,y) = f(A(x,y))$ for all $x, y \in L$. Indeed, since $F$ is a A-migrative function, we have \begin{eqnarray*}
		F(a,z)&=&F(a,A(x,y))\\
		&=&F(A(a,x),y)\\
		&=&F(x,y).
		\end{eqnarray*}
		
		Reciprocally, suppose there exists $f:L \rightarrow L$ such that $F(x,y) = f(A(x,y))$ for all $x, y \in L$. Hence, for all $x, y, \alpha \in L$ it follows 
		
		$$
		\begin{array}{rcll}
		F(A(\alpha,x),y) & = & f(A(A(\alpha,x),y)) & \\
		& = & f(A(A(x,\alpha),y)), & by \ commutativity \ of \ A\\
		& = & f(A(x,A(\alpha,y))), & by \ associativity \ of \ A\\
		& = & F(x,A(\alpha,y))
		\end{array}
		$$ Therefore $F$ is a A-migrative function. \end{proof}
	\begin{remark}
		In particular, if $a=1_L$ then $A$ is a t-norm. The next proposition uses this fact.
	\end{remark}
	\begin{proposition} \label{prop1}
		Under conditions of Proposition \ref{lemma1}, if $A$ is a t-norm and $F:L^2 \rightarrow L$ is $A$-migrative then 
		\begin{enumerate}
			\item $F$ is symmetric;
			\item $F(1_L,1_L) = 1_L$ if and only if $f(1_L) = 1_L$;
			\item $F(0_L,0_L) = 0_L$ if and only if $f(0_L) = 0_L$;
			\item $F$ is continuous if and only if $f$ and $A$ are continuous.
		\end{enumerate}
	\end{proposition}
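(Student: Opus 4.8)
The proof will verify the four equivalences in turn, using the representation $F(x,y)=f(A(x,y))$ supplied by Proposition \ref{lemma1} together with the standing hypothesis that $A$ is a t-norm (so $1_L$ is its neutral element), and the known facts about t-norms recorded in the preliminaries, namely commutativity, monotonicity and $T(x,y)\leqslant_L x\wedge_L y$.

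First I would dispatch item~(1): since $A$ is commutative, $F(x,y)=f(A(x,y))=f(A(y,x))=F(y,x)$, so $F$ is symmetric. Items~(2) and~(3) are essentially evaluations of the representation at the extreme points. For~(2), because $1_L$ is the neutral element of $A$ we have $A(1_L,1_L)=1_L$, hence $F(1_L,1_L)=f(A(1_L,1_L))=f(1_L)$; thus $F(1_L,1_L)=1_L$ iff $f(1_L)=1_L$. For~(3), since $A$ is a t-norm, $A(0_L,0_L)\leqslant_L 0_L\wedge_L 0_L=0_L$, so $A(0_L,0_L)=0_L$ and $F(0_L,0_L)=f(A(0_L,0_L))=f(0_L)$; thus $F(0_L,0_L)=0_L$ iff $f(0_L)=0_L$.

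For item~(4), one direction is immediate: if $f$ and $A$ are continuous (in the sense of Definition \ref{continuous}, i.e.\ both left and right continuous, preserving sups of directed sets and infs of filtered sets), then $F=f\circ A$ is a composition of continuous maps and hence continuous — here I would invoke that the composition of maps preserving directed sups (resp.\ filtered infs) again preserves them, where $A:L^2\to L$ is understood with the product order on $L^2$ and one uses that $\sup$ of a directed subset of $L^2$ is computed coordinatewise. The converse is the delicate part: from continuity of $F$ I must recover continuity of $A$ and of $f$ separately. For $f$: restrict $F$ along the diagonal up to the neutral element, i.e.\ use $F(1_L,z)=f(A(1_L,z))=f(z)$, so $f(z)=F(1_L,z)$ is a section of the continuous map $F$ obtained by fixing the first argument at $1_L$; fixing an argument of a continuous bivariate map yields a continuous unary map (the inclusion $z\mapsto(1_L,z)$ preserves directed sups and filtered infs), so $f$ is continuous. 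For $A$: now that $f$ is known continuous, continuity of $A$ does not follow from that of $F=f\circ A$ in general — and this is exactly where I expect the main obstacle.

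The sticking point in~(4) is that $f$ need not be injective or order-reflecting, so $F=f\circ A$ continuous plus $f$ continuous does not formally force $A$ continuous; the statement as written presumably intends an additional implicit hypothesis (e.g.\ that $f$ is injective, or that one only claims continuity of $A$ restricted to where $f$ is "faithful"), or else the authors' proof uses that $f(z)=F(1_L,z)$ recovers $A$ on the relevant directed/filtered sets via $A(x,y)=$ the unique $z$ with $F(1_L,z)=F(x,y)$ from the well-definedness argument in Proposition \ref{lemma1}. I would follow the latter route: the map $z\mapsto F(1_L,z)=f(z)$ is, by the proof of Proposition \ref{lemma1}, a bijection from $\{A(x,y):x,y\in L\}$ onto the range of $F$ with inverse induced by $A$; granting this, $A(\sup\Delta)$ and $\sup A(\Delta)$ have the same image under the injective-on-range map $f$, and order-reflection along that range gives $A(\sup\Delta)=\sup A(\Delta)$ for directed $\Delta$, and dually for filtered sets, so $A$ is continuous. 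I would present this carefully, flagging that it relies on the identification of $f$ with the section $F(1_L,\cdot)$ established above, and conclude that $F$ continuous implies both $f$ and $A$ continuous, completing the equivalence.
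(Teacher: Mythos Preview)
Your treatment of items (1)--(3) is correct and essentially matches the paper's, though for (1) the paper argues directly from the migrativity identity $F(A(1_L,x),y)=F(1_L,A(x,y))$ rather than via the representation $F=f\circ A$; both routes are valid.

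For item (4), the easy direction $(\Leftarrow)$ is handled the same way in both. For $(\Rightarrow)$ you derive continuity of $f$ from the section $f(z)=F(1_L,z)$, which is cleaner than the paper's route (the paper first argues $A$ is continuous and then uses $F=f\circ A$ to conclude $f$ is continuous).

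You are right to flag the continuity of $A$ as the real obstacle. The paper does not actually overcome it: after observing that $A(\Delta)$ is directed whenever $\Delta\subseteq L^2$ is directed, it simply writes ``it is also easy to see that $A(\sup\Delta)=\sup A(\Delta)$'' and concludes that $A$ is continuous. But that equality \emph{is} left continuity of $A$, so the step is asserted, not proved.

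Your proposed repair, however, does not work either. Proposition~\ref{lemma1} only establishes that $f$ is \emph{well defined} (if $A(x,y)=A(u,v)$ then $F(x,y)=F(u,v)$); it gives no injectivity. Since $A(1_L,z)=z$, the range of $A$ is all of $L$, so ``injectivity on the range of $A$'' would require $f$ to be injective on $L$, which is nowhere assumed. Indeed, item (4) seems to be false as stated: take any non-continuous t-norm $A$ on a complete lattice and let $f$ be constant; then $F=f\circ A$ is constant, hence $A$-migrative and continuous, yet $A$ is not. So your instinct that something extra (e.g.\ injectivity of $f$) is implicitly needed is correct; neither your sketch nor the paper's proof supplies it.
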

	
	\begin{proof} 
		\begin{enumerate}
			\item Since $A$ is a t-norm, if $F$ is $A$-migrative then $F(x,y) = F(A(1_L,x),y) = F(1_L,A(x,y)) = F(1_L,A(y,x)) = F(A(1_L,y),x) = F(y,x)$ for all $x,y \in L$;
			\item Notice that $F(1_L,1_L) = 1_L$ if and only if $f(1_L) = f(A(1_L,1_L)) = 1_L$;
			\item Analagous to item 2;
			\item $(\Rightarrow)$ If $F$ is continuous, we must show that $ f $ and $ A $ are also continuous. By Proposition \ref{lemma1}, exists a function $f:L \rightarrow L$ such that $F(x,y) = f(A(x,y))$ for all $x, y \in L$. Let $\Delta \subseteq L^2$ be a directed set. We assert that $A(\Delta)=\{z \in L \,|\, z=A(x,y), \, (x,y)\in \Delta\}$ is also directed set. Indeed, since $\Delta$ is directed set, for all $(u,v), (p,q) \in \Delta$, exists $(r,s)\in \Delta$ such that $(u,v)\leq_{L^2} (r,s)$ and $(p,q)\leq_{L^2} (r,s)$. Hence, by monotonicity of t-norm $A$ it follows that $A(u,v)\leq A(r,s)$ and $A(p,q)\leq A(r,s)$. Thus $A(\Delta)$ is a directed set. Moreover, since $L$ is complete, it follows that exists $\sup \Delta$ and $\sup A(\Delta)$ and it is also easy to see that  $A(\sup \Delta)=\sup A(\Delta)$. Therefore $A$ is continuous. Moreover, since $F$ is continuous, we have $$f(\sup A(\Delta))=f(A(\sup \Delta))=F(\sup \Delta)=\sup F(\Delta) = \sup f(A(\Delta)).$$ Therefore $f$ is continuous.\\ $(\Leftarrow)$ If $f$ and $A$ are continuous functions then it follows straightforward that $F$ is continuous. 	\end{enumerate}  \end{proof}
	
	\begin{theorem}\label{theo4.1}
		Let $A:L^2 \rightarrow L$ be an uninorm. A function $O:L^2 \rightarrow L$ is an $A$-migrative quasi-overlap function if and only if $O(x,y) = f(A(x,y))$ holds for some non-decreasing function $f:L \rightarrow L$ such that $f(0_L) = 0_L$ and $f(1_L) = 1_L$. 
	\end{theorem}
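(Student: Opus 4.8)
The plan is to read the representation $O(x,y)=f(A(x,y))$ straight off Proposition~\ref{lemma1} and then turn the overlap axioms (OL1)--(OL4) into the three stated conditions on $f$, working through the neutral element $a$ of the uninorm $A$. For the necessity direction, suppose $O$ is an $A$-migrative quasi-overlap. Proposition~\ref{lemma1} gives a map $f:L\to L$ with $O(x,y)=f(A(x,y))$, and its proof shows $f(z)=O(a,z)$ for all $z\in L$ since $A(a,z)=z$. Hence $f$ is non-decreasing by (OL4), $f(0_L)=O(a,0_L)=0_L$ by (OL2), and---using $1_L=A(a,1_L)\leq_L A(1_L,1_L)\leq_L 1_L$---we get $f(1_L)=f(A(1_L,1_L))=O(1_L,1_L)=1_L$ by (OL3). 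I would also record the by-product that $f(1_L)=O(a,1_L)=1_L$ combined with (OL3) forces $a=1_L$, so $A$ is in fact a t-norm and Proposition~\ref{prop1} applies verbatim below.

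For the sufficiency direction, assume $O(x,y)=f(A(x,y))$ with $f$ non-decreasing, $f(0_L)=0_L$, $f(1_L)=1_L$. The reciprocal half of Proposition~\ref{lemma1} already gives $A$-migrativity; (OL1) follows from commutativity of $A$, since $O(x,y)=f(A(x,y))=f(A(y,x))=O(y,x)$; and (OL4) holds because $O$ is a composite of the non-decreasing maps $A$ and $f$. The remaining content is (OL2) and (OL3). One implication of each is routine: $A(1_L,1_L)=1_L$ yields $O(1_L,1_L)=f(1_L)=1_L$, and $A(0_L,y)=0_L$ (using that $A$ is a t-norm, so $0_L$ is an annihilator) yields $O(0_L,y)=f(0_L)=0_L$. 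For the converses one must pass from $O(x,y)=0_L$ (resp.\ $=1_L$) to $A(x,y)=0_L$ (resp.\ $=1_L$) and then on to $x=0_L$ or $y=0_L$ (resp.\ $x=y=1_L$).

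I expect those converses to be the main obstacle. The implications $f(A(x,y))=0_L\Rightarrow x=0_L\text{ or }y=0_L$ and $f(A(x,y))=1_L\Rightarrow x=y=1_L$ do not follow from ``$f$ non-decreasing, $f(0_L)=0_L$, $f(1_L)=1_L$'' by themselves, so the argument has to bring in that $f$ is strict at the two endpoints (it attains $0_L$ only at $0_L$ and $1_L$ only at $1_L$) and that $A$ is a positive t-norm---both of which, as the necessity step shows, are precisely what ``$O$ is a quasi-overlap'' enforces. Making this interplay explicit (equivalently, reading the statement's right-hand side together with these endpoint-strictness conditions) is the step on which I would concentrate the effort.
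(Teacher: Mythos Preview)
Your route is exactly the paper's: its entire proof is the single line ``Straightforward from Propositions~\ref{lemma1} and~\ref{prop1},'' and you have simply unfolded what those two propositions give. Your necessity argument is fine, including the observation that $f(1_L)=O(a,1_L)=1_L$ together with (OL3) forces $a=1_L$, so $A$ must in fact be a t-norm.

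Where you stop and flag an obstacle is precisely the point at which the theorem, as written, fails. The conditions ``$f$ non-decreasing, $f(0_L)=0_L$, $f(1_L)=1_L$'' do not suffice to recover the converse halves of (OL2) and (OL3). A concrete counterexample on $L=[0,1]$ with $A(x,y)=xy$: take $f(z)=\max(0,2z-1)$; then $f$ meets all three hypotheses, yet $O(\tfrac12,\tfrac12)=f(\tfrac14)=0$ with neither argument zero, so (OL2) fails. Equally, a positive uninorm $A$ that is not a t-norm (neutral element $a\neq 1_L$) already breaks (OL3), since then $O(a,1_L)=f(A(a,1_L))=f(1_L)=1_L$ although $a\neq 1_L$. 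Thus the sufficiency direction needs both that $f$ is a $\{0,1\}$-function in the sense of attaining $0_L$ and $1_L$ only at those points, and that $A$ is a positive t-norm---exactly the extra structure you extracted in the necessity step.

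The paper's one-line proof does not address this; Proposition~\ref{prop1} only yields the easy halves $O(0_L,0_L)=0_L$ and $O(1_L,1_L)=1_L$, not the ``only if'' parts of (OL2)--(OL3). So your diagnosis is correct: the gap is in the statement, not in your argument, and the right fix is to strengthen the hypotheses on $f$ (and implicitly on $A$) as you suggest.
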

	
	\begin{proof} Straightforward from Propositions \ref{lemma1} and  \ref{prop1}.


	\end{proof} 
	
	\begin{theorem}\label{corolariosoma}
		Let $L$ be a bounded lattice, $A:L^2 \rightarrow L$ be an uninorm and $\otimes,\oplus:L^2 \rightarrow L$ be a t-norm and a t-conorm. A function $F: L^2 \to L$ is $A$-migrative if and only if $F$ is given by $$F(x,y)=\bigoplus\limits_{i=1} ^n \lambda_i \otimes F_i(x,y),$$ where $\lambda_i \in L$ such that $\bigoplus\limits_{i=1} ^n \lambda_i = 1_L$ and $\mathcal{F}_A=\{F_i: L^2 \to L \,| F_i \textrm{ is } A\textrm{-migrative}\}$ ($i=1, 2, \ldots, n$) is a finite family of $A$-migrative weight functions.
	\end{theorem}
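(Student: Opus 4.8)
The plan is to combine Theorem \ref{theo4.1} (which characterizes $A$-migrative quasi-overlaps, and more generally the proof of Proposition \ref{lemma1} which characterizes $A$-migrative functions with respect to a uninorm $A$) with Theorem \ref{corolariosoma}'s convex-sum machinery. Note first that for $A$-migrativity we only need the factorization $F(x,y)=f(A(x,y))$ from Proposition \ref{lemma1}; the overlap boundary conditions are not needed here, since the statement is about $A$-migrative functions in general. So the real content is: a function $F$ factors as $f\circ A$ for some $f:L\to L$ if and only if it is a $\oplus$-combination $\bigoplus_{i=1}^n \lambda_i\otimes F_i$ of $A$-migrative functions $F_i$ with $\bigoplus_i\lambda_i=1_L$.

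First I would prove the ``if'' direction. Assume $F(x,y)=\bigoplus_{i=1}^n \lambda_i\otimes F_i(x,y)$ with each $F_i$ being $A$-migrative. By Proposition \ref{lemma1}, for each $i$ there is $f_i:L\to L$ with $F_i(x,y)=f_i(A(x,y))$. Hence
\begin{equation}
F(x,y)=\bigoplus_{i=1}^n \lambda_i\otimes f_i(A(x,y)) = g(A(x,y)),
\end{equation}
where $g:L\to L$ is defined by $g(z)=\bigoplus_{i=1}^n \lambda_i\otimes f_i(z)$. Since $F$ factors through $A$, applying the computation from the ``reciprocally'' part of the proof of Proposition \ref{lemma1} (which uses only commutativity and associativity of $A$) shows directly that $F(A(\alpha,x),y)=g(A(A(\alpha,x),y))=g(A(x,A(\alpha,y)))=F(x,A(\alpha,y))$, so $F$ is $A$-migrative. (Alternatively, one can observe that a $\oplus$-combination of $A$-migrative functions is $A$-migrative because each summand $\lambda_i\otimes F_i(A(\alpha,x),y)=\lambda_i\otimes F_i(x,A(\alpha,y))$ term-by-term, and $\oplus$ respects this equality.)

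The ``only if'' direction is the main obstacle, since we must manufacture a family $\{F_i\}$ and weights $\lambda_i$ realizing a given $A$-migrative $F$. The natural move is to take $n=1$: by Proposition \ref{lemma1} an $A$-migrative $F$ already has the form $F(x,y)=f(A(x,y))$, which is itself one $A$-migrative function, so setting $F_1:=F$, $\lambda_1:=1_L$, and noting $\bigoplus_{i=1}^1\lambda_i=1_L$ and $1_L\otimes F_1(x,y)=F_1(x,y)$ (boundary condition of the t-norm, since $1_L$ is its neutral element), we obtain the required representation with a one-element family. For general $n$ one may pad with $\lambda_i=0_L$ and arbitrary $A$-migrative $F_i$ for $i\geq 2$, using $0_L\otimes F_i=0_L$ (annihilator property of $0_L$ for a t-norm) and $x\oplus 0_L=x$ (boundary condition of the t-conorm), so the extra summands vanish. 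The one subtlety worth spelling out is that $1_L\otimes F_1(x,y)=F_1(x,y)$ and $0_L\otimes F_i(x,y)=0_L$ both follow from Definition \ref{tn}: the first is the boundary condition $T(x,1_L)=x$ and the second follows because $0_L\otimes F_i(x,y)\leq_L 0_L\wedge_L F_i(x,y)=0_L$. Hence the equivalence holds, with the interesting implication being essentially Proposition \ref{lemma1} in disguise and the converse being a routine verification using the algebraic laws of $\otimes$ and $\oplus$.
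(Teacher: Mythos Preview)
Your proposal is correct and follows essentially the same approach as the paper: the ``only if'' direction is handled by taking $F$ itself as the single $A$-migrative summand with weight $1_L$ (and padding with zero weights for general $n$), and the ``if'' direction is the term-by-term verification $\lambda_i\otimes F_i(A(\alpha,x),y)=\lambda_i\otimes F_i(x,A(\alpha,y))$. Your invocation of Proposition \ref{lemma1} to factor through $A$ is a slight elaboration, but the direct alternative you also mention is exactly what the paper does.
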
		
			\begin{proof}  Supposing $F$ is $A$-migrative function then taking $\lambda_1 = \lambda_2 = \cdots = \lambda_{n-1} = 0_L$ and $\lambda_n = 1_L$ it follows that
$$F(x,y)= \left\{ \bigoplus\limits_{i=1}^{n-1} 0_L \otimes F_i(x,y)\right\} \oplus (1_L \otimes F(x,y))$$ where $\bigoplus\limits_{i=1} ^n \lambda_i = 1_L$ and $\mathcal{F}_A=\{F\}$.\\			Reciprocally, suppose function $F:L^2 \to L$ is such that $$F(x,y)=\bigoplus\limits_{i=1} ^n \lambda_i \otimes F_i(x,y),$$ where $\lambda_i \in L$ such that $\bigoplus\limits_{i=1} ^n \lambda_i = 1_L$  and $F_i \in \mathcal{F}_A$. Since each $F_i$ is $A$-migrative for all $\alpha \in L$ we have 
		\begin{eqnarray*}
			F(A(\alpha,x),y)&=&\bigoplus\limits_{i=1} ^n \lambda_i \otimes F_i(A(\alpha,x),y)\\
			&=&\bigoplus\limits_{i=1} ^n \lambda_i \otimes F_i(x,A(\alpha,y))\\
			&=&F(x,A(\alpha,y)).
		\end{eqnarray*}	Therefore $F$ is $A$-migrative. \end{proof}
	
	\begin{corollary}
		A function given by $O(x,y)=\bigoplus\limits_{i=1} ^n \lambda_i \otimes O_i(x,y)$ is a $A$-migrative quasi-overlap function if, and only if, $O_i$ belongs to a finite family of weight quasi-overlaps $A$-migratives for all $x,y, \lambda_i \in L$ such that $\bigoplus\limits_{i=1} ^n \lambda_i=1_L$ and $i=1, 2, \ldots, n$.
	\end{corollary}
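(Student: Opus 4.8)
The plan is to obtain this corollary by merging the two preceding theorems: Theorem~\ref{lemma}, which says that a generalized convex sum of (weight) quasi-overlaps is again a quasi-overlap, and Theorem~\ref{corolariosoma}, which characterizes $A$-migrative functions as generalized convex sums of $A$-migrative weight functions. For the ``if'' part, assume that $\{O_1,\dots,O_n\}$ is a finite family of weight quasi-overlaps, each of which is $A$-migrative, and that $\bigoplus_{i=1}^{n}\lambda_i=1_L$. Then Theorem~\ref{lemma} (more precisely, the part of its proof that verifies (OL1)--(OL4)) shows that $O(x,y)=\bigoplus_{i=1}^{n}\lambda_i\otimes O_i(x,y)$ is a quasi-overlap, while the ``reciprocally'' direction of Theorem~\ref{corolariosoma}, applied with $\mathcal{F}_A=\{O_1,\dots,O_n\}$, shows that $O$ is $A$-migrative. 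Hence $O$ is an $A$-migrative quasi-overlap function.

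For the ``only if'' part, suppose $O(x,y)=\bigoplus_{i=1}^{n}\lambda_i\otimes O_i(x,y)$ is an $A$-migrative quasi-overlap. Following the device used in the proof of Theorem~\ref{corolariosoma}, I would choose the degenerate weights $\lambda_1=\dots=\lambda_{n-1}=0_L$ and $\lambda_n=1_L$, so that trivially $\bigoplus_{i=1}^{n}\lambda_i=1_L$, and write
$$O(x,y)=\left(\bigoplus_{i=1}^{n-1}0_L\otimes O_i(x,y)\right)\oplus\bigl(1_L\otimes O(x,y)\bigr).$$
This exhibits $O$ as a generalized convex sum over the one-element family $\mathcal{O_F}=\{O\}$, whose single member is, by hypothesis, a quasi-overlap that is $A$-migrative; hence $\{O\}$ is a finite family of $A$-migrative weight quasi-overlaps and the required representation exists.

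The only point needing a little care is the hypothesis in Theorem~\ref{lemma} that $\otimes$ and $\oplus$ be continuous: since the corollary makes no such assumption, I would stress that continuity is used in Theorem~\ref{lemma} only to establish (OL5), whereas (OL1)--(OL4) follow for an arbitrary t-norm $\otimes$ and t-conorm $\oplus$, in line with Theorem~\ref{corolariosoma}. Beyond that, the argument is bookkeeping --- checking that the notion ``family of weight quasi-overlaps'' is the same in both theorems and that the degenerate choice of weights above falls under it --- so the main (and rather mild) obstacle is terminological consistency rather than any new mathematical content; the substance is already contained in Theorems~\ref{lemma} and~\ref{corolariosoma}.
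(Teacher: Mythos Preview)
Your proposal is correct and takes essentially the same approach the paper intends: the corollary is stated without proof precisely because it follows immediately by combining Theorem~\ref{lemma} (for the quasi-overlap property) with Theorem~\ref{corolariosoma} (for $A$-migrativity), and your use of the degenerate weights $\lambda_1=\cdots=\lambda_{n-1}=0_L$, $\lambda_n=1_L$ in the ``only if'' direction mirrors the device in the proof of Theorem~\ref{corolariosoma}. Your observation that the continuity hypothesis on $\otimes,\oplus$ in Theorem~\ref{lemma} is needed only for (OL5), and hence can be dropped here, is a useful clarification that the paper leaves implicit.
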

	
	
	\subsection{Extended homogeneity}
	
	
	\noindent
	
Recall that a function $F:[0,1]^n \to [0,1]$ is called an homogeneous function of order $k \in \mathbb{N}$ (or simply $k$-homogeneous) if, for any $\lambda \in [0,\infty[$ and $x_i \in [0,1]$, $i \in \{1,\ldots, n\}$, such that $\lambda x_i \in [0,1]$, it holds that 
\begin{equation} \label{restrita}
F(\lambda x_1, \ldots , \lambda x_n)=\lambda^k F(x_1, \ldots , x_n).
\end{equation}
For instance, the $n$-dimensional product given by 
\begin{equation} \label{product}
\prod\limits_{i=1} ^n x_i = \Pi (x_1, \ldots , x_n) = x_1 \cdot x_2 \cdot \ldots \cdot x_{n-1}\cdot x_n
\end{equation} is an homogeneous function of order $n$.
	
In this section we intend to extend the concept of homogeneous functions for lattice-valued overlap functions in order to give a characterization of those kind of functions by means of the notion of power of bivariate functions \cite{DIMURO}.
	
	\begin{definition}\label{grau_k}
		Let $L$ be a bounded lattice and $f:L^2 \to L$ be a function. The power notation $\lambda_f ^{(n)}$, where $n \in \mathbb{N}$, is defined as: \begin{eqnarray} 
			\lambda_f ^{(0)}&=&1_L \nonumber\\
		    \lambda_f ^{(1)}&=&\lambda \nonumber\\
			\lambda_f ^{(n)}&=&f(\lambda, \lambda_f ^{(n-1)}) \label{expoente},
		\end{eqnarray} for all $\lambda \in L$. 
	\end{definition}
	
	\begin{proposition}\label{prop41}
	If $f:L^2 \rightarrow L$ is an associative function and $1_L$ is its neutral element, then $\lambda_f ^{(p+q)}=f(\lambda_f ^{(p)} , \lambda_f ^{(q)})$ for all $p,q \in \mathbb{N}$ and $\lambda \in L$.
	\end{proposition}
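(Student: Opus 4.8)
The plan is to prove this power-law identity by a double induction, reducing everything to the associativity of $f$ and the fact that $1_L$ is a (two-sided) neutral element. The subtlety is that $f$ is not assumed commutative, so arguments may not be swapped freely; the whole argument has to be routed around this.

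\textbf{Step 1 (auxiliary fact).} First I would record that the generator $\lambda$ commutes with each of its own powers, i.e. $f(\lambda,\lambda_f^{(n)})=f(\lambda_f^{(n)},\lambda)$ for every $n\in\mathbb{N}$; equivalently, $\lambda_f^{(n+1)}=f(\lambda_f^{(n)},\lambda)$. This goes by induction on $n$. For $n=0$ it is just $f(\lambda,1_L)=\lambda=f(1_L,\lambda)$ by neutrality. For the step, using $\lambda_f^{(n+1)}=f(\lambda,\lambda_f^{(n)})$ from Definition~\ref{grau_k}, then associativity, then the induction hypothesis in the form $f(\lambda_f^{(n)},\lambda)=\lambda_f^{(n+1)}$,
\[
f(\lambda_f^{(n+1)},\lambda)=f\big(f(\lambda,\lambda_f^{(n)}),\lambda\big)=f\big(\lambda,f(\lambda_f^{(n)},\lambda)\big)=f(\lambda,\lambda_f^{(n+1)}).
\]

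\textbf{Step 2 (main induction).} Then I would prove the statement by induction on $q$, with $p\in\mathbb{N}$ arbitrary. The base case $q=0$ is $\lambda_f^{(p+0)}=\lambda_f^{(p)}=f(\lambda_f^{(p)},1_L)=f(\lambda_f^{(p)},\lambda_f^{(0)})$, using neutrality of $1_L$. Assuming $\lambda_f^{(p+q)}=f(\lambda_f^{(p)},\lambda_f^{(q)})$, the inductive step is the chain
\[
\lambda_f^{(p+q+1)}=f(\lambda,\lambda_f^{(p+q)})=f\big(\lambda,f(\lambda_f^{(p)},\lambda_f^{(q)})\big)=f\big(f(\lambda,\lambda_f^{(p)}),\lambda_f^{(q)}\big),
\]
where the first equality is Definition~\ref{grau_k}, the second is the induction hypothesis, and the third is associativity; then applying the auxiliary fact of Step~1 to $\lambda_f^{(p)}$ and associativity again,
\[
f\big(f(\lambda,\lambda_f^{(p)}),\lambda_f^{(q)}\big)=f\big(f(\lambda_f^{(p)},\lambda),\lambda_f^{(q)}\big)=f\big(\lambda_f^{(p)},f(\lambda,\lambda_f^{(q)})\big)=f(\lambda_f^{(p)},\lambda_f^{(q+1)}),
\]
which closes the induction. (Note this automatically covers $p=0$, where it degenerates to $\lambda_f^{(q)}=f(1_L,\lambda_f^{(q)})$.)

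\textbf{Main obstacle.} The only genuine difficulty is the absence of commutativity: without the auxiliary fact of Step~1 the naive induction stalls precisely at the moment where $\lambda$ must be moved past $\lambda_f^{(p)}$, so that small lemma carries the argument. Everything else is bookkeeping with the recursion in Definition~\ref{grau_k}, and associativity is used only in the "bracket-shifting" form $f(f(a,b),c)=f(a,f(b,c))$.
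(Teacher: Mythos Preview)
Your proof is correct, and the paper's own proof is just the one-line remark ``Fixed $q>0$, the demonstration follows by induction on $p$.'' The essential difference is which variable you induct on. Because the recursion in Definition~\ref{grau_k} is $\lambda_f^{(n)}=f(\lambda,\lambda_f^{(n-1)})$ with $\lambda$ on the \emph{left}, inducting on $p$ (with $q$ fixed) is the natural choice: from $\lambda_f^{(p+q)}=f(\lambda_f^{(p)},\lambda_f^{(q)})$ one gets
\[
\lambda_f^{(p+1+q)}=f(\lambda,\lambda_f^{(p+q)})=f\big(\lambda,f(\lambda_f^{(p)},\lambda_f^{(q)})\big)=f\big(f(\lambda,\lambda_f^{(p)}),\lambda_f^{(q)}\big)=f(\lambda_f^{(p+1)},\lambda_f^{(q)}),
\]
and the proof closes in one line with no commutativity issue whatsoever. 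By inducting on $q$ instead, you force yourself to slide $\lambda$ past $\lambda_f^{(p)}$, which is exactly why you needed the auxiliary Step~1. So your route is sound but does more work than necessary; the ``obstacle'' you identified is an artifact of the choice of induction variable rather than a genuine feature of the problem.
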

	
	\begin{proof}
		Fixed $q> 0$, the demonstration follows by induction on $p$. 	\end{proof}
	
	\begin{proposition}\label{expo}
		Let $\otimes:L^2 \rightarrow L$ be a strict t-norm\footnote{We recall that a t-norm $T$ is said to be strict, if $T$ is continuous and strictly monotone, i.e., $T(x,y)<_LT(x,z)$ whenever $0<_L x$ and $y<_L z$} on a lattice $L$. Then \linebreak $\lambda_{\otimes} ^{(p)} \leq_L \lambda_{\otimes} ^{(q)} \Leftrightarrow p \geq q$, for all $0_L<_L \lambda <_L 1_L$ and $p,q \in \mathbb{N}$.
	\end{proposition}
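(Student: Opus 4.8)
The plan is to reduce everything to a single monotonicity property of the power sequence: that $\left(\lambda_{\otimes}^{(n)}\right)_{n\in\mathbb{N}}$ is strictly decreasing whenever $0_L<_L\lambda<_L1_L$. Once this is in hand, both implications of the biconditional follow from transitivity and antisymmetry of $\leq_L$.

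First I would record the easy, non-strict half. From the recursive clause $\lambda_{\otimes}^{(n)}=\otimes\!\left(\lambda,\lambda_{\otimes}^{(n-1)}\right)$ of Definition \ref{grau_k} and the general inequality $\otimes(x,y)\leq_L y$ valid for any t-norm (noted in Section 2), we get $\lambda_{\otimes}^{(n)}\leq_L\lambda_{\otimes}^{(n-1)}$ for every $n\geq 1$, so the sequence is non-increasing; transitivity of $\leq_L$ then gives $p\geq q\Rightarrow\lambda_{\otimes}^{(p)}\leq_L\lambda_{\otimes}^{(q)}$, which is the $(\Leftarrow)$ direction (the case $p=q$ being reflexivity).

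The crucial step is to upgrade this to strict monotonicity using that $\otimes$ is a \emph{strict} t-norm. I would prove by induction on $n$ that $\lambda_{\otimes}^{(n+1)}<_L\lambda_{\otimes}^{(n)}$. The base case $n=0$ is exactly the hypothesis, since $\lambda_{\otimes}^{(1)}=\lambda<_L1_L=\lambda_{\otimes}^{(0)}$. For the inductive step, assume $\lambda_{\otimes}^{(n)}<_L\lambda_{\otimes}^{(n-1)}$ (so these two powers are comparable); because $0_L<_L\lambda$, strict monotonicity of $\otimes$ in its second argument yields $\lambda_{\otimes}^{(n+1)}=\otimes\!\left(\lambda,\lambda_{\otimes}^{(n)}\right)<_L\otimes\!\left(\lambda,\lambda_{\otimes}^{(n-1)}\right)=\lambda_{\otimes}^{(n)}$. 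Chaining these strict inequalities and using transitivity of $<_L$ gives $p>q\Rightarrow\lambda_{\otimes}^{(p)}<_L\lambda_{\otimes}^{(q)}$. For $(\Rightarrow)$ I then argue by contradiction: if $\lambda_{\otimes}^{(p)}\leq_L\lambda_{\otimes}^{(q)}$ but $p<q$, the previous line gives $\lambda_{\otimes}^{(q)}<_L\lambda_{\otimes}^{(p)}$, and combining with $\lambda_{\otimes}^{(p)}\leq_L\lambda_{\otimes}^{(q)}$ produces $\lambda_{\otimes}^{(q)}<_L\lambda_{\otimes}^{(q)}$, which is impossible; hence $p\geq q$.

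The only real obstacle is a bookkeeping one: one must invoke the strict-monotonicity clause of a strict t-norm with its correct side conditions, namely that the fixed argument $\lambda$ satisfies $0_L<_L\lambda$ and that the two variable arguments are comparable (guaranteed here by the induction hypothesis, which is why the strictly-decreasing claim, rather than a weaker statement, is the right thing to induct on). This is precisely where both exclusions $\lambda\neq 0_L$ and $\lambda\neq 1_L$ are used; note that the continuity half of ``strict'' plays no role in the argument.
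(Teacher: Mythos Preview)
Your proof is correct and follows essentially the same approach as the paper: both arguments hinge on the fact that the power sequence $\left(\lambda_{\otimes}^{(n)}\right)_{n\in\mathbb{N}}$ is strictly decreasing when $0_L<_L\lambda<_L1_L$, and then derive both implications from that. The paper's proof simply asserts $0_L<_L\lambda_{\otimes}^{(k+1)}<_L\lambda_{\otimes}^{(k)}$ for all $k$ and proceeds by contraposition, whereas you supply the explicit induction and are more careful about the side conditions needed to invoke strict monotonicity; your observation that continuity is unused is also accurate.
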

	
	\begin{proof}
			$(\Rightarrow)$ We will prove by contraposition. If $p< q$ then because, $\otimes$ is strict and $0_L<_L \lambda <_L 1_L$, we have that $0_L <_L \lambda_{\otimes} ^{(k+1)} <_L \lambda_{\otimes} ^{(k)}$ for each $k\in\mathbb{N}$. Therefore $\lambda_{\otimes} ^{(q)} <_L \lambda_{\otimes} ^{(p)}$. \\ 
			$(\Leftarrow)$ If $p> q$ then because, $\otimes$ is strict and $0_L<_L \lambda <_L 1_L$, we have that $0_L <_L \lambda_{\otimes} ^{(k+1)} <_L \lambda_{\otimes} ^{(k)}$ for each $k\in\mathbb{N}$. Therefore $\lambda_{\otimes} ^{(p)} <_L \lambda_{\otimes} ^{(q)}$. 
	
	\end{proof}
	
	\begin{remark}
	Notice that the power notation $\lambda_{\otimes} ^{(n)}$ can be seen as the particular case of a non-increasing net $\left(a_n\right)_{n\in \mathbb{N}}$ whose general term is $a_n = \lambda_{\otimes} ^{(n)}$ for all $\lambda \in L$.
	\end{remark}
	
	\begin{definition}[Extended homogeneity]\label{general}
		 Let $L$ be a bounded lattice and \linebreak$f: L^2 \rightarrow L$ be a function and $k\in \mathbb{N}^*$. A function $F:L^n \to L$ is called an homogeneous extension of order k with respect to $f$ (or just $f^k$-homogeneous) if \begin{equation}
		F(f(\lambda,x_1), \ldots ,f(\lambda,x_n)) = f(\lambda_f ^{(k)},F(x_1,\ldots , x_n)) \label{14}
		\end{equation} holds for all $\lambda , x_1 , \ldots , x_n \in L$.
	\end{definition}
	
	\begin{remark}
		The Definition \ref{general} generalizes the classical notion of homogeneity of order $k$ (cf. Identity (\ref{restrita})). In fact, when $L=[0,1]$ and $f$ is the 2-dimensional product as defined in (\ref{product}) it is clear to see that for any $k$-homogeneous function $F:[0,1]^n \to [0,1]$ we have 
		\begin{eqnarray*}
			F(\Pi(\lambda, x_1), \ldots , \Pi(\lambda, x_n))&=&F(\lambda x_1, \ldots , \lambda x_n)\\
			&=&\lambda^k F(x_1, \ldots , x_n)\\
			&=&\Pi(\lambda_{\Pi} ^{(k)}, F(x_1, \ldots , x_n)), 
		\end{eqnarray*} since that $\lambda_{\Pi} ^{(k)}=\lambda^k$ by induction. Thus $F$ is also $\Pi^k$-homogeneous.
	\end{remark}
	
	\begin{theorem}
		Let $L$ be a bounded lattice, $\rho:L \to L$ be an automorphism  and $f:L^2 \rightarrow L$ be a function such that  \begin{equation} \label{comut}\rho(f(x,y))=f(\rho(x), \rho(y)) \ \forall x, y \in L \end{equation} If $F:L^n \rightarrow L$ is a $f^ k$-homogeneous function then $F^{\rho}$ is also $f^ k$-homogeneous.
	\end{theorem}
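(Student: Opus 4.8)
The plan is to verify directly that $F^{\rho}$ satisfies the defining identity of Definition~\ref{general}, namely that
\begin{equation*}
F^{\rho}(f(\lambda,x_1),\ldots,f(\lambda,x_n)) = f\bigl(\lambda_f^{(k)}, F^{\rho}(x_1,\ldots,x_n)\bigr)
\end{equation*}
holds for all $\lambda,x_1,\ldots,x_n \in L$. The only tools needed are the definition of the action of an automorphism $F^{\rho}(x_1,\ldots,x_n)=\rho^{-1}(F(\rho(x_1),\ldots,\rho(x_n)))$ from Equation~(\ref{eq-act-aut}), the commutation hypothesis~(\ref{comut}) together with its inverse counterpart $\rho^{-1}(f(x,y))=f(\rho^{-1}(x),\rho^{-1}(y))$ furnished by Proposition~\ref{automorfismo}, and the assumed $f^k$-homogeneity of $F$.

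First I would expand the left-hand side using~(\ref{eq-act-aut}): it equals $\rho^{-1}(F(\rho(f(\lambda,x_1)),\ldots,\rho(f(\lambda,x_n))))$. Applying~(\ref{comut}) inside each argument turns $\rho(f(\lambda,x_i))$ into $f(\rho(\lambda),\rho(x_i))$, so the expression becomes $\rho^{-1}(F(f(\rho(\lambda),\rho(x_1)),\ldots,f(\rho(\lambda),\rho(x_n))))$. Now the $f^k$-homogeneity of $F$, applied with scalar $\rho(\lambda)$ and arguments $\rho(x_1),\ldots,\rho(x_n)$, rewrites the inner term as $f\bigl(\rho(\lambda)_f^{(k)},F(\rho(x_1),\ldots,\rho(x_n))\bigr)$. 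The next step is the one small technical point: I need $\rho(\lambda)_f^{(k)} = \rho(\lambda_f^{(k)})$, which follows by an easy induction on $k$ using Definition~\ref{grau_k} and~(\ref{comut}) (the base cases $k=0,1$ give $1_L=\rho(1_L)$ and $\rho(\lambda)=\rho(\lambda)$, and the inductive step is $\rho(\lambda)_f^{(k)} = f(\rho(\lambda),\rho(\lambda)_f^{(k-1)}) = f(\rho(\lambda),\rho(\lambda_f^{(k-1)})) = \rho(f(\lambda,\lambda_f^{(k-1)})) = \rho(\lambda_f^{(k)})$). Substituting this in, the inner term is $f\bigl(\rho(\lambda_f^{(k)}),F(\rho(x_1),\ldots,\rho(x_n))\bigr)$.

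Finally I would push $\rho^{-1}$ through: using $\rho^{-1}(f(u,v)) = f(\rho^{-1}(u),\rho^{-1}(v))$ from Proposition~\ref{automorfismo} with $u=\rho(\lambda_f^{(k)})$ and $v = F(\rho(x_1),\ldots,\rho(x_n))$, the whole left-hand side becomes $f\bigl(\lambda_f^{(k)},\rho^{-1}(F(\rho(x_1),\ldots,\rho(x_n)))\bigr) = f\bigl(\lambda_f^{(k)},F^{\rho}(x_1,\ldots,x_n)\bigr)$, which is exactly the right-hand side. Hence $F^{\rho}$ is $f^k$-homogeneous.

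I do not anticipate a genuine obstacle here; the argument is a chain of substitutions. The only place that requires a moment's care is the auxiliary identity $\rho(\lambda)_f^{(k)} = \rho(\lambda_f^{(k)})$, so I would state and prove that lemma (or fold its short induction into the main proof) before running the computation, and I would be explicit that the hypothesis of Proposition~\ref{automorfismo} is met precisely because of~(\ref{comut}), so that the inverse-commutation identity is available.
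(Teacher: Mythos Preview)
Your proposal is correct and follows essentially the same route as the paper's proof: expand $F^{\rho}$ via~(\ref{eq-act-aut}), push $\rho$ through $f$ using~(\ref{comut}), apply the $f^k$-homogeneity of $F$ with scalar $\rho(\lambda)$, invoke the auxiliary identity $\rho(\lambda)_f^{(k)}=\rho(\lambda_f^{(k)})$, and finally use Proposition~\ref{automorfismo} to pull $\rho^{-1}$ back through $f$. Your treatment is in fact slightly more explicit than the paper's, since you spell out the short induction for the power identity rather than merely citing Definition~\ref{grau_k} and~(\ref{comut}).
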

\begin{proof}
Notice that $\left( \rho(\lambda) \right) _f ^{(k)}\ = \rho\left(\lambda_f ^{(k)}\right)$ by Definition \ref{grau_k} and Identity (\ref{comut}). Hence, assuming $F$ is $f^k$-homogeneous it follows that 
$$
{\footnotesize \begin{array}{rcll}		
 F^{\rho} \left( f(\lambda,x_1), \ldots , f(\lambda,x_n)\right) & = & \rho^{-1}\left(F(\rho(f(\lambda,x_1)), \ldots , \rho(f(\lambda,x_n)))\right) & \text{by} \ (\ref{eq-act-aut})\\  
& = & \rho^{-1}\left(F(f(\rho(\lambda),\rho(x_1)), \ldots , f(\rho(\lambda),\rho(x_n)))\right) & \text{by} \ (\ref{comut})\\
& = & \rho^{-1} \left( f \left( \left(\rho(\lambda)\right)_f ^{(k)}\right), F\left(\rho(x_1), \ldots , \rho(x_n)\right)\right) & \text{by} \ (\ref{14})\\
& = & f \left(\rho^{-1} \left( \rho\left(\lambda_f ^{(k)}\right)\right) , \rho^{-1} \left( F\left(\rho(x_1), \ldots , \rho(x_n)\right)\right)\right) & \text{by} \ (\ref{auto_comut})\\
& = & f \left(\lambda_f ^{(k)}, F^{\rho}(x_1, \ldots , x_n) \right). & \\
\end{array}}
$$
	\end{proof}
	
	\begin{theorem}
		Let $\otimes$ a t-norm and $\oplus$ a t-conorm, both on bounded lattice $L$. Let $F_i:L^2 \to L$  be a finite distributive family of $\otimes^{k_i}$-homogeneous weight functions ($i = 1, 2, \ldots, n$) and $F:L^2 \to L$ given by $$F(x,y)=\bigoplus\limits_{i=1} ^n w_i \otimes F_i(x,y),$$ where scalar weights $w_i \in L$ are such that $\bigoplus\limits_{i=1} ^n w_i=1_L$. Then $F$ is $\otimes^{k}$-homogeneous if and only if for each $i \in \{1,2,\ldots,n\}$ such that $w_i >_L 0_L$ it holds that $k_i=k$.
	\end{theorem}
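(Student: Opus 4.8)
The plan is to prove the equivalence by computing $F(\otimes(\lambda,x),\otimes(\lambda,y))$ directly and comparing it term‑by‑term with $\otimes(\lambda_{\otimes}^{(k)},F(x,y))$. First I would fix $\lambda,x,y \in L$ and expand the left‑hand side using the definition of $F$ and the fact that each $F_i$ is $\otimes^{k_i}$‑homogeneous:
\begin{align*}
F(\otimes(\lambda,x),\otimes(\lambda,y)) &= \bigoplus_{i=1}^{n} w_i \otimes F_i(\otimes(\lambda,x),\otimes(\lambda,y))\\
 &= \bigoplus_{i=1}^{n} w_i \otimes \otimes\bigl(\lambda_{\otimes}^{(k_i)},F_i(x,y)\bigr).
\end{align*}
On the other side, using that $(w_1,\ldots,w_n)$ (applied to any argument, i.e. that $\mathcal{F}$ is a distributive family of weight functions) satisfies the distributivity Equation~(\ref{dist_weigth_vector}) together with $\bigoplus_{i=1}^n w_i = 1_L$, and commutativity/associativity of $\otimes$, I would rewrite
\[
\otimes(\lambda_{\otimes}^{(k)},F(x,y)) = \otimes\Bigl(\lambda_{\otimes}^{(k)},\bigoplus_{i=1}^{n} w_i \otimes F_i(x,y)\Bigr) = \bigoplus_{i=1}^{n} w_i \otimes \otimes\bigl(\lambda_{\otimes}^{(k)},F_i(x,y)\bigr).
\]
Thus $F$ is $\otimes^k$‑homogeneous exactly when $\bigoplus_{i=1}^{n} w_i \otimes \otimes(\lambda_{\otimes}^{(k_i)},F_i(x,y)) = \bigoplus_{i=1}^{n} w_i \otimes \otimes(\lambda_{\otimes}^{(k)},F_i(x,y))$ for all $\lambda,x,y$.

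For the ``if'' direction this is immediate: if $k_i = k$ whenever $w_i >_L 0_L$, then for the indices with $w_i = 0_L$ the corresponding summands on both sides are $0_L \otimes (\cdots) = 0_L$ (since $0_L$ is the annihilator of $\otimes$, because $\otimes(x,0_L) \leq_L x \wedge 0_L = 0_L$), and the remaining summands coincide termwise; hence the two $\oplus$‑sums are equal and $F$ is $\otimes^k$‑homogeneous.

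For the ``only if'' direction I would argue by contraposition: suppose some $i_0$ has $w_{i_0} >_L 0_L$ but $k_{i_0} \neq k$. The strategy is to evaluate at a well‑chosen $\lambda$ and $x=y=1_L$. Using (OL3)‑type boundary behaviour of the weight functions $F_i$ (they are weight \emph{quasi-overlaps} in the intended application, but more generally one uses $F_i(1_L,1_L)=1_L$), the $i_0$‑summand on the left is $w_{i_0} \otimes \otimes(\lambda_{\otimes}^{(k_{i_0})},1_L) = w_{i_0} \otimes \lambda_{\otimes}^{(k_{i_0})}$, and on the right $w_{i_0} \otimes \lambda_{\otimes}^{(k)}$. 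Here I would invoke Proposition~\ref{expo}: for a strict t‑norm $\otimes$ and $0_L <_L \lambda <_L 1_L$ the net $(\lambda_{\otimes}^{(n)})$ is strictly decreasing, so $\lambda_{\otimes}^{(k_{i_0})} \neq \lambda_{\otimes}^{(k)}$, and in fact they are strictly ordered; choosing $\lambda$ so that $w_{i_0} \otimes \lambda_{\otimes}^{(k_{i_0})}$ and $w_{i_0} \otimes \lambda_{\otimes}^{(k)}$ differ (again using strictness of $\otimes$ below $w_{i_0}$) breaks the equality of the two $\oplus$‑sums, contradicting $\otimes^k$‑homogeneity of $F$.

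The main obstacle is the ``only if'' direction: from an inequality between two individual summands one must conclude an inequality between the two $\bigoplus$‑aggregated values, and in a general bounded lattice the t‑conorm $\oplus$ need not be cancellative, so distinct summands can aggregate to the same value. I expect this is why the theorem needs the weight functions to form a \emph{distributive} family and $\otimes$ to be \emph{strict}: distributivity lets one ``factor out'' the offending argument cleanly, and strictness of $\otimes$ — via Proposition~\ref{expo} — guarantees that the powers $\lambda_{\otimes}^{(k_{i_0})}$ and $\lambda_{\otimes}^{(k)}$ are genuinely separated rather than merely distinct, so that a suitable choice of $\lambda$ (and possibly of $x,y$ among the weight‑function arguments) forces the discrepancy to survive the $\oplus$‑aggregation. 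Carefully isolating the $i_0$‑term — e.g. by exploiting the weight‑vector normalisation $\bigoplus_i w_i = 1_L$ to control the other summands — is the delicate bookkeeping step; the rest is routine manipulation with commutativity, associativity and the annihilator property of $0_L$.
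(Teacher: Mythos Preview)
Your setup and your ``if'' direction coincide with the paper's proof: both expand $F(\lambda\otimes x,\lambda\otimes y)$ using the $\otimes^{k_i}$-homogeneity of the $F_i$, expand $\lambda_{\otimes}^{(k)}\otimes F(x,y)$ using the distributivity of the weight vector, and match terms (dropping indices with $w_i=0_L$ via the annihilator property). For the converse the paper is considerably terser than your plan: it simply records the identity
\[
\bigoplus_{i\in I}\bigl(w_i\otimes\lambda_{\otimes}^{(k)}\bigr)\otimes F_i(x,y)\;=\;\bigoplus_{i\in I}\bigl(w_i\otimes\lambda_{\otimes}^{(k_i)}\bigr)\otimes F_i(x,y)\quad\text{for all }\lambda\in L,
\]
and asserts that this ``implies $k=k_i$ for all $i\in I$'', with no evaluation at special points, no appeal to Proposition~\ref{expo}, and no discussion of $\oplus$-cancellation.

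Where your proposal diverges is in the machinery you bring in to justify that last implication: you invoke $F_i(1_L,1_L)=1_L$ and strictness of $\otimes$ (via Proposition~\ref{expo}), neither of which is among the stated hypotheses of the theorem. So as written your argument for the ``only if'' direction rests on assumptions you are not given. You are right that the passage from equality of the two $\oplus$-sums to $k_i=k$ is the genuinely delicate point, and your diagnosis of why it is delicate (non-cancellativity of $\oplus$, possible collapse of distinct summands) is accurate; but the paper does not supply the extra hypotheses you propose, nor does it carry out the isolation-of-the-$i_0$-term argument you sketch---it simply treats the implication as immediate. In short: your approach matches the paper structurally, is more scrupulous about the hard step, but imports hypotheses the theorem does not grant; the paper's own proof leaves that step unargued.
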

	
	\begin{proof} Assume that $F$ is $\otimes^{k}$-homogeneous and consider the set $$I=\{i \in \{1, \ldots , n\}|\, w_i >_L 0_L\}.$$ Then, since each $F_i$ is $\otimes^{k_i}$-homogeneous we have that $$F(\lambda \otimes x, \lambda \otimes y)=\bigoplus\limits_{i\in I} w_i \otimes F_i(\lambda \otimes x,\lambda \otimes y)= \bigoplus\limits_{i\in I} w_i \otimes \left(\lambda_{\otimes} ^{(k_i)} \otimes F_i(x,y)\right)$$ and also $$F(\lambda \otimes x, \lambda \otimes y)=\lambda_{\otimes} ^{(k)} \otimes F(x,y) =\lambda_{\otimes} ^{(k)} \otimes \left( \bigoplus\limits_{i\in I} w_i \otimes F_i(x,y)\right).$$ Therefore, since family of $\otimes^{k_i}$-homogeneous weight functions $\{F_i\}$ is distributive and $\otimes$ is associative, it follows that $$\bigoplus\limits_{i\in I} \left( w_i \otimes \lambda_{\otimes} ^{(k)} \right)\otimes F_i(x,y) = \bigoplus\limits_{i\in I} \left( w_i \otimes \lambda_{\otimes} ^{(k_i)}\right) \otimes F_i(x,y)$$ for each $\lambda \in L$, wich implies that $k=k_i$ for all $i \in I$. \\
	Conversely, assuming $k=k_i$ for all $i \in I$ we have \begin{eqnarray*} F(\lambda \otimes x, \lambda \otimes y)&=&\bigoplus\limits_{i\in I} w_i \otimes F_i(\lambda \otimes x,\lambda \otimes y)\\
			&=& \bigoplus\limits_{i\in I} w_i \otimes \left(\lambda_{\otimes} ^{(k)} \otimes F_i(x,y)\right)\\
			&=& \lambda_{\otimes} ^{(k)} \otimes \left( \bigoplus\limits_{i\in I} w_i \otimes F_i(x,y)\right)\\
			&=&\lambda_{\otimes} ^{(k)} \otimes F(x,y). \end{eqnarray*} Therefore $F$ is $\otimes^{k}$-homogeneous. \end{proof}
	
		\begin{theorem} \label{const_overlap}
	Let $\otimes: L^2 \to L$ be a ($\wedge$ and $\vee$)-distributive t-norm on a bounded lattice $L$ such that the pair $\langle L, \otimes \rangle$ is divisible. A function $F:L^2 \to L$ is $\otimes ^k$-homogeneous, has $1_L$ as neutral element and satisfies $F(x,y)=F(x\wedge y,x\vee y)$ for all $x,y \in L$ if and only if 
		\begin{equation}\label{nonoverlap}
	F(x,y)= (x\wedge y) \otimes \left(x\vee y\right)_{\otimes} ^{(k-1)}, \, \mbox{ for all } x, y \in L.	
		\end{equation}
	\end{theorem}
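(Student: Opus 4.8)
The claim is an ``if and only if'', so I would prove the two directions separately, with the forward direction being the substantive one.

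\emph{The easy direction} ($\Leftarrow$): Assume $F$ is given by the formula (\ref{nonoverlap}). First I would check $F(x,y) = F(x\wedge y, x\vee y)$, which is immediate since the right-hand side only depends on $x\wedge y$ and $x\vee y$ and these are symmetric in $x,y$ (so incidentally $F$ is also symmetric). Next I would check $1_L$ is a neutral element: $F(x,1_L) = (x\wedge 1_L)\otimes (x\vee 1_L)_{\otimes}^{(k-1)} = x \otimes (1_L)_{\otimes}^{(k-1)} = x\otimes 1_L = x$, using Definition \ref{grau_k} which gives $(1_L)_{\otimes}^{(m)} = 1_L$ for all $m$ (by induction, since $1_L$ is neutral for $\otimes$) and the boundary condition of the t-norm. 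Finally, for $\otimes^k$-homogeneity I would compute $F(\lambda\otimes x, \lambda\otimes y)$: using that $\otimes$ is ($\wedge$ and $\vee$)-distributive, $(\lambda\otimes x)\wedge(\lambda\otimes y) = \lambda\otimes(x\wedge y)$ and likewise for $\vee$; then $\bigl(\lambda\otimes(x\vee y)\bigr)_{\otimes}^{(k-1)} = \lambda_{\otimes}^{(k-1)} \otimes (x\vee y)_{\otimes}^{(k-1)}$ — this needs an auxiliary lemma that $(a\otimes b)_{\otimes}^{(m)} = a_{\otimes}^{(m)}\otimes b_{\otimes}^{(m)}$, which follows from commutativity, associativity and an easy induction on $m$. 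Assembling and using associativity/commutativity of $\otimes$ one rearranges $(\lambda\otimes(x\wedge y))\otimes(\lambda_{\otimes}^{(k-1)}\otimes(x\vee y)_{\otimes}^{(k-1)})$ into $\bigl(\lambda\otimes\lambda_{\otimes}^{(k-1)}\bigr)\otimes\bigl((x\wedge y)\otimes(x\vee y)_{\otimes}^{(k-1)}\bigr) = \lambda_{\otimes}^{(k)}\otimes F(x,y)$, using $\lambda\otimes\lambda_{\otimes}^{(k-1)} = \lambda_{\otimes}^{(k)}$ from (\ref{expoente}) together with commutativity.

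\emph{The hard direction} ($\Rightarrow$): Assume $F$ has $1_L$ as neutral element, is $\otimes^k$-homogeneous, and satisfies $F(x,y) = F(x\wedge y, x\vee y)$. Fix $x,y\in L$ and set $m = x\wedge y$, $M = x\vee y$, so $m\leq_L M$. By the last hypothesis it suffices to express $F(m,M)$. Here is where divisibility of $\langle L,\otimes\rangle$ enters: since $m\leq_L M$, there exists $z\in L$ with $m = M\otimes z = z\otimes M$. The idea is then to apply $\otimes^k$-homogeneity ``backwards'' with scale factor $\lambda = M$ on suitable arguments. Concretely, I want arguments $u,v$ with $M\otimes u = m$ and $M\otimes v = M$; take $u = z$ and $v = 1_L$. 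Then
\begin{equation}
F(m,M) = F(M\otimes z, M\otimes 1_L) = M_{\otimes}^{(k)}\otimes F(z,1_L) = M_{\otimes}^{(k)}\otimes z,
\end{equation}
using $\otimes^k$-homogeneity and then the neutral element $F(z,1_L) = z$. Now $M_{\otimes}^{(k)} = M\otimes M_{\otimes}^{(k-1)}$ by (\ref{expoente}), so $F(m,M) = M\otimes M_{\otimes}^{(k-1)}\otimes z = (M\otimes z)\otimes M_{\otimes}^{(k-1)} = m\otimes M_{\otimes}^{(k-1)}$ by commutativity and associativity, i.e. $F(x,y) = (x\wedge y)\otimes(x\vee y)_{\otimes}^{(k-1)}$, which is exactly (\ref{nonoverlap}).

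\textbf{Main obstacle.} The delicate point in the forward direction is the legitimacy of invoking $\otimes^k$-homogeneity with $\lambda = M$ and the independence of the answer from the choice of the divisor $z$; one must be slightly careful that the argument $z$ (with $m = M\otimes z$) need not be unique, but the final expression $m\otimes M_{\otimes}^{(k-1)}$ is manifestly independent of $z$, so no ambiguity remains. A secondary routine point, needed in both directions, is the auxiliary identity $(a\otimes b)_{\otimes}^{(m)} = a_{\otimes}^{(m)}\otimes b_{\otimes}^{(m)}$ (for all $m\in\mathbb{N}$), proved by induction using commutativity and associativity of $\otimes$; I would state and prove this as a short preliminary claim before the main argument, or simply absorb it inline. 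Everything else is bookkeeping with the axioms of a ($\wedge$ and $\vee$)-distributive, divisible t-norm and the power notation of Definition \ref{grau_k}.
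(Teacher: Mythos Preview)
Your proposal is correct and follows essentially the same approach as the paper: the forward direction is identical (use divisibility to write $x\wedge y = (x\vee y)\otimes z$, then apply $\otimes^k$-homogeneity with $\lambda = x\vee y$ and the neutral element to extract the formula), and the backward direction is the same computation. The only difference is that in the $(\Leftarrow)$ direction the paper splits into the cases $x\coh y$ and $x\parallel y$, whereas you go directly through the $(\wedge,\vee)$-distributivity of $\otimes$ in one stroke; your version is slightly cleaner, and your explicit isolation of the auxiliary identity $(a\otimes b)_{\otimes}^{(m)} = a_{\otimes}^{(m)}\otimes b_{\otimes}^{(m)}$ is a welcome clarification that the paper uses tacitly.
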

	\begin{proof}
		$(\Rightarrow)$ Suppose $F(x,y)= F(x\wedge y, x\vee y)$ for all $x,y\in L$. Since the pair $\langle L, \otimes \rangle$ is divisible and $x \wedge y \leq_L x\vee y$, exists $m \in L$ such that \linebreak $(x\vee y) \otimes m = x\wedge y$. Therefore, since $F$ is $\otimes^k$-homogeneous with neutral element $1_L$ one has that \begin{eqnarray*}
			F(x,y)&=&F(x\wedge y, x\vee y)\\
			&=&  F\left(m\otimes (x\vee y), x\vee y\right)\\
			&=& \left(x\vee y\right)_{\otimes} ^{(k)} \otimes F(m,1_L)\\
			&=&m\otimes \left(x \vee y\right)_{\otimes} ^{(k)}\\
			&=&(x\wedge y) \otimes \left(x\vee y\right)_{\otimes} ^{(k-1)}.\end{eqnarray*}
		
		\noindent
		$(\Leftarrow)$ Consider $F$ as defined in Equation (\ref{nonoverlap}). $1_L$ is the neutral element of $F$ since for all $x \in L$ and $k \in \mathbb{N}^*$ we have $x\wedge 1_L = x$, $x\vee 1_L = 1_L$ and $(1_L)_{\otimes} ^{(k-1)} = 1_L$ which implies that $F(1_L,x)=F(x,1_L)=x \otimes 1_L = x$. Moreover, if $r=x\wedge y$ and $s=x\vee y$ then, by Equation (\ref{nonoverlap}), \begin{equation*}
		F(r,s)=(r\wedge s) \otimes (r\vee s)_{\otimes} ^{(k-1)}=r\otimes s_{\otimes} ^{(k-1)}.		
		\end{equation*} Therefore $F(x,y)=F(x\wedge y,x\vee y)$ for all $x,y\in L$. Now, if $x\coh y$,  without loss of generality we assume that $x\leq_L y$ and hence  $\lambda \otimes x\leq_L \lambda \otimes y$ for all $\lambda \in L$. Then, it holds that  \begin{eqnarray*} 
			F \left(\lambda \otimes x, \lambda \otimes y \right) &=& (\lambda \otimes x) \otimes \left( \lambda \otimes y\right)_{\otimes} ^{(k-1)}\\
			&=&(\lambda)_{\otimes} ^{(k)} \otimes \left( x \otimes  (y)_{\otimes} ^{(k-1)} \right)\\ 
			&=&(\lambda)_\otimes ^{(k)} \otimes ((x\wedge y)\otimes (x\vee y)^{(k-1)}_\otimes)\\
			&=& (\lambda)_{\otimes} ^{(k)} \otimes F(x,y).
		\end{eqnarray*} Finally, if $x \parallel y$ we must show that $F(\lambda \otimes x, \lambda \otimes y)= \lambda_{\otimes} ^{(k)} \otimes F(x,y), \, \forall \lambda \in L$. So, by Equation (\ref{nonoverlap}), \begin{equation}\label{equasup}
	F(\lambda \otimes x,\lambda \otimes y)=\left[(\lambda \otimes x)\wedge (\lambda \otimes y)\right] \otimes \left[(\lambda \otimes x)\vee (\lambda \otimes y)\right]_{\otimes} ^{(k-1)}.
\end{equation} Moreover, since $\otimes$ is ($\wedge$ and $\vee$)-distributive, we have \begin{eqnarray*}
(\lambda \otimes x)\wedge (\lambda \otimes y)&=&\lambda \otimes(x\wedge y)\\
(\lambda \otimes x)\vee (\lambda \otimes y)&=&\lambda \otimes(x\vee y).
\end{eqnarray*} Therefore the Equation (\ref{equasup}) can be rewritten as \begin{eqnarray*}
F(\lambda \otimes x,\lambda \otimes y)&=& \left[\lambda \otimes (x\wedge y)\right] \otimes \left[\lambda \otimes (x\vee y)\right]_{\otimes} ^{(k-1)}\\
&=& \lambda_{\otimes} ^{(k)} \otimes (x\wedge y) \otimes \left(x\vee y\right)_{\otimes} ^{(k-1)}\\
&=& \lambda_{\otimes} ^{(k)} \otimes  F(x,y).
\end{eqnarray*} Therefore $F$ is $\otimes ^k$-homogeneous. \end{proof}
	
	\begin{remark}\label{obs4.4}
		Notice that the $F$ function defined in Equation (\ref{nonoverlap})  not necessarilly is a  quasi-overlap function since axiom \textbf{(OL2)} of Definition \ref{overlaps} can fail when $x \parallel y$.
	\end{remark}
	
	\begin{corollary}\label{col4.2}
		The function $F$ as defined in Equation (\ref{nonoverlap}) is a quasi-overlap function if and only if $L$ is a chain.
	\end{corollary}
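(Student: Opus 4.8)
The plan is to prove the two implications separately; in each direction the real work is to check the quasi-overlap axioms (OL1)--(OL4) of Definition \ref{quasi_overlaps} for the explicit map $F(x,y)=(x\wedge y)\otimes(x\vee y)_\otimes^{(k-1)}$ of Equation (\ref{nonoverlap}), using the fact --- already contained in Theorem \ref{const_overlap} --- that such an $F$ has $1_L$ as neutral element and satisfies $F(x,y)=F(x\wedge y,x\vee y)$ for all $x,y\in L$.

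For sufficiency ($L$ a chain $\Rightarrow$ $F$ a quasi-overlap), I would first observe that on a chain $x\wedge y$ and $x\vee y$ are just the smaller and the larger of $x,y$, so by the symmetry built into the formula it is enough to treat the case $x\leqslant_L y$, where $F(x,y)=x\otimes y_\otimes^{(k-1)}$. Axiom (OL1) is immediate from commutativity of $\wedge$ and $\vee$ in the defining expression. For (OL4) I would prove by induction on $n$, using Definition \ref{grau_k} and the monotonicity of $\otimes$, that $\lambda\mapsto\lambda_\otimes^{(n)}$ is non-decreasing, and then combine this with monotonicity of $\wedge$, $\vee$ and $\otimes$; this step is valid over any bounded lattice. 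For (OL3) the inequality $F(x,y)\leqslant_L x\wedge y$ (which holds because $\otimes(u,v)\leqslant_L u$) shows that $F(x,y)=1_L$ forces $x\wedge y=1_L$, hence $x=y=1_L$, while the converse follows from $(1_L)_\otimes^{(j)}=1_L$ for all $j$; again this works on any lattice. The only genuinely chain-sensitive axiom is (OL2): its ``if'' part is automatic, since $x=0_L$ (or $y=0_L$) makes $x\wedge y=0_L$ and therefore $F(x,y)=0_L$; for the ``only if'' part one argues that $x,y\neq 0_L$ implies $y_\otimes^{(k-1)}\neq 0_L$ and then $x\otimes y_\otimes^{(k-1)}\neq 0_L$ --- here one uses that $\otimes$ is positive (has no zero divisors), and the chain hypothesis enters only through the elementary fact that on a chain $u\wedge v=0_L$ already implies $u=0_L$ or $v=0_L$.

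For necessity ($F$ a quasi-overlap $\Rightarrow$ $L$ a chain) I would argue by contraposition, in the spirit of Remark \ref{obs4.4}. Assume $L$ is not a chain and pick incomparable elements $x\parallel y$. Then $x\neq 0_L$ and $y\neq 0_L$ (each is comparable with $0_L$), whereas $x\wedge y<_L x$ and $x\wedge y<_L y$. Rewriting $F(x,y)=F(x\wedge y,x\vee y)=(x\wedge y)\otimes(x\vee y)_\otimes^{(k-1)}$ exhibits the breakdown of (OL2): when the incomparable pair can be chosen with $x\wedge y=0_L$ one gets $F(x,y)=0_L$ while $x,y\neq 0_L$, contradicting (OL2), so $F$ is not a quasi-overlap.

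The step I expect to be the main obstacle is precisely this necessity direction. The contraposition above only yields a violation of (OL2) when $L$ contains a pair of incomparable elements whose meet is $0_L$; on a lattice in which any two non-zero elements already have a non-zero meet --- for instance the five-element lattice $\{0_L,a,b,c,1_L\}$ in which $a$ is the only element strictly between $0_L$ and the incomparable pair $b,c$, so that $b\wedge c=a\neq 0_L$ and $b\vee c=1_L$, where one checks directly that $F$ still satisfies (OL1)--(OL4) --- the biconditional as literally stated is too optimistic. I would therefore either supplement the hypotheses with a structural assumption on $L$ that forces such a pair to exist (atomicity together with relative complementation, as is typically available in the image-processing lattices motivating the paper, is enough), or restate the corollary so that the condition ``$L$ is a chain'' is replaced by the exact obstruction ``$L$ has no two incomparable elements with meet $0_L$''; the sufficiency half, once positivity of $\otimes$ is granted, is then a routine verification along the lines indicated above.
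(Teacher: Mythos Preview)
The paper's own proof is a single sentence: ``Straightforward from Theorem \ref{const_overlap} and Definition \ref{overlaps}.'' It supplies no argument for either direction, so your proposal is not so much a different route as the only actual attempt at a proof.

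Your analysis is careful and, more importantly, correct in flagging that the biconditional as stated does not hold. Your five-element counterexample is genuine: on the lattice $\{0_L<a<b,c<1_L\}$ with $b\parallel c$, taking $\otimes=\wedge$ (which is $(\wedge,\vee)$-distributive and divisible) gives $F=\wedge$, and one checks directly that $\wedge$ satisfies (OL1)--(OL4) on this lattice since no two non-zero elements meet to $0_L$. Thus $F$ is a quasi-overlap while $L$ is not a chain, so the necessity direction is false in general. The paper's Remark \ref{obs4.4} notices only that (OL2) \emph{can} fail when $x\parallel y$, not that it \emph{must}; your contraposition exposes exactly this gap.

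You are also right to isolate positivity of $\otimes$ as an unstated hypothesis needed for sufficiency. Without it the forward direction fails too: on the chain $L=[0,1]$ with the \L ukasiewicz t-norm (divisible and $(\wedge,\vee)$-distributive) one has $F(0.3,0.3)=0$ while $0.3\neq 0$, violating (OL2). So the corollary as printed is too strong in both directions, and your suggested repair --- either add positivity of $\otimes$ and replace ``$L$ is a chain'' by ``$L$ has no incomparable pair with meet $0_L$'', or simply record the sufficiency half under those hypotheses --- is the right way forward.
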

	
	\begin{proof}
		Straightforward from Theorem \ref{const_overlap} and Definition \ref{overlaps}. \end{proof}

	\begin{theorem}\label{teo4.6}
		Let $\otimes: L^2 \to L$ be a ($\wedge$ and $\vee$)-distributive t-norm on a bounded lattice $L$ such that the pair $\langle L, \otimes \rangle$ is divisible and $F_1, F_2 :L^2 \to L$ be $\otimes$-homogeneous functions of order $k_1$ and $k_2$, respectively, such that $F_1$ and $F_2$ have $1_L$ as neutral element and satisfy $F_i(x,y)=F_i(x\wedge y,x\vee y)$, $i \in \{1,2\}$, for all $x,y \in L$. Under these conditions, $F_1 \leq_{L} F_2$ if, and only if, $k_1 \geq k_2$.
	\end{theorem}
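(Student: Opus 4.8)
The plan is to reduce both $F_1$ and $F_2$ to the closed form supplied by Theorem~\ref{const_overlap}. Since $\otimes$ is a ($\wedge$ and $\vee$)-distributive t-norm with $\langle L,\otimes\rangle$ divisible, and each $F_i$ is $\otimes^{k_i}$-homogeneous, has $1_L$ as neutral element and satisfies $F_i(x,y)=F_i(x\wedge y,x\vee y)$, Theorem~\ref{const_overlap} applies and gives
\[
F_i(x,y)=(x\wedge y)\otimes\left(x\vee y\right)_{\otimes}^{(k_i-1)}\qquad(i=1,2)
\]
for all $x,y\in L$. After this reduction the statement becomes a pointwise comparison of two expressions built from the power notation of Definition~\ref{grau_k}, and the two implications can be treated separately.

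For the implication $k_1\geq k_2\Rightarrow F_1\leq_L F_2$, I would first record that for every $a\in L$ the net $\left(a_{\otimes}^{(n)}\right)_{n\in\mathbb{N}}$ is non-increasing, since $a_{\otimes}^{(n+1)}=a\otimes a_{\otimes}^{(n)}\leq_L a_{\otimes}^{(n)}$ by the elementary t-norm inequality $x\otimes y\leq_L y$. Hence $k_1-1\geq k_2-1$ yields $\left(x\vee y\right)_{\otimes}^{(k_1-1)}\leq_L\left(x\vee y\right)_{\otimes}^{(k_2-1)}$, and monotonicity of $\otimes$ then gives $F_1(x,y)=(x\wedge y)\otimes\left(x\vee y\right)_{\otimes}^{(k_1-1)}\leq_L(x\wedge y)\otimes\left(x\vee y\right)_{\otimes}^{(k_2-1)}=F_2(x,y)$ for all $x,y$, i.e.\ $F_1\leq_L F_2$.

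For the converse I would specialise to the diagonal: taking $y=x$ makes $x\wedge y=x\vee y=x$, so $F_i(x,x)=x\otimes x_{\otimes}^{(k_i-1)}=x_{\otimes}^{(k_i)}$, and $F_1\leq_L F_2$ then forces $x_{\otimes}^{(k_1)}\leq_L x_{\otimes}^{(k_2)}$ for every $x\in L$, in particular for every $x$ with $0_L<_L x<_L 1_L$. Proposition~\ref{expo} applied to such an $x$ delivers $k_1\geq k_2$, which closes the equivalence.

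The step that needs care is this last appeal to Proposition~\ref{expo}, which is stated only for a strict t-norm: for the converse direction to work one actually needs $\otimes$ strict together with the existence of some $\lambda$ with $0_L<_L\lambda<_L 1_L$, since otherwise the power net $\left(\lambda_{\otimes}^{(n)}\right)_{n}$ can be eventually constant (for instance $\otimes=\wedge$ makes $F_1=F_2=\wedge$ regardless of $k_1$ and $k_2$) and the equivalence breaks down. Under those hypotheses the two implications above combine to prove the theorem, and nothing beyond the monotonicity of $\otimes$, Theorem~\ref{const_overlap} and Proposition~\ref{expo} is required.
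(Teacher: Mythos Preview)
Your argument is correct, and your closing remark about the implicit need for $\otimes$ to be strict (together with the existence of some $0_L<_L\lambda<_L1_L$) is a genuine point that the paper's own proof also relies on without comment: both proofs invoke Proposition~\ref{expo} for the implication $F_1\leq_L F_2\Rightarrow k_1\geq k_2$, and that proposition is stated only for strict t-norms.

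The forward direction is essentially identical in the two proofs: specialise to the diagonal to obtain $F_i(\lambda,\lambda)=\lambda_\otimes^{(k_i)}$ and then apply Proposition~\ref{expo}. The backward direction, however, is handled differently. The paper proceeds by a case analysis on whether $x$ and $y$ are comparable: in the comparable case it uses divisibility to write one argument as a $\otimes$-factor of the other and then applies $\otimes^{k_i}$-homogeneity and the neutral element directly; in the incomparable case it uses $F_i(x,y)=F_i(x\wedge y,x\vee y)$ to reduce to the comparable situation. Your route is more economical: once Theorem~\ref{const_overlap} has delivered the closed form $F_i(x,y)=(x\wedge y)\otimes(x\vee y)_\otimes^{(k_i-1)}$, you only need the elementary fact that $a_\otimes^{(n+1)}=a\otimes a_\otimes^{(n)}\leq_L a_\otimes^{(n)}$ for any t-norm (no strictness required) and one application of the monotonicity of $\otimes$. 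This dispenses with the case split and the second use of divisibility, and it makes transparent that the implication $k_1\geq k_2\Rightarrow F_1\leq_L F_2$ holds without any strictness assumption on $\otimes$.
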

	\begin{proof}		
		$(\Rightarrow)$ Notice that for all $\lambda \in L\backslash \{0_L,1_L\}$ we have that $\lambda_{\otimes} ^{(k_1)}, \lambda_{\otimes} ^{(k_2)} \in L\backslash \{0_L,1_L\}$. Moreover, by Equation (\ref{14}): $$F_1\left(\lambda \otimes x, \lambda \otimes y\right)=\lambda_{\otimes} ^{(k_1)}\otimes \,F_1 (x,y) \textrm{ and } F_2\left(\lambda \otimes x, \lambda \otimes y\right)=\lambda_{\otimes} ^{(k_2)}\otimes \, F_2 (x,y).$$ Then, since for each $i=1,2$ the function $F_i:L^2 \rightarrow L$ is $\otimes^{k_i}$-homogeneous, by Theorem \ref{const_overlap}, if $F_1 \leq_{L} F_2$, by Equation (\ref{nonoverlap}), in case $x=y=1_L$, we have that $\lambda_{\otimes} ^{(k_1)} \leq_L\lambda_{\otimes} ^{(k_2)}$ and hence by Proposition \ref{expo}, $k_1 \geq k_2$.\\
		
		\noindent			
		$(\Leftarrow)$ By Theorem \ref{const_overlap} we must consider two cases:
		\begin{enumerate}
			
			\item [Case 1.]	If $x = y = 0_L$ then $F_1(x,y)=0_L\leq_L F_2(x,y)=0_L$. If $0<_L x,y \mbox{ and } \linebreak x\coh y$ then    without loss of generality we can consider that $y\leq_L x$. In this case, since $k_1 \geq k_2$, by Proposition \ref{expo} we have $y_{\otimes} ^{(k_1)} \leq_L y_{\otimes} ^{(k_2)}$ and hence $y_{\otimes} ^{(k_1)} \otimes z \leq y_{\otimes} ^{(k_2)} \otimes z$ for all $y, z \in L$. Then since $\langle L, \otimes \rangle$ is divisible there exists $z\in L$ such that $y \otimes z=x$. Therefore, since $F_i$ is ${\otimes} ^{k_i}$ homogeneous for $i=1,2$ then \begin{eqnarray*}
					F_1(x,y)&=& F_1(y \otimes z,y\otimes 1_L)\\ &=& y_{\otimes} ^{(k_1)} \otimes F_1(z,1_L)\\ &=& y_{\otimes} ^{(k_1)} \otimes z\\&\leq_L& y_{\otimes} ^{(k_2)} \otimes z\\
					&=& y_{\otimes} ^{(k_2)} \otimes F_1(z,1_L)\\&=& F_2(y \otimes z,y\otimes 1_L)\\&=&F_2(x,y).
			\end{eqnarray*}	
		\item [Case 2.]	$x \parallel y$. In this case since the pair $\langle L, \otimes \rangle$ is divisible there exists $m\in L$ such that $(x\vee y) \otimes m=x\wedge y$. Therefore, for $i\in \{1,2\}$, $F_i(x,y)=F_i\left(x\wedge y,x\vee y\right)= F_i\left((x\vee y)\otimes m,x\vee y\right)$.  From this point forward the reasoning is analogous to the previous case.\end{enumerate}\end{proof}
	
	According to the Corollary \ref{col4.2}, when $ L $ is a chain each $ F_i $ $( i = 1,2 )$, is an overlap function. So we have the following result.		
	
	\begin{corollary} 
	Let $\otimes: L^2 \to L$ be a t-norm on a chain $L$ such that the pair $\langle L, \otimes \rangle$ is divisible and let $O_1, O_2 :L^2 \to L$ be $\otimes$-homogeneous quasi-overlap functions of order $k_1$ and $k_2$, respectively.  Then, it holds that:
		\begin{enumerate}
			\item [(i)]	If $O_1 \leq_L O_2$ then $k_1 \geq k_2$;
			\item [(ii)] Whenever $O_1$ and $O_2$ have $1_L$ as neutral element, if $k_1 \geq k_2$ then $O_1 \leq_L O_2$.
		\end{enumerate}
	\end{corollary}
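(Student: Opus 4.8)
The plan is to obtain both implications from Theorem~\ref{teo4.6} (together with Theorem~\ref{const_overlap}, Corollary~\ref{col4.2} and Proposition~\ref{expo}), after observing that working on a chain makes two of the hypotheses of Theorem~\ref{teo4.6} automatic. First I would record that on a chain $L$ every t-norm $\otimes$ is $(\wedge$ and $\vee)$-distributive: for any $x,y,z\in L$ one of $y\leqslant_L z$, $z\leqslant_L y$ holds, so $\{y\wedge_L z,\,y\vee_L z\}=\{y,z\}$, and the monotonicity of $\otimes$ immediately gives $\otimes(x,y\wedge_L z)=\otimes(x,y)\wedge_L\otimes(x,z)$ and the dual identity for $\vee_L$. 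Second, I would note that on a chain every symmetric function — in particular every quasi-overlap $O_i$, by (OL1) — satisfies $O_i(x,y)=O_i(x\wedge_L y,\,x\vee_L y)$, because the pair $(x\wedge_L y,\,x\vee_L y)$ is either $(x,y)$ or $(y,x)$.

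For item (ii), the hypotheses of Theorem~\ref{teo4.6} are now all in force with $F_i:=O_i$: the t-norm $\otimes$ is $(\wedge$ and $\vee)$-distributive (just checked), $\langle L,\otimes\rangle$ is divisible by assumption, each $O_i$ is $\otimes$-homogeneous of order $k_i$, each has $1_L$ as neutral element (assumed in (ii)), and each satisfies $O_i(x,y)=O_i(x\wedge_L y,\,x\vee_L y)$. Hence Theorem~\ref{teo4.6} gives $O_1\leqslant_L O_2$ if and only if $k_1\geqslant k_2$, and in particular $k_1\geqslant k_2$ implies $O_1\leqslant_L O_2$. Equivalently one can quote the explicit form $O_i(x,y)=(x\wedge_L y)\otimes(x\vee_L y)_{\otimes}^{(k_i-1)}$ from Theorem~\ref{const_overlap} (valid since $\langle L,\otimes\rangle$ is divisible) and conclude by the monotonicity of $\otimes$ and Proposition~\ref{expo}.

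For item (i) the neutral-element hypothesis is dropped, so I would not invoke Theorem~\ref{teo4.6} verbatim but instead reproduce its $(\Rightarrow)$ argument in a form that only uses (OL3). Since $O_i$ is a quasi-overlap, (OL3) gives $O_i(1_L,1_L)=1_L$, and since $\otimes$ is a t-norm, $\lambda\otimes 1_L=\lambda$ for all $\lambda\in L$. Substituting $x=y=1_L$ into the $\otimes$-homogeneity identity~(\ref{14}) then yields $O_i(\lambda,\lambda)=O_i(\lambda\otimes 1_L,\lambda\otimes 1_L)=\lambda_{\otimes}^{(k_i)}\otimes O_i(1_L,1_L)=\lambda_{\otimes}^{(k_i)}$. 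Thus $O_1\leqslant_L O_2$ forces $\lambda_{\otimes}^{(k_1)}\leqslant_L\lambda_{\otimes}^{(k_2)}$ for every $\lambda\in L$; choosing any $\lambda$ with $0_L<_L\lambda<_L 1_L$ (if none exists then $L=\{0_L,1_L\}$ and the statement is vacuous) and applying Proposition~\ref{expo} gives $k_1\geqslant k_2$.

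The step that needs the most care is the last one: Proposition~\ref{expo} is stated for \emph{strict} t-norms, whereas here $\otimes$ is only assumed divisible, and on a chain a divisible t-norm need not be strict (e.g.\ a nilpotent one). The clean way to close item~(i) — and, strictly speaking, the $(\Rightarrow)$ half of Theorem~\ref{teo4.6} on which it leans — is either to add the hypothesis that $\otimes$ is strict, or to extract from divisibility the weaker fact that $p>q$ implies $\lambda_{\otimes}^{(p)}<_L\lambda_{\otimes}^{(q)}$ on $0_L<_L\lambda<_L 1_L$ as long as $\lambda_{\otimes}^{(q)}>_L 0_L$, which is all the comparison of orders actually consumes. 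Everything else is a direct translation of results already proved, so no further obstacle is expected.
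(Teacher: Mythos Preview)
Your approach is essentially that of the paper, whose entire proof is ``Straightforward from Theorem~\ref{teo4.6}'': you correctly unpack this by noting that on a chain the $(\wedge$ and $\vee)$-distributivity of $\otimes$ and the identity $O_i(x,y)=O_i(x\wedge_L y,x\vee_L y)$ are automatic, and that for item~(i) the value $O_i(1_L,1_L)=1_L$ from (OL3) is all that is actually used in the $(\Rightarrow)$ direction of Theorem~\ref{teo4.6}. The concern you flag about Proposition~\ref{expo} requiring $\otimes$ to be strict is legitimate, but it is inherited verbatim from the paper's own proof of Theorem~\ref{teo4.6}$(\Rightarrow)$ rather than introduced by your argument.
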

	
	\begin{proof}
		Straightforward from Theorem \ref{teo4.6}.
	\end{proof}

	\subsection{Idempotency}

	
	Recall that an element $a\in L$ is called an \textit{idempotent} element of a function $f:L^2\to L$ if $f(a,a)=a$. In the case of every element $a\in L$ is an idempotent element of $f$ the $f$ is called an idempotent function. Note that $0_L$ and $1_L$ are trivial idempotent elements for any quasi-overlap function $O$.


\begin{proposition}
Let $O:L^2 \to L$ be an overlap function. If  we have $\lim\limits_{x \to a^+}O(x,x)=a$ for some $a \in L\backslash \{0_L,1_L\}$ then $a$ is an idempotent element of $O$.
\end{proposition}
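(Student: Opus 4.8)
The plan is to extract the value $O(a,a)$ by feeding one carefully chosen net into the hypothesis $\lim_{x\to a^{+}}O(x,x)=a$ and then computing the resulting order-limit by hand. First I would take the principal up-set $J=\{x\in L : a\leq_L x\}$ and equip it with the \emph{reverse} of $\leq_L$ (so that larger indices are $\leq_L$-closer to $a$). This makes $(J,\sqsubseteq)$ a directed set: given $i,j\in J$, the meet $i\wedge_L j$ again satisfies $a\leq_L i\wedge_L j$, hence lies in $J$, and it dominates both $i$ and $j$ in that reversed order. Thus the identity assignment $i\mapsto x_i:=i$ is a net in $L$; by construction $a\leq_L x_i$ for all $i$, so $a$ is an (eventual) lower bound of $(x_i)_{i\in J}$, and for each index $i$ the set of $x_j$ with $j$ above $i$ is exactly the interval $[a,i]=\{x\in L : a\leq_L x\leq_L i\}$, whose infimum is $a$ and whose supremum is $i$. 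Therefore $\underline{\lim}_{i\in J}x_i=\sup_{i\in J}a=a$ and $\overline{\lim}_{i\in J}x_i=\inf_{i\in J}i=a$, so $(x_i)_{i\in J}$ meets the hypotheses of Definition \ref{limitelateral}, and we obtain $\lim_{i\in J}O(x_i,x_i)=a$.

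Second, I would compute $\lim_{i\in J}O(x_i,x_i)$ directly. The decisive observation is that $a$ is itself an index and is the greatest one, since every $i\in J$ has $a\leq_L i$. Hence the only $x_j$ with $j$ above the index $a$ is $x_a=a$ itself, so both the infimum and the supremum of $\{O(x_j,x_j)\}$ over that tail equal $O(a,a)$; and for any index $i$, because $a\in[a,i]$, the tail infimum $\inf_{x\in[a,i]}O(x,x)$ is $\leq_L O(a,a)$ while the tail supremum is $\geq_L O(a,a)$. Taking $\sup_{i\in J}$ of the former and $\inf_{i\in J}$ of the latter collapses both $\underline{\lim}_{i\in J}O(x_i,x_i)$ and $\overline{\lim}_{i\in J}O(x_i,x_i)$ to exactly $O(a,a)$. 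Combined with $\lim_{i\in J}O(x_i,x_i)=a$ from the previous step, this yields $O(a,a)=a$, which is precisely the assertion that $a$ is an idempotent element of $O$.

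I do not anticipate a real obstacle in the deduction: it uses only the completeness of $L$ (so that the suprema and infima above exist) together with the purely order-theoretic shape of this net, and it does not in fact invoke the boundary axioms or the monotonicity of $O$. The one point that genuinely requires care is verifying that the chosen net satisfies the precise wording of Definition \ref{limitelateral} — that $a$ is an eventual lower bound, and that the condition $\lim_{i\in J}x_i=a$ holds in the intended sense, which I read as $\underline{\lim}_{i\in J}x_i=\overline{\lim}_{i\in J}x_i=a$ — so that the hypothesis applies to it and its conclusion is interpreted correctly; once that is pinned down, the remainder is the short liminf/limsup calculation sketched above.
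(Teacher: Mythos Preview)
Your proposal is correct, but it follows a genuinely different route from the paper's own proof. The paper argues in one line via continuity: given a net $(x_i)_{i\in J}$ with $a$ as eventual lower bound and $\lim_{i\in J} x_i = a$, property (OL5) yields $O(a,a) = O\bigl(\lim_{i\in J} x_i,\lim_{i\in J} x_i\bigr) = \lim_{i\in J} O(x_i,x_i)$, and the hypothesis says the right-hand side equals $a$. By contrast, you never invoke continuity at all: because your net has $a$ itself as a terminal (greatest) index, the tail from $a$ is the singleton $\{a\}$, and the $\underline{\lim}/\overline{\lim}$ computation of $\lim_{i\in J} O(x_i,x_i)$ collapses to $O(a,a)$ directly.

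What each approach buys: your argument is strictly more elementary --- it applies verbatim to quasi-overlaps, and in fact to any function $O:L^2\to L$, so it shows that the continuity hypothesis in the proposition is not actually needed. The paper's argument is shorter and conceptually transparent (swap $O$ with the limit), but it leans on (OL5) and leaves implicit the existence of at least one net meeting the conditions of Definition~\ref{limitelateral}; you construct such a net explicitly. Your one caveat --- pinning down the meaning of $\lim_{i\in J} x_i = a$ as $\underline{\lim}=\overline{\lim}=a$ --- is the natural reading and is exactly what the paper seems to intend.
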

\begin{proof}
If $\lim\limits_{x \to a^+}O(x,x)=a$ then by Definition \ref{limitelateral} we have that $a$ is an eventual lower bounded of a net $(x_i)_{i\in J}$ in $L$ and $\lim_{i\in J} x_i=a$ implies $\lim_{i\in J} O(x_i,x_i)=a$. Moreover, since $O$ is continuous  one has that \begin{eqnarray*}
O(a,a)&=&O\left(\lim_{i\in J}x_i, \lim_{i\in J}x_i\right)\\
&=&\lim_{i\in J} O(x_i,x_i)\\
&=&a.\end{eqnarray*}\end{proof}
\begin{proposition}\label{idem}
Let $O:L^2 \to L$ be a quasi-overlap function and $a\in L$. If $a$ is an idempotent element of $O$ then there exists $x \in L$ such that $a= \lim\limits_{n \to \infty}x_O ^{(n)}$, i.e, $a$ is an eventual upper bound of a net $(x_O ^{(n)})_{n \in \mathbb{N}}$.
\end{proposition}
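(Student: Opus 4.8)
The plan is to exhibit the witness $x$ explicitly by taking $x := a$ and showing that the net $(a_O^{(n)})_{n\in\mathbb{N}}$ arising from the power notation of Definition \ref{grau_k} is eventually constant equal to $a$ (in fact constant from $n=1$ onwards). The guiding observation is that idempotency of $a$ propagates along the recursion defining $\lambda_f^{(n)}$.

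Concretely, I would argue by induction on $n\geq 1$ that $a_O^{(n)} = a$. The base case $n=1$ is immediate, since $\lambda_f^{(1)} = \lambda$ in Definition \ref{grau_k}, so $a_O^{(1)} = a$. For the inductive step, assuming $a_O^{(n-1)} = a$, we compute $a_O^{(n)} = O(a, a_O^{(n-1)}) = O(a,a) = a$, where the last equality is exactly the hypothesis that $a$ is an idempotent element of $O$. Hence $a_O^{(n)} = a$ for all $n\geq 1$.

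With this in hand, $a_O^{(n)} = a \leq_L a$ holds for every $n\geq 1$, so choosing $k=1$ in the second part of Definition \ref{limitelateral} shows that $a$ is an eventual upper bound of the net $(a_O^{(n)})_{n\in\mathbb{N}}$ and therefore $a = \lim_{n\to\infty} a_O^{(n)}$. Thus $x := a$ satisfies the required conclusion, and no further structure on $L$ or on $O$ is needed. There is essentially no real obstacle here: the only subtlety worth flagging is cosmetic, namely that one might instead hope for a witness $x \neq a$ yielding a genuinely non-constant net (for instance a strictly decreasing one, as in Proposition \ref{expo}) that converges down to $a$; but the weak notion of limit fixed in Definition \ref{limitelateral} imposes no such requirement, so the constant net already does the job, and any attempt to produce a strict descent would be the genuinely hard part were it demanded.
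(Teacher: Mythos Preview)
Your proposal is correct and follows exactly the same approach as the paper: take $x:=a$, observe that idempotency forces $a_O^{(n)}=a$ for all $n\geq 1$, and conclude $a=\lim_{n\to\infty}a_O^{(n)}$. The paper's proof is just a terser version of your argument, asserting $a_O^{(n)}=a$ without spelling out the induction.
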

\begin{proof}
If $a$ is an idempotent element of $O$ then $a_O ^{(n)}=a$ for all $n\in \mathbb{N}$ and hence $a= \lim\limits_{n \to \infty}a_O ^{(n)}$.\end{proof}

	\subsection{Cancellation law}


\begin{definition}
A quasi-overlap function $O:L^2 \to L$ satisfies the cancellation law if $O(x,y)=O(x,z)$ implies that $x=0_L$ or $y=z$. In this case, $O$ is called a cancellative quasi-overlap.
\end{definition}

\begin{example}
Let $L = \langle [0,1], \leq \rangle$ be a bounded lattice. Function $O_{DB}:L^2 \rightarrow L$ given by 
$$
O_{DB}(x,y) = \left\{
\begin{array}{rl}
\displaystyle \frac{2xy}{x+y}, & \ if \ x+y \neq 0;\\
0, & \ otherwise.
\end{array}
\right.
$$ is an overlap that satisfies the cancellation law (Example 4.5 in \cite{DIMURO}). 
\end{example}

\begin{theorem} \label{cancellative}
If	a quasi-overlap function $O:L^2 \to L$  is cancellative then it is strictly increasing, i.e. $O(x,y)<_L O(x,z)$ whenever $y <_L z$ and $0_L <_L x$.
\end{theorem}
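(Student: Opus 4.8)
The plan is to derive strict monotonicity directly from cancellativity together with the fact that a quasi-overlap is non-decreasing (axiom (OL4)). The key observation is that cancellativity forbids any "plateau" in the partial map $O(x,\cdot)$ when $x >_L 0_L$: if two comparable arguments gave the same value, cancellativity would force them to be equal.

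First I would fix $x \in L$ with $0_L <_L x$ and take $y, z \in L$ with $y <_L z$; the goal is $O(x,y) <_L O(x,z)$. Since $y <_L z$ means $y \leq_L z$ and $y \neq z$, monotonicity (OL4) immediately gives $O(x,y) \leq_L O(x,z)$. So it remains to exclude the equality $O(x,y) = O(x,z)$. Here I would invoke the cancellation law with the pair $(y,z)$: $O(x,y) = O(x,z)$ implies $x = 0_L$ or $y = z$. Both conclusions contradict our hypotheses ($x >_L 0_L$ and $y <_L z$, hence $y \neq z$), so the equality is impossible. Therefore $O(x,y) \leq_L O(x,z)$ and $O(x,y) \neq O(x,z)$, which is exactly $O(x,y) <_L O(x,z)$.

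The argument is short because all the work is already packaged into (OL4) and the definition of cancellativity; the only genuine point to be careful about is that $y <_L z$ in a lattice means precisely $y \leq_L z$ together with $y \neq z$, so that monotonicity applies to give $\leq_L$ and cancellativity then upgrades it to $<_L$. There is no real obstacle here — the main thing to make sure of is that the commutativity axiom (OL1) is not secretly needed (it is not: we only ever vary the second argument), and that the definition of cancellative quasi-overlap is being used in the stated direction (from equality of outputs to the dichotomy $x = 0_L$ or $y = z$), not its converse.
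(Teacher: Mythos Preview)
Your proof is correct and follows essentially the same approach as the paper: use (OL4) to get $O(x,y)\leq_L O(x,z)$, then rule out equality via the cancellation law, which would force $x=0_L$ or $y=z$, contradicting the hypotheses.
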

\begin{proof}
Suppose that $y <_L z$, $0_L <_L x$ and that $O$ is cancellative. By \textbf{(OL4)}, one has that $O(x,y)\leq_L O(x,z)$. Consider $O(x,y)= O(x,z)$. Then, since $O$ is cancellative, $x=0_L$ or $y=z$, which is a contradiction. Thus, one concludes that $O(x,y)<_L O(x,z)$.
\end{proof}
\begin{example}
	Let be $L$ the bounded lattices as in Figure \ref{fig2}. Then  $O_L:{L}^{2} \rightarrow L $ given as in Table \ref{tab:1} is an overlap function that do not satisfy the cancellation law since it not strictly increasing. For instance, $O_L(b,c)=O_L(b,d)$ however $b\neq0_L$ and $c\neq d$. Actually, there is no cancellative overlaps on finite bounded lattices as one can see on Corollary \ref{pigeon}. 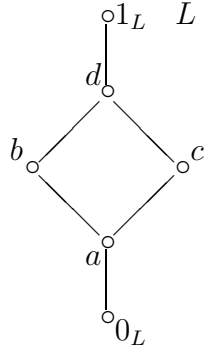
\begin{figure}[!htp]
		\setlength{\unitlength}{1cm}
		\begin{picture}(5,5.5)
		\put(7.5,4){\bf $L$} \put(6.5,4){$\circ$} \put(6.5,3){$\circ$} \put(5.5,2){$\circ$}
		\put(7.5,2){$\circ$} \put(6.5,1){$\circ$} \put(6.5,0){$\circ$} \put(6.7,4){$1_L$} \put(6.3,3.2){$d$}
		\put(5.3,2.2){$b$} \put(7.7,2.2){$c$} \put(6.3,0.8){$a$} \put(6.7,-0.2){$0_L$}
		\put(6.58,3.2){\line(0,1){0.8}} \put(6.49,1.19){\line(-1,1){0.8}} \put(6.49,2.97){\line(-1,-1){0.8}}
		\put(6.68,1.19){\line(1,1){0.8}} \put(6.68,2.97){\line(1,-1){0.8}}\put(6.58,0.2){\line(0,1){0.8}}
		\end{picture}
		\caption{Hasse diagram of lattice $L$} \label{fig2}
	\end{figure}
		
	\begin{table}[htb]
		\centering
		\begin{tabular}{|c||c|c|c|c|c|c|}
			\hline
			$O_L(\cdot,\cdot)$ & $0_L$ & $a$ & $b$ & $c$ & $d$ & $1_L$ \\ \hline \hline
			$0_L$ & $0_L$ & $0_L$ & $0_L$ & $0_L$ & $0_L$ & $0_L$ \\ \hline
			$a$ & $0_L$ & $a$ & $b$ & $c$ & $d$ & $d$\\ \hline
			$b$ & $0_L$ & $b$ & $c$ & $d$ & $d$ & $d$\\ \hline
			$c$ & $0_L$ & $c$ & $d$ & $d$ & $d$ & $d$\\ \hline
			$d$ & $0_L$ & $d$ & $d$ & $d$ & $d$ & $d$\\ \hline
			$1_L$ & $0_L$ & $d$ & $d$ & $d$ & $d$ & $1_L$\\ \hline
		\end{tabular}
		\caption{Tables of overlap function $O_L$.}
		\label{tab:1}
	\end{table}
\end{example}

\begin{corollary} \label{pigeon}
	There is no cancellative quasi-overlap function on finite bounded lattices.
\end{corollary}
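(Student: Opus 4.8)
The plan is to argue by contradiction, the key tool being a pigeonhole (counting) argument combined with the boundary axiom \textbf{(OL3)}. Suppose $L$ is a finite bounded lattice and $O:L^2\to L$ is a cancellative quasi-overlap function (Definition \ref{quasi_overlaps}). One may assume that $L$ possesses at least one element strictly between $0_L$ and $1_L$; otherwise $L=\{0_L,1_L\}$, a degenerate situation that must be treated separately. So fix $x\in L$ with $0_L<_L x<_L 1_L$.

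First I would observe that the unary map $\psi:L\to L$ given by $\psi(y)=O(x,y)$ is injective. Indeed, if $O(x,y)=O(x,z)$ then cancellativity of $O$ yields $x=0_L$ or $y=z$, and since $x\ne 0_L$ we conclude $y=z$. (This agrees with Theorem \ref{cancellative}: as $0_L<_L x$, the map $O(x,\cdot)$ is in fact strictly increasing.)

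Now comes the pigeonhole step: since $L$ is a finite set, the injective self-map $\psi$ is necessarily surjective. Hence $1_L$ belongs to the image of $\psi$, that is, $O(x,y_0)=1_L$ for some $y_0\in L$. But then \textbf{(OL3)} forces $x=y_0=1_L$, contradicting $x<_L 1_L$. This shows that no cancellative quasi-overlap function can exist on $L$.

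The genuinely delicate point, and exactly where the finiteness hypothesis is essential, is the implication ``injective $\Rightarrow$ surjective'', which fails on infinite lattices --- consistently with the fact that $O_{DB}$ on $[0,1]$ in the example above is a cancellative overlap. An essentially equivalent route works on a maximal chain rather than on all of $L$: pick a longest chain $0_L=c_0<_L c_1<_L\cdots<_L c_h=1_L$ (so $h$ is the height of $L$, and $h\ge 2$ by the assumption above), put $x=c_{h-1}$, so that $0_L<_L x<_L 1_L$, and use strict monotonicity (Theorem \ref{cancellative}) to get $0_L=O(x,c_0)<_L O(x,c_1)<_L\cdots<_L O(x,c_h)$; since $O(x,c_h)=O(x,1_L)\ne 1_L$ by \textbf{(OL3)} while $O(x,c_h)\leq_L 1_L$, appending $1_L$ on top yields a chain of length $h+1$, contradicting the maximality of $h$. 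I expect no real obstacle in writing the full proof beyond being careful about the degenerate two-element lattice; the heart of the argument is a single application of the pigeonhole principle.
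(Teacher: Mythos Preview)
Your proposal is correct and follows essentially the same route as the paper: fix $x\in L\setminus\{0_L,1_L\}$, observe that $y\mapsto O(x,y)$ is injective, and use the pigeonhole principle together with \textbf{(OL3)} to reach a contradiction. If anything, your derivation of injectivity directly from the cancellation law is slightly cleaner than the paper's detour through Theorem~\ref{cancellative}; and you are right to flag the two-element lattice as a boundary case the paper does not address.
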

\begin{proof}
	Suppose $L$ is a finite bounded lattice and $O$ is a cancellative quasi-overlap function on $L$. Then by Theorem \ref{cancellative} the quasi-overlap $O$ is strictly increasing, i.e. $O(x,y)<_L O(x,z)$ whenever $y <_L z$ and $0_L <_L x$ what implies that $O$ restricted to $\{x\} \times L$ (for a given $x \in L$) should be an injective function over $L\backslash \{1_L\}$ what is contradiction with the pigeonhole principle.
\end{proof}

It is known from the literature that strictly monotonic and cancellation are equivalents properties for overlap functions on the unit interval with the standard linear order (see \cite{DIMURO}). However, the next example reveals that this is not true for $L$-overlap functions.

\begin{example}
 Let $\alpha$ be  a real number such that $0< \alpha < 1$ and  $L=[0,1]\cup\{\alpha\}$ the lattice with usual order when restricted to $[0,1]$ and $x \parallel \alpha$ for all $0<x<1$. Then, the mapping defined by \begin{enumerate}
 	\item [(i)] $O(x,y)= x\cdot y$\, if $x, y \in [0,1]$;
	\item [(ii)] $O(x,\alpha)=O(\alpha,x)=\frac{x}{2}$\, if $x\in [0,1]$ and
	\item [(iii)] $O(\alpha,\alpha)=0.4$
\end{enumerate} is an $L$-overlap function which is strictly increasing but is not cancellative, since $O(0.8,0.5)=0.4 =O(0.8,\alpha)$, but $0.5\parallel \alpha$.
\end{example}

\begin{theorem}\label{strictly}
If a quasi-overlap function $O:L^2 \to L$  is strictly increasing and satisfies $O(x,y\vee z)=O(x,y)\vee O(x,z)$, for all $x,y,z \in L$ then it is cancellative.
\end{theorem}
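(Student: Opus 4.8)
The statement to be proved is that strict increasingness together with join-distributivity in the second argument forces cancellativity: $O(x,y)=O(x,z)$ implies $x=0_L$ or $y=z$. So I would fix $x,y,z\in L$, assume $O(x,y)=O(x,z)$ and $0_L<_L x$, and aim to show $y=z$. The one thing to keep in mind is that $y$ and $z$ need not be comparable, so we cannot directly apply strict monotonicity to the pair $y,z$; the join-distributivity hypothesis is exactly what repairs this, by letting us compare each of $y$ and $z$ with $y\vee z$ instead.

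The first step is to evaluate $O(x,y\vee z)$ using the hypothesis $O(x,y\vee z)=O(x,y)\vee O(x,z)$; since $O(x,y)=O(x,z)$, this collapses to $O(x,y\vee z)=O(x,y)=O(x,z)$. The second step is a strict-monotonicity argument: we always have $y\leqslant_L y\vee z$, and if the inequality were strict, then (since $0_L<_L x$ and $O$ is strictly increasing in the sense of Theorem~\ref{cancellative}) we would get $O(x,y)<_L O(x,y\vee z)=O(x,y)$, a contradiction. Hence $y=y\vee z$, i.e.\ $z\leqslant_L y$. The third step is to repeat this argument with the roles of $y$ and $z$ interchanged (the hypothesis is symmetric in $y$ and $z$ here because $O(x,y)=O(x,z)$), obtaining $y\leqslant_L z$. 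Combining the two inequalities via antisymmetry of $\leqslant_L$ yields $y=z$, which is what we wanted.

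There is essentially no hard part: the whole proof is a two-line deduction once one notices that join-distributivity turns the ``incomparable'' case $y\parallel z$ into two instances of the comparable case by passing through $y\vee z$. If anything, the only point to be careful about is to state precisely which notion of ``strictly increasing'' is being used — namely $O(x,a)<_L O(x,b)$ whenever $a<_L b$ and $0_L<_L x$, exactly as in Theorem~\ref{cancellative} — and to make sure the contradiction in the second step uses $0_L<_L x$, since for $x=0_L$ the conclusion $O(x,y)=O(x,z)=0_L$ is automatic and gives no information.
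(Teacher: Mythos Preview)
Your proof is correct and arguably cleaner than the paper's. The paper proceeds by contradiction, assuming $O(x,y)=O(x,z)$ with $x\neq 0_L$ and $y\neq z$, and then splits into three cases according to whether $y<_L z$, $z<_L y$, or $y\parallel z$: in the first two cases strict monotonicity immediately gives $O(x,y)\neq O(x,z)$, while in the incomparable case the paper notes that $y<_L y\vee z$ and $z<_L y\vee z$ strictly, so $O(x,y)<_L O(x,y\vee z)$ and $O(x,z)<_L O(x,y\vee z)$, and combines this with $O(x,y)\vee O(x,z)=O(x,y\vee z)$ to conclude $O(x,y)\parallel O(x,z)$, contradicting $O(x,y)=O(x,z)$. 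Your argument handles all three cases at once by passing through $y\vee z$: from $O(x,y\vee z)=O(x,y)\vee O(x,z)=O(x,y)$ together with $y\leqslant_L y\vee z$ and strict monotonicity you force $y=y\vee z$, and symmetrically $z=y\vee z$, hence $y=z$. This avoids the case split entirely and makes the role of the join-distributivity hypothesis more transparent.
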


\begin{proof}
If $O$ is strictly increasing, then $O(x,u)<_L O(x,v)$ whenever $u<_Lv$ and $0_L <_L x$. Suppose that $O$ is not cancellative. Then, there exist $x,y,z \in L$ with $x\neq 0$ such that $O(x,y)= O(x,z)$ and $y \neq z$, i.e either $y<_L z$ or $z<_L y$ or $y \parallel z$. Considering $y <_Lz$  one may conclude that  $O(x,y)<_L O(x,z)$  since $O$ is strictly increasing, which is a contradiction. Similary, the same result is obtained for $z<_L y$. Now consider that $y \parallel z$. Since L is a lattice, it follows that $y\vee z$ exist.  Thus, since $O$ is strictly increasing, one has that: $O(x,y)<_L O(x,y\vee z)$ and $O(x,z)<_L O(x,y\vee z)$. However, $O(x,y) \vee O(x,z) = O(x,y\vee z)$ and this is sufficient to conclude that $O(x,y) \parallel O(x,z)$. Therefore, $O$ is cancellative.
\end{proof}
\begin{corollary}
	Let $L$ be a chain. A quasi-overlap function $O:L^2 \to L$  is cancellative if and only if it is strictly increasing.
\end{corollary}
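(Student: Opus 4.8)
The plan is to derive both implications from the two theorems already established, after noting that on a chain the extra hypothesis needed in Theorem~\ref{strictly} comes for free.

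For the \emph{only if} direction, nothing new is required: if $O$ is a cancellative quasi-overlap function, then Theorem~\ref{cancellative} immediately gives that $O$ is strictly increasing, i.e. $O(x,y)<_L O(x,z)$ whenever $y<_L z$ and $0_L<_L x$. This argument makes no use of totality of the order, so it applies in particular when $L$ is a chain.

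For the \emph{if} direction, suppose $O$ is strictly increasing. I would first check that the join-distributivity condition $O(x,y\vee z)=O(x,y)\vee O(x,z)$ holds for all $x,y,z\in L$. Since $L$ is a chain, for any $y,z$ either $y\leq_L z$ or $z\leq_L y$, so $y\vee z\in\{y,z\}$; say $y\vee z=z$. Then the left-hand side equals $O(x,z)$, while by monotonicity \textbf{(OL4)} we have $O(x,y)\leq_L O(x,z)$, whence $O(x,y)\vee O(x,z)=O(x,z)$ as well; the symmetric case is identical. Thus the hypotheses of Theorem~\ref{strictly} are met, and that theorem yields that $O$ is cancellative.

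The only point requiring any care is the verification of join-distributivity, and even that is immediate on a chain because every binary join is attained and monotonicity supplies the comparison; everything else is a direct invocation of Theorems~\ref{cancellative} and~\ref{strictly}. So there is essentially no obstacle here: the corollary is just the specialization of those two results to the totally ordered setting, recovering the classical equivalence of strict monotonicity and cancellation known for overlap functions on $[0,1]$.
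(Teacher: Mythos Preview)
Your proposal is correct and follows essentially the same route as the paper: invoke Theorem~\ref{cancellative} for one direction and Theorem~\ref{strictly} for the other, after observing that the join-distributivity hypothesis $O(x,y\vee z)=O(x,y)\vee O(x,z)$ is automatic on a chain. The only difference is that you spell out the (trivial) verification of that hypothesis, whereas the paper simply asserts it.
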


\begin{proof}
The sufficiency follows from Theorem \ref{cancellative}. To see the necessity just consider the Theorem \ref{strictly} once the condition $O(x,y\vee z)=O(x,y)\vee O(x,z)$ trivially is satisfied by any quasi-overlap function when $L$ is a chain.
\end{proof}

	
	\section{Archimedean quasi-overlap functions and related properties}


From algebraic point of view, the meaning of a set $X$ has the Archimedian Property is that it has no infinitely  large (small) element (other than neutral) . For instance, if $(G,*)$ is a group\footnote{A group $(G,*)$ is a nonempty set $G$ equipped with an operation $*$ which is associative, has neutral element and symmetric element (see \cite{bhattacharya})} then given $x, y \in G$ there exists a $n \in \mathbb{N}^*$ such that $\underbrace{x * \cdots * x}_{n-times} < y$. This concept can naturally be extended for other contexts including for lattices \cite{DIMURO}. Here we discuss about that property for $L$-overlap functions as follows.

	\begin{definition}
		Let $L$ be a bounded lattice. A quasi-overlap function $O: L^2 \to L$ is called Archimedean if for each $x,y \in L \backslash \left\{0_L, 1_L \right\}$ there exists $n \in \mathbb{N}^*$ such that $x_O ^{(n)} <_L y$, where $x_O ^{(n)}$ is given in Equation (\ref{expoente}).
	\end{definition}
	
	\begin{example}
		Let $\otimes :L^2 \to L$ be a strict continuous t-norm. It is easy to see that the function $O_p:L^2 \to L$ given by $O_p(x,y)=x_{\otimes} ^{(p)}\otimes y_{\otimes} ^{(p)}$ with $p>1$ is an overlap function. Since for all $n \in \mathbb{N}^*$ one can verify that  $$x_{O_p} ^{(n)} = x_{\otimes} ^{(2p^{n-1} + p^{n-2}+p^{n-3}+ \ldots + p)}$$ then for all $x,y \in L \backslash \left\{0_L, 1_L \right\}$ it holds that $$\lim\limits_{n \to \infty}x_{O_p} ^{(n)} =\lim\limits_{n \to \infty} x_{\otimes} ^{(2p^{n-1} + p^{n-2}+p^{n-3}+ \ldots + p)} \stackrel{(Prop. \ref{expo})}{=} 0_L <_L y.$$ Therefore $O_{p}$ is an Archimedean overlap function.
	\end{example}
\begin{lemma}\label{basecase}
Let $O: L^2 \to L$ be an Archimedean quasi-overlap function. Then for all $x \in L \backslash \left\{0_L, 1_L \right\}$ it holds that $O(x,x) <_L x$ or $O(x,x) \parallel x$. 
\end{lemma}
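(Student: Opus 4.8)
The plan is to argue by contradiction. In a lattice, the four relations $O(x,x) <_L x$, $O(x,x) = x$, $O(x,x) >_L x$, $O(x,x) \parallel x$ are mutually exclusive and exhaustive for the pair $O(x,x), x$, so the asserted dichotomy is precisely the negation of ``$O(x,x) = x$ or $O(x,x) >_L x$''. Since each of these two excluded cases amounts exactly to $x \leq_L O(x,x)$, it suffices to fix $x \in L\setminus\{0_L,1_L\}$, \emph{assume} $x \leq_L O(x,x)$, and derive a contradiction with the Archimedean property of $O$.

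The key step is to show that under this hypothesis the net of iterates $\left(x_O^{(n)}\right)_{n\in\mathbb{N}^*}$ from Definition~\ref{grau_k} (Equation~(\ref{expoente})) is non-decreasing, whence in particular $x = x_O^{(1)} \leq_L x_O^{(n)}$ for every $n$. I would prove $x_O^{(n)} \leq_L x_O^{(n+1)}$ by induction on $n$: the base case $n=1$ is $x_O^{(1)} = x \leq_L O(x,x) = x_O^{(2)}$, which is exactly the working hypothesis; for the inductive step, if $x_O^{(n-1)} \leq_L x_O^{(n)}$ then monotonicity of $O$ in its second argument (axiom \textbf{(OL4)}) gives $x_O^{(n)} = O\!\left(x, x_O^{(n-1)}\right) \leq_L O\!\left(x, x_O^{(n)}\right) = x_O^{(n+1)}$.

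To conclude, I would invoke Archimedeanity of $O$ applied to the pair $(x,x)$ --- a legitimate instance since $x \in L\setminus\{0_L,1_L\}$ --- obtaining some $n \in \mathbb{N}^*$ with $x_O^{(n)} <_L x$; together with $x \leq_L x_O^{(n)}$ this contradicts antisymmetry of $\leq_L$. Hence neither $O(x,x) = x$ nor $O(x,x) >_L x$ can hold, so $O(x,x) <_L x$ or $O(x,x) \parallel x$.

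I do not anticipate a serious obstacle here; the only points needing a little care are (i) realizing that the two forbidden cases collapse into the single hypothesis $x \leq_L O(x,x)$, which is what makes the monotonicity induction run uniformly, and (ii) checking that $x$ itself is an admissible second argument in the Archimedean condition, which is guaranteed precisely by the standing assumption $x \in L\setminus\{0_L,1_L\}$.
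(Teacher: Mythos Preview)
Your proof is correct and in fact cleaner than the paper's. The paper argues directly: it picks the least $n\neq 1$ with $x_O^{(n)}<_L x$ (guaranteed by Archimedeanity), and then performs a case split on whether $x_O^{(n-1)}\geq_L x$ or $x_O^{(n-1)}\parallel x$, the latter case splitting further into three sub-cases according to how $x_O^{(n)}$ compares with $O(x,x)$. Only in the last of these sub-cases does the paper invoke the chain $x\leq_L x_O^{(2)}\leq_L\cdots\leq_L x_O^{(n)}<_L x$ that you set up globally.

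Your observation that the negation of the conclusion collapses to the single hypothesis $x\leq_L O(x,x)$, together with the monotone induction on the iterates, makes all of that case analysis unnecessary: the chain argument the paper reserves for one branch already disposes of every branch at once. So your route is shorter and more uniform; the paper's route, by contrast, yields a little extra information in the easy cases (e.g.\ when $x_O^{(n-1)}\geq_L x$ it gets the stronger conclusion $O(x,x)<_L x$ outright), but that information is not needed for the lemma as stated.
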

\begin{proof}
Since $O$ is Archimedean there exists $n \in \mathbb{N}^*$ such that $x_{O} ^{(n)} <_L x$ and hence $n\neq 1$ since  $x_{O} ^{(1)} = x$.  So taking the least $n \neq 1$ such that $x_{O} ^{(n)} <_L x$ it follows that $x_{O} ^{(n-1)}\geq_L x$ or  $x_{O} ^{(n-1)} \parallel x$. If $x_{O} ^{(n-1)}\geq_L x$ then $O(x,x) \leq_L O\left(x_{O} ^{(n-1)},x\right)=x_{O} ^{(n)}<_L x$. On the other hand, if $x_{O} ^{(n-1)} \parallel x$  then we have the following possibilities:
\begin{enumerate}
	\item[(i)] Suppose $O\left(x_{O} ^{(n-1)},x\right)=x_{O} ^{(n)}$ and  $O(x,x)$ are incomparable. Notice that if  $x \leq_L O(x,x)$ we would have $x_{O} ^{(n)} <_L x \leq_L O(x,x)$ which is contradicts with $x_{O} ^{(n)}  \parallel O(x,x)$. Therefore we must have $O(x,x) <_L x$ or $O(x,x) \parallel x$;
	\item[(ii)] In case $x_{O} ^{(n)}=O\left(x_{O} ^{(n-1)},x\right)\geq_L O(x,x)$ it follows that $O(x,x)\leq_L x_{O} ^{(n)}<_L x$;
	\item[(iii)] Finally, suppose $x_{O} ^{(n)}=O\left(x_{O} ^{(n-1)},x\right)\leq_L O(x,x)$. In this case, due to $x_{O} ^{(n)} <_L x$ we shall have $O(x,x)<_L x$ or $O(x,x) \parallel x$. Indeed, if $x \leq_L O(x,x)$ then applying the function $O$ ($n-2$) times we get the chain: $$x \leq_L x_{O} ^{(2)}\leq_L x_{O} ^{(3)}\leq_L \ldots \leq_L x_{O} ^{(n-1)} \leq_L x_{O} ^{(n)}<_L x,$$ which is obviously a contradicts.
\end{enumerate} \end{proof}

The above result is generalized in the following theorem.

\begin{theorem}\label{teo51}
Let $O: L^2 \to L$ be an Archimedean quasi-overlap function. Then for all $x \in L \backslash \left\{0_L, 1_L \right\}$ it holds that $x_{O} ^{(n+1)} <_L x_{O} ^{(n)}$ or $x_{O} ^{(n+1)}\parallel x_{O} ^{(n)}$. \end{theorem}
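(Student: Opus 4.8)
The plan is to argue by contradiction, invoking the Archimedean hypothesis iteratively and then finishing with a pigeonhole argument, very much in the spirit of Corollary \ref{pigeon}. Suppose, for some $n\in\mathbb{N}^*$, that $x_O^{(n)}\leq_L x_O^{(n+1)}$; deriving a contradiction establishes that for every $n$ one has $x_O^{(n+1)}<_L x_O^{(n)}$ or $x_O^{(n+1)}\parallel x_O^{(n)}$. First I would record two elementary facts. (a) For every $m\in\mathbb{N}^*$ we have $x_O^{(m)}\in L\backslash\{0_L,1_L\}$: since $x\neq 0_L$, axiom \textbf{(OL2)} gives $x_O^{(m+1)}=O(x,x_O^{(m)})\neq 0_L$ whenever $x_O^{(m)}\neq 0_L$, and since $x\neq 1_L$, axiom \textbf{(OL3)} gives $x_O^{(m+1)}\neq 1_L$; induction on $m$ finishes. (b) If $x_O^{(p)}\leq_L x_O^{(p+1)}$ for some $p$, then $x_O^{(q)}\leq_L x_O^{(q+1)}$ for all $q\geq p$: applying the map $O(x,\cdot)$, which is non-decreasing by \textbf{(OL4)}, to $x_O^{(q)}\leq_L x_O^{(q+1)}$ yields $x_O^{(q+1)}\leq_L x_O^{(q+2)}$, so the property propagates; by transitivity, under our standing assumption $x_O^{(n)}\leq_L x_O^{(q)}$ for every $q\geq n$.

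With these in hand, the contradiction is obtained as follows. By the Archimedean property applied to the pair $(x,x_O^{(n)})$ — legitimate by fact (a) — there is $m_1\in\mathbb{N}^*$ with $x_O^{(m_1)}<_L x_O^{(n)}$; fact (b) forces $m_1<n$, since otherwise $x_O^{(m_1)}\geq_L x_O^{(n)}$. Applying the Archimedean property again, now to $(x,x_O^{(m_1)})$, gives $m_2\in\mathbb{N}^*$ with $x_O^{(m_2)}<_L x_O^{(m_1)}$, and since $x_O^{(m_2)}<_L x_O^{(m_1)}<_L x_O^{(n)}$ we again get $m_2<n$. Iterating produces an infinite sequence $m_1,m_2,m_3,\ldots$ of indices in $\{1,\ldots,n-1\}$ with $x_O^{(n)}>_L x_O^{(m_1)}>_L x_O^{(m_2)}>_L\cdots$ strictly decreasing; in particular the elements $x_O^{(m_i)}$ are pairwise distinct. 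But they all lie in the finite set $\{x_O^{(1)},\ldots,x_O^{(n-1)}\}$, which is absurd. (For $n=1$ the contradiction is already reached at the first step, since it would require $m_1<1$; thus this argument also re-derives Lemma \ref{basecase}.)

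The part that needs the most care is the bookkeeping rather than any individual inference: each invocation of the Archimedean property must keep the fixed base $x$, so that throughout it is the powers $x_O^{(\cdot)}$ of this same $x$ that are being compared — never powers of the auxiliary elements $x_O^{(m_i)}$ — and at every iteration one must check that the new index stays below $n$, which is exactly the role of fact (b). A more direct attack, by induction on $n$ with a trichotomy for the position of $x_O^{(n+1)}$ relative to $x_O^{(n)}$ in the manner of the proof of Lemma \ref{basecase}, is also possible but turns messy in the incomparability sub-cases; passing to the strictly descending chain and invoking finiteness (pigeonhole) is what keeps the argument short.
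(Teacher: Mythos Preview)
Your argument is correct and is genuinely different from the paper's. The paper proceeds by induction on $n$, using Lemma~\ref{basecase} as the base case and, in the inductive step, splitting into the two sub-cases of the induction hypothesis ($x_O^{(p+1)}<_L x_O^{(p)}$ versus $x_O^{(p+1)}\parallel x_O^{(p)}$); in each sub-case it derives a contradiction from the assumption $x_O^{(p+1)}\leq_L x_O^{(p+2)}$, invoking the Archimedean property once in the incomparability sub-case. Your route instead fixes an arbitrary offending $n$, propagates the inequality upward (your fact~(b)), and then repeatedly applies the Archimedean property with the \emph{same} base $x$ but varying targets $x_O^{(m_i)}$ to manufacture an infinite strictly descending chain inside the finite set $\{x_O^{(1)},\ldots,x_O^{(n-1)}\}$; the pigeonhole contradiction then finishes. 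This avoids the comparability case analysis entirely, makes Lemma~\ref{basecase} a special case rather than a prerequisite, and sidesteps a delicate point in the paper's first inductive sub-case (monotonicity alone yields only $x_O^{(p+2)}\leq_L x_O^{(p+1)}$, not the strict inequality written there, so equality must still be excluded). The paper's approach, on the other hand, keeps the Archimedean property to a single invocation per step and makes the inductive structure explicit.
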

\begin{proof}
The proof follows by induction on $n$. In fact, for $n=1$ by Lemma \ref{basecase} we have that $x_{O} ^{(2)}=O(x,x) <_L x = x_{O} ^{(1)}$ or $x_{O} ^{(2)} \parallel x_{O} ^{(1)}$. \\ 
Now, for a given $p\in \mathbb{N}^*$ assume as indution hypothesis that 
$$
\begin{array}{rcll}
x_{O} ^{(p+1)} <_L x_{O} ^{(p)} & \text{or} &  x_{O} ^{(p+1)}\parallel x_{O} ^{(p)} &  \text{(IH)}
\end{array} 
$$
 We shall prove that $x_{O} ^{(p+2)} <_L x_{O} ^{(p+1)}$ or $x_{O} ^{(p+2)}\parallel x_{O} ^{(p+1)}$. Indeed, suppose by absurd that $x_{O} ^{(p+1)} \leq_L x_{O} ^{(p+2)}$. Hence by (IH) if $x_{O} ^{(p+1)} <_L x_{O} ^{(p)}$ due to $O$ is non-decreasing then $x_{O} ^{(p+2)}= O\left(x_{O} ^{(p+1)},x\right) <_L O\left(x_{O} ^{(p)},x\right)=x_{O} ^{(p+1)}$ which is a contradiction with the assumption $x_{O} ^{(p+1)} \leq_L x_{O} ^{(p+2)}$. Otherwise suppose by (IH) we have $x_{O} ^{(p+1)}$ and $x_{O} ^{(p)}$  incomparable. Notice that there is $m\in \mathbb{N}^*$ such that $x_{O} ^{(m)} <_L x$, for all $x \in L \backslash \left\{0_L, 1_L \right\}$  since $O$ is Archimedean.  Thus applying $(p-1)$-times the overlap $O$ we get $x_{O} ^{(p+m-1)} <_L x_O ^{(p)}$. On the other hand by assumption $x_{O} ^{(p+1)} \leq_L x_{O} ^{(p+2)}$ we can get the chain $$x_{O} ^{(p+1)} \leq_L x_{O} ^{(p+2)} \leq_L x_{O} ^{(p+3)}\leq_L \ldots \leq_L x_{O} ^{(p+m-1)}\leq_L \ldots \leq_L 1_L.$$ and hence  $x_{O} ^{(p+1)} \leq_L  x_{O} ^{(p+m-1)} <_L x_O ^{(p)}$ which is a contradiction with assumption $x_{O} ^{(p+1)} \parallel x_{O} ^{(p)}$. Therefore, we must have $x_{O} ^{(p+2)} <_L x_{O} ^{(p+1)}$ or $x_{O} ^{(p+2)}\parallel x_{O} ^{(p+1)}$.\end{proof}



\begin{lemma}\label{6.2}
An Archimedean quasi-overlap function has only trivial idempotent elements.
\end{lemma}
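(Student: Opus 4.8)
The plan is to argue by contradiction. Recall that $0_L$ and $1_L$ are the trivial idempotent elements of any quasi-overlap function, so it suffices to show that no element of $L \setminus \{0_L, 1_L\}$ can be idempotent for an Archimedean $O$. Suppose then that $a \in L \setminus \{0_L, 1_L\}$ satisfies $O(a,a) = a$.

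First I would establish by a routine induction on $n$ that $a_O ^{(n)} = a$ for every $n \in \mathbb{N}^*$: the case $n = 1$ is the defining identity $a_O ^{(1)} = a$, and if $a_O ^{(n)} = a$ then, using Equation (\ref{expoente}) and the idempotency hypothesis, $a_O ^{(n+1)} = O(a, a_O ^{(n)}) = O(a,a) = a$.

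Next I would invoke the Archimedean property with $x = y = a$, which is legitimate precisely because $a \in L \setminus \{0_L, 1_L\}$: there must exist some $n \in \mathbb{N}^*$ with $a_O ^{(n)} <_L a$. Combining this with $a_O ^{(n)} = a$ from the previous step yields $a <_L a$, which is absurd. Hence no nontrivial idempotent element exists. Alternatively, the same contradiction is immediate from Theorem \ref{teo51}: idempotency forces $a_O ^{(2)} = a = a_O ^{(1)}$, whereas that theorem requires $a_O ^{(2)} <_L a_O ^{(1)}$ or $a_O ^{(2)} \parallel a_O ^{(1)}$, and equality excludes both.

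There is essentially no obstacle here; the only point meriting care is verifying that the hypotheses of the Archimedean definition are satisfied, i.e. that the putative idempotent $a$ is neither $0_L$ nor $1_L$, which is exactly what is assumed for the sake of contradiction.
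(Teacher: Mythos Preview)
Your proof is correct and follows essentially the same line as the paper's: assume a nontrivial idempotent $a$, observe that $a_O^{(n)}=a$ for all $n$, and contradict the Archimedean property. The only minor variation is that the paper applies the Archimedean condition with some $y <_L a$ in $L\setminus\{0_L,1_L\}$ (so that $a_O^{(n)}=a >_L y$ for all $n$), whereas you take $y=a$ directly and obtain $a <_L a$; your choice is in fact slightly cleaner, since it sidesteps the tacit assumption that such a strictly smaller $y$ exists.
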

\begin{proof}
Suppose there exists an  idempotent element  $x \in L\backslash\{ 0_L, 1_L\}$ of  a $L$-overlap function $O$. In this case, notice that $x_O ^{(2)}=O(x,x)=x$, $x_O ^{(3)}=O(x,x_O ^{(2)})=O(x,x)=x$ and hence $x_O ^{(n)}=x$ for all $1 < n \in \mathbb{N}$. Then for every $y \in L\backslash \{0_L, 1_L\}$ such that $x >_L y$ it holds that $x_O ^{(n)}=x >_L y$ for all $1 < n \in \mathbb{N}$ which is a contradiction with the fact that $O$ is an Archimedean quasi-overlap function. Therefore $O$ has only trivial idempotent elements.\end{proof}

\begin{definition}
An overlap function $O: L^2 \to L$ has the \textit{limiting} property  if  $\lim\limits_{n \to \infty} x_O ^{(n)}=0_L$ for all $x \in L\backslash \{0_L, 1_L\}$.
\end{definition}
\begin{theorem}\label{t6.1}
Let $O:L^2 \to L$ be an overlap function and consider the following statements:
\begin{enumerate}
	\item [(i)] $O$ satisfies limiting property;
	\item [(ii)] $O$ is Archimedean;
	\item [(iii)] $O$ has only trivial idempotent elements and there exists $b\in L\backslash\{0_L,1_L\}$ such that $O(b,b)=a$ whenever $\lim\limits_{x \to a^+}O(x,x)=a$ for some $a \in L\backslash\{0_L,1_L\}$. 
\end{enumerate} Then it holds that $(i) \Rightarrow (ii)$, $(ii) \Rightarrow (iii)$ and $(i) \Rightarrow (iii)$.
\end{theorem}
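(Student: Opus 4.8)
The plan is to prove the three implications in the stated order, reusing the lemmas just established.

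\emph{For $(i) \Rightarrow (ii)$:} Assume $O$ has the limiting property, so $\lim_{n\to\infty} x_O^{(n)} = 0_L$ for every $x \in L\backslash\{0_L,1_L\}$. Fix $x,y \in L\backslash\{0_L,1_L\}$; I must produce $n \in \mathbb{N}^*$ with $x_O^{(n)} <_L y$. By Definition \ref{limitelateral} applied to the net $(x_O^{(n)})_{n\in\mathbb{N}}$, the statement $\lim_{n\to\infty} x_O^{(n)} = 0_L$ means $0_L$ is an eventual upper bound of this net, i.e.\ there is $k$ with $x_O^{(n)} \leq_L 0_L$, hence $x_O^{(n)} = 0_L$, for all $n \geq k$. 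Since $y \neq 0_L$ we get $x_O^{(n)} = 0_L <_L y$ for all $n \geq k$, so $O$ is Archimedean. (If the intended reading of the limiting property is the genuinely order-theoretic infimum rather than eventual attainment, I would instead argue: the filtered net eventually drops below any $y$ that is not a lower bound of all tails; since $0_L <_L y$ and the infimum of the tails is $0_L$, some tail element must be $<_L y$ or incomparable, and a short case analysis as in Lemma \ref{basecase} rules out the incomparable alternative persisting, giving $x_O^{(n)} <_L y$.)

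\emph{For $(ii) \Rightarrow (iii)$:} Assume $O$ is Archimedean. The first half of (iii), that $O$ has only trivial idempotent elements, is exactly Lemma \ref{6.2}. For the second half, suppose $\lim_{x\to a^+} O(x,x) = a$ for some $a \in L\backslash\{0_L,1_L\}$; I must exhibit $b \in L\backslash\{0_L,1_L\}$ with $O(b,b) = a$. Here I would invoke the Proposition preceding Lemma \ref{6.2} (the one stating that if $\lim_{x\to a^+}O(x,x)=a$ then $a$ is an idempotent element of $O$): it yields $O(a,a) = a$, so $b := a$ works, and $a \neq 0_L,1_L$ by hypothesis. Thus (iii) holds.

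\emph{For $(i) \Rightarrow (iii)$:} This is immediate by composing the two implications already proved, $(i)\Rightarrow(ii)$ and $(ii)\Rightarrow(iii)$; no further work is needed.

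\emph{Main obstacle.} The routine parts are the implication $(ii)\Rightarrow(iii)$ (it is essentially a repackaging of Lemma \ref{6.2} and the idempotency proposition) and $(i)\Rightarrow(iii)$ (pure transitivity). The delicate point is $(i)\Rightarrow(ii)$: everything hinges on what ``$\lim_{n\to\infty} x_O^{(n)} = 0_L$'' is taken to mean in this lattice setting. Under Definition \ref{limitelateral}'s convention for nets $(a_n)_{n\in\mathbb{N}}$ it literally forces the tail to equal $0_L$, so the argument is a one-liner; but one must be careful that incomparability with $y$ cannot occur once a tail element equals $0_L$, which is trivial since $0_L \leq_L y$ always. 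If instead one wants the purely order-theoretic infimum interpretation, the argument needs the same incomparability-elimination technique used in Lemma \ref{basecase}, and that is where I would spend the care.
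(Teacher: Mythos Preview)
Your treatment of $(i)\Rightarrow(ii)$ and $(i)\Rightarrow(iii)$ matches the paper's: the paper is equally terse on $(i)\Rightarrow(ii)$, simply reading off the Archimedean inequality from the limiting property, and it dismisses $(i)\Rightarrow(iii)$ as ``straightforward.'' Your extra care about what Definition~\ref{limitelateral} actually forces is more than the paper provides.

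For $(ii)\Rightarrow(iii)$ you take a genuinely different and much shorter route. The paper proves the second conjunct of (iii) by contradiction: it assumes $\lim_{x\to a^+}O(x,x)=a$ together with $O(y,y)>_L a$ or $O(y,y)\parallel a$ for every $y\in L\backslash\{0_L,1_L\}$, and then runs a two-case analysis (comparable versus incomparable pairs $y_1,y_2$) invoking Theorem~\ref{teo51} to contradict the Archimedean property. You instead invoke the idempotency Proposition directly to get $O(a,a)=a$ and set $b:=a$. This is correct, and in fact your argument reveals that the second conjunct of (iii) is \emph{vacuous} under (ii): combining $O(a,a)=a$ with Lemma~\ref{6.2} shows no such $a\in L\backslash\{0_L,1_L\}$ can exist. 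You might say this explicitly rather than ``$b:=a$ works,'' since as written your witness $b$ is a nontrivial idempotent, which on its face conflicts with the first conjunct you just established --- the resolution being precisely that the antecedent is impossible. The paper's longer argument avoids leaning on that Proposition but reaches the same endpoint with considerably more effort.
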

\begin{proof}
$(i) \Rightarrow (ii):$ If $O$ satisfies the limiting property then for all  $x \in L\backslash\{0_L,1_L\}$ it holds that $\lim\limits_{n \to \infty} x_O ^{(n)}=0_L$. Therefore for all  $x \in L\backslash\{0_L,1_L\}$ there exists $n \in \mathbb{N}^*$ such that $x_O ^{(n)}<_Ly$.\\
$(ii) \Rightarrow (iii):$ If $O$ is Archimedean then by Lemma \ref{6.2} it has only trivial idempotent elements. Now, consider that $\lim\limits_{x \to a^+}O(x,x)=a$ for some $a \in L\backslash \{0_L,1_L\}$ and $O(y,y) >_L a$ or $O(y,y) \parallel a$ for all $y \in L\backslash \{0_L,1_L\}$. Then for all $y_1, y_2 \in  L\backslash \{0_L,1_L\}$ we have two possibilities:\begin{enumerate}
\item [(1)] $y_1$ and $y_2$ are comparable. In this case, we can assume without loss of generality that $y_1 \leq_L y_2$ and hence $O(y_1,y_2)\geq_L O(y_1,y_1)>_L a$ or $O(y_1,y_2)\geq_L O(y_1,y_1)$ but $O(y_1,y_2) \parallel a$. Thus for all $y \in L\backslash \{0_L,1_L\}$ it holds that $y_O ^{(2)}=O(y,y)>_L a$ or $y_O ^{(2)} \parallel a$. Now assume that $y_O ^{(n)}>_L a$ or $y_O ^{(n)} \parallel a$ for some $1 < n \in \mathbb{N}$. Then, since that $y_O ^{(n)} <_L y$ or $y_O ^{(n)} \parallel y$ (by Theorem \ref{teo51}), it holds that $y_O ^{(n+1)}=O(y,y_O ^{(n)})\geq_L O(y_O ^{(n)},y_O ^{(n)})>_L a$ or $y_O ^{(n+1)} \parallel a$. Therefore for all $n \in \mathbb{N}^*$ we can conclude taht $y_O ^{(n)}>_L a$ or $y_O ^{(n)} \parallel a$  which contradicts the fact of $O$ be Archimedean.
\item [(2)]  $y_1\parallel y_2$. In this case we also have two possibilities. The first one is the case where $O(y_2,y_2)$ and $O(y_1,y_1)$ are comparables. Then, the proof is analogous to case (1). The second one is the case where $O(y_1,y_2) \parallel O(y_1,y_1)$.  In this case we shall prove that  $O(y_1,y_2) >_L a$ or $O(y_1,y_2) \parallel a$.  In fact, if  $O(y_1,y_2) \leq_L a$ holds then by Definition \ref{limitelateral} there are nets $(q_r)_{r\in \mathbb{N}}$ and $(q_s)_{s\in\mathbb{N}}$ in $L$ that converge  for $y_1$ and $y_2$  respectively, since that $a$ is eventual lower bounded. Then, exists $k > \max \{s_0,r_0\}$ such that $O(q_k,q_k)\leq_L q_k$. Since that $q_k \in L\backslash\{0_L,1_L\}$ and $O$ has only trivial idempotent elements, it follows by Theorem \ref{teo51} that $O(q_k,q_k)<_L q_k$ implies $a=\lim\limits_{k \to \infty}O(q_k,q_k) <_L \lim\limits_{k \to \infty}q_k=a$ or  $a=\lim\limits_{k \to \infty}O(q_k,q_k)\parallel \lim\limits_{k \to \infty}q_k=a$, which is a contradiction. 
\end{enumerate} Therefore always there exists $b \in L\backslash\{0_L,1_L\}$ such that $O(b,b)=a$, for some $a \in L\backslash\{0_L,1_L\}$. \\ 
\noindent
$(i) \Rightarrow (iii):$ Straightforward. \end{proof}

\section{Final remarks}
In this article, we presented the concept of lattice-valued overlap functions making a wide discussion about the main properties of that operators in order to investigate its potentialities. We also propose the definition of quasi-overlap functions, in the case where the continuity of overlap functions is not indispensable. 

The results showed that in most cases, the properties are naturally to the scope of the lattices and are preserved. It is worth highlighting the property of homogeneity that can be extended by using the structure provided by the families of weight functions concept.

Other properties that deserve attention were discussed in detail in the Proposition \ref{prop3.1} as well as  in the Theorems \ref{const_overlap} and \ref{teo4.6}, where the concepts of divisible quasi-overlap and divisible t-norm on a bounded lattice $L$ were used to replace the known intermediate value theorem (note that these concepts coincide only when $L$ is a chain). Moreover, the additional hypothesis of t-norm being \linebreak($\wedge$ and $\vee$)-distributive on a bounded lattice $L$ can be replaced by any residuated lattice $\langle L, \wedge, \vee, \otimes, \Rightarrow, 0_L,1_L\rangle$.

It is also worth noting that, unlike the overlap functions on the unit interval with the standard linear order, strictly monotonic and cancellation properties do not are equivalents for $L$-overlap functions when $L$ is not a chain. However, if $L$ is an any bounded lattice, a quasi-overlap function strict $O$ is also a cancellative quasi-overlap function when we add the hypothesis $O (x, y \vee z) = O (x, y) \vee O (x, z)$, for all $x,y,z \in L$. In other words, a strict quasi-overlap is an cancellative quasi-overlap if, and only if, the structure $\langle L, \leq_L, O, 1_L \rangle$ is an integral commutative groupoid with neutral element $1_L$ and satisfying $O (x, y \vee z) = O (x, y) \vee O (x, z)$, for all $x,y,z \in L$.

As future works, obviously this paper can be continued in several ways, but some of them seem to us of immediate interest. On the one
hand, we can search for alternative characterizations for $L$-overlap functions, specifically designed we want to deepen the respect of some classes of overlapping functions, besides their characterization via homomorphisms, as well as the investigation of interval-valued of quasi-overlap and overlap functions. And on the other hand, we can explore additional properties involving the residuation of $L$-overlap functions.

\section*{References}
\bibliographystyle{elsarticle-num} 
\bibliography{bib}

%
%
%
\end{document}